\documentclass[10pt,reqno]{amsart}
\usepackage{textcomp}
\usepackage{graphicx} 
\usepackage{float}
\usepackage{listings}
\usepackage{fancyvrb}
\usepackage{fvextra}
\usepackage{amssymb}

\usepackage[a4paper, left=30mm,right=30mm,top=30mm,bottom=30mm,marginpar=25mm]{geometry}
\usepackage[english]{babel}
\usepackage{setspace}
\usepackage{amsmath}
\usepackage{mathtools}
\usepackage{hyperref}
\usepackage[normalem]{ulem}
\usepackage{amsfonts}
\usepackage{amssymb}
\usepackage{fancyhdr}
\usepackage{color}

\newtheorem{theorem}{Theorem}
\newtheorem{lemma}[theorem]{Lemma}
\newtheorem{proposition}[theorem]{Proposition}
\newtheorem{remark}[theorem]{Remark}
\newtheorem{remarks}[theorem]{Remarks}
\newtheorem{corollary}[theorem]{Corollary}
\newcommand{\ds}{\displaystyle}

\newcommand{\R}{\mathbb R}
\newcommand{\bnu}{\boldsymbol{\nu}}

\newcommand{\mS}{\mathcal S}
\newcommand{\cof}{\hbox{cof}\,}

\newcommand{\bR}{\boldsymbol{R}}
\newcommand{\bP}{\boldsymbol{P}}
\newcommand{\bQ}{\boldsymbol{Q}}
\newcommand{\bA}{\boldsymbol{A}}
\newcommand{\bX}{\boldsymbol{X}}
\newcommand{\bB}{\boldsymbol{B}}
\newcommand{\bb}{\boldsymbol{b}}
\newcommand{\bc}{\boldsymbol{c}}
\newcommand{\bU}{\boldsymbol{U}}
\newcommand{\bF}{\boldsymbol{F}}
\newcommand{\bG}{\boldsymbol{G}}
\newcommand{\bD}{\boldsymbol{D}}

\newcommand{\be}{\boldsymbol{e}}
\newcommand{\bE}{\boldsymbol{E}}
\newcommand{\bm}{\boldsymbol{m}}
\newcommand{\bM}{\boldsymbol{M}}
\newcommand{\bcM}{\boldsymbol{\mathcal{M}}}

\newcommand{\dist}{{\rm dist}\,}

\newcommand{\bu}{\boldsymbol{u}}
\newcommand{\bp}{\boldsymbol{p}}
\newcommand{\bN}{\boldsymbol{N}}
\newcommand{\bK}{\boldsymbol{K}}
\newcommand{\bV}{\boldsymbol{V}}

\newcommand{\bw}{\boldsymbol{w}}
\newcommand{\bx}{\boldsymbol{x}}
\newcommand{\by}{\boldsymbol{y}}
\newcommand{\bz}{\boldsymbol{z}}
\newcommand{\bn}{\boldsymbol{n}}
\newcommand{\ba}{\boldsymbol{a}} 

\newcommand{\bC}{\boldsymbol{C}}
\newcommand{\bS}{\boldsymbol{S}}
\newcommand{\bI}{\boldsymbol{1}}
\newcommand{\bo}{\boldsymbol{0}}
\newcommand{\bal}{\boldsymbol{\alpha}}
\newcommand{\bet}{\boldsymbol{\beta}}
\newcommand{\tr}{\mathrm{tr}\,}
\newcommand{\di}{\mathrm{diag}\,}

\newcommand{\ep}{\varepsilon}
\newcommand{\diag}{{\rm diag}\,}
\newcommand{\diam}{{\rm diam}\,}
\newcommand{\supp}{{\rm supp}\,}
\newcommand{\rank}{{\rm rank}\,}
\newcommand{\om}{\Omega}
\newcommand{\bzero}{\mathbf 0}
\newcommand{\weakstar}{\stackrel{*}{\rightharpoonup}}
\newcommand{\av}{-\hspace{-.15in}\int}

\definecolor{listinggray}{gray}{0.9}
\definecolor{lbcolor}{rgb}{0.9,0.9,0.9}

\allowdisplaybreaks
\begin{document}
\title[Compatibility in Polycrystals]{Compatibility of Martensitic Microstructures in Polycrystals}

\author[J.M. Ball]{John M. Ball}
\address[John M. Ball]{\newline
Department of Mathematics, Heriot-Watt University\\ 
Edinburgh, Scotland, UK}
\email[]{\href{John.Ball@hw.ac.uk}{John.Ball@hw.ac.uk}}

\author[M. Galanopoulou]{Myrto Galanopoulou}
\address[Myrto Galanopoulou]{\newline Department of Mathematics, University of Sussex\\
Brighton, UK}
\email[]{\href{m.m.galanopoulou@sussex.ac.uk}{m.m.galanopoulou@sussex.ac.uk}, \href{myrtomaria.galanopoulou@alumni.kaust.edu.sa}{myrtomaria.galanopoulou@alumni.kaust.edu.sa}}

\begin{abstract}
The paper studies martensitic microstructures in polycrystals, focussing on their compatibility across grain boundaries. After a reduction to the case of a planar grain boundary, the case when the grain boundary separates two constant gradients of zero energy is considered. It is shown that for cubic-to tetragonal transformations such a configuration  can occur when the relative grain rotation is not in the cubic group. Then the case when the grain boundary separates two simple laminates of zero energy is considered, it being shown using a computer-assisted symbolic calculation that in the cubic-to-tetragonal  case  compatibility is only possible for a closed set of measure zero in the manifold of grain boundary normals and relative grain rotations, and that a similar slightly weaker result holds for cubic-to-orthorhombic transformations. The results suggest why higher-order laminates are often observed for such transformations.

The Taylor set of deformation gradients is defined and studied, this set having the property that any deformation whose gradient belongs to it corresponds to a zero-energy microstructure for the polycrystal independent of grain geometry and grain rotations. New upper bounds for the Taylor set are proved for cubic-to-tetragonal and cubic-to-orthorhombic transformations, generalizing those of Bhattacharya \& Kohn \cite{bhattkohn96,bhattkohn97} using the geometrically linearized theory. We give a simple proof of a related result of Peigney \cite{Peigney2013} characterizing the positive diagonal matrices in $K^{\rm qc}$ for cubic-to-tetragonal transformations.
\vspace{0.1cm}
		
\noindent \textsc{Keywords}: Polycrystals, martensitic phase transformations, compatibility, grain boundaries, Taylor set.\vspace{0.1cm}\\
\noindent\textsc{MSC2020 subclasses}: 74N15, 74B20
\end{abstract}
	
\maketitle

\centerline{\it Dedicated to Robin Knops with gratitude and admiration}
\section{Introduction}
In this paper we study microstructure arising from martensitic phase transformations in polycrystals, focussing on the compatibility of such microstructures across grain boundaries.  Martensitic phase transformations are diffusionless solid-solid phase transformations in which  the underlying crystal lattice of an alloy changes shape at a critical temperature $\theta_c$, with the high temperature austenite phase typically having greater symmetry than  the low temperature martensite phase. The different grains of the polycrystal  correspond to different orientations of the crystal lattice. We concentrate on the case when the austenite has cubic symmetry.

In a single crystal the microstructure resulting from a martensitic phase transformation results from the requirement of compatibility between the different symmetry related variants of the martensite, and from the nucleation process, during which the martensite must be compatible with the austenite. However we will not be concerned with the nucleation process, and will assume that the temperature is below $\theta_c$, so that no austenite is present. 

In a polycrystal the microstructure in each grain must also be compatible across its boundary with the microstructure in neighbouring grains. This can lead to deformation of the grain boundaries, resulting in correlation of microstructural features in different  grains, as  seen, for example, in the study of Arlt \cite{Arlt90} of ${\rm BaTIO_3}$ in which laminates propagate from one grain to another.

For a single crystal occupying the bounded domain $\om\subset\R^3$ in the reference configuration the nonlinear elasticity model (see \cite{j32,j40,03bhattacharya}) is based on the free energy functional
$$I(\by,\theta)=\int_\om \psi(D\by(\bx),\theta)\,d\bx,$$
where $\by:\om\to \R^3$ is the deformation. For a fixed temperature $\theta<\theta_c$ the set $K=K(\theta)$ of deformation gradients of minimum energy (without loss of generality taken to be zero) has the form
$$K=\bigcup_{i=1}^NSO(3)\bU_i,$$
where the $\bU_i=\bU_i^T>\bzero$ are the $N$ variants of martensite. For a polycrystal with $\om={\rm int}\,\bigcup_{j=1}^M\bar\om_j,$ having disjoint grains $\om_j$,  the corresponding free energy is given by
$$I(\by)=\int_\om\psi(D\by(\bx)\bR(\bx))\,d\bx,$$
where the grain rotation $\bR(\bx)=\bR_j\in SO(3)$ for $\bx\in\om_j$.

We first study the local problem of compatibility at zero energy across a single grain boundary separating two grains, the second having cubic axes rotated with respect to the first. Locally the grain boundary is described by its normal $\bm$ and the relative rotation $\bR\in SO(3)$ of the cubic axes in the second grain with respect to the first. By blowing up around a point on the grain boundary the problem of compatibility can be reduced under a regularity assumption to that of a planar grain boundary having normal $\bm$. The simplest case is that of zero-order laminates, that is when  the planar grain boundary separates constant gradients $\bA,\bB \in K$. There are then two subcases (i) when $\bA=\bB$, (ii) when $\bA\neq\bB$. We show that case (i) can occur for relative rotations $\bR$ not in the cubic group $P^{24}$ when two of the eigenvalues of the $\bU_i$ are equal, as for cubic-to-tetragonal transformations, but that this cannot happen for cubic-to-orthorhombic transformations. For case (ii) we give a criterion for compatibility (Theorem \ref{Theorem1}) in terms of the axis-angle representation of an orthogonal matrix related to the relative rotation; in particular this implies  (Corollary \ref{Cor2}) that when the eigenvalues of the $\bU_i$ are distinct compatibility fails for a dense open set of relative rotations and any $\bm$.

We then investigate whether compatibility can be achieved at zero energy  with a simple laminate on each side of the grain boundary. We show (Theorem \ref{compatthm}) that for cubic-to-tetragonal transformations with any given deformation parameters $\eta_1>0, \eta_2>0, \eta_1\neq\eta_2$ the set of $(\bm,\bR)\in M:=S^2\times SO(3)$ for which compatibility is possible is closed and of measure zero. Essentially this is a consequence of the condition for compatibility being an equation in $\R^{3\times 3}$, equivalently nine scalar equations, which has to be solved for eight unknowns (two volume fractions, a rotation and an amplitude vector). The condition for this to be possible is the vanishing of a product $\mathcal{F}$ of Dixon resultants formed from triple scalar products of vectors. To show that $\mathcal F$ is not identically zero we use computer assisted symbolic computation, and the result then follows from a general property of real analytic functions. For cubic-to-orthorhombic transformations we prove the slightly weaker result that for a.e. set of deformation parameters $(\alpha,\beta,\gamma)$ the set of  $(\bm,\bR)\in M:=S^2\times SO(3)$ for which compatibility is possible is closed and of measure zero. We conjecture (see Section \ref{3.4}) that, for example, compatibility can be achieved for generic $(\bm,\bR)$ with a simple laminate on one side of a planar grain boundary and a double laminate on the other side.

In order to give some information on possible zero-energy microstructures for polycrystals we investigate the Taylor set
\[\mathcal{S}(K):=\bigcap\limits_{\bR\in SO(3)} K^{\rm qc}\bR.\]
A related set was defined in a geometrically linearized setting by Bhattacharya \& Kohn \cite{bhattkohn96,bhattkohn97} and called by them the Taylor estimate on the recoverable strain.  Here $K^{\rm qc}$ denotes the quasiconvexification of $K$, that is the set of macroscopic deformation gradients corresponding to zero-energy microstructures. Any deformation with $D\by(\bx)\in\mathcal{S}(K)$ a.e. in $\om$ corresponds to a zero-energy microstructure {\it independent of grain geometry and grain rotations}. For cubic-to-tetragonal transformations Dolzmann \& Kirchheim \cite{DoKir} showed that
 $\mathcal{S}(K)$ contains a relatively open neighbourhood of the dilatation $(\eta_1^2\eta_2)^{1/3}\bI$ in $\{\bA\in \R^{3\times 3}:\det \bA=\eta_1^2\eta_2\}$. As a result there are nontrivial zero energy polycrystalline microstructures in any polycrystal for cubic-to-tetragonal transformations, and more generally whenever $N>1$ (see Section \ref{defprops}).

Since when $N\geq 3$ there is no known characterization of  $K^{\rm qc}$, it is not in general obvious how to calculate $\mS(K)$, which (unlike $K^{\rm qc}$) is isotropic. This isotropy implies that $\mS(K)=SO(3)\Delta(K)SO(3)$, where $\Delta(K)$ is the set of positive diagonal matrices $\bD=\diag(v_1,v_2,v_3)$ in $\mS(K)$.  However for the case of  two energy wells (which cannot occur for martensitic phase transformations with cubic austenite)
$$K=SO(3)\bU_1\cup SO(3)\bU_2,$$
with $\bU_1=\diag(\eta_1,\eta_2,\eta_3)$, $\bU_2=\diag(\eta_2,\eta_1,\eta_3)$ and  $\eta_2>\eta_1>0, \eta_3>0$ we can use the characterization of $K^{\rm qc}$ in \cite{j40},  so that $\mS(K)$ can be calculated. We do this in Theorem \ref{2DSK} together with a related characterization (Theorem \ref{2DSK1}) for the case when the grain rotations have the common axis $\be_3$.

In  the cubic-to-tetragonal case we prove, using the minors relations and properties of $K^{\rm c}$ and $(\cof K)^{\rm c}$, new upper bounds for $\mS(K)$, in particular showing (Theorem \ref{better}) that  $\bD\in\Delta(K)$ implies $v_1v_2v_3=\eta_1^2\eta_2$ and
\begin{equation}
\label{Dbounds}
\frac{\sqrt 3\,\eta_1\eta_2}{\sqrt{\eta_1^2+2\eta_2^2}}\leq v_i\leq\frac{\sqrt{\eta_2^2+2\eta_1^2}}{\sqrt 3},
\end{equation}
with a similar result (Theorem \ref{betteror}) for cubic-to-orthorhombic transformations.
 An interesting  result of Peigney \cite{Peigney2013} characterizing the positive diagonal matrices in $K^{\rm qc}$ for cubic-to-tetragonal transformations can also be used to give a bound on $\mS(K)$, but this turns out to be weaker than \eqref{Dbounds}. Nevertheless we state Peigney’s result for its independent interest as Theorem \ref{peigney} and give a simple proof.

  We organise the paper as follows:  In Section \ref{martensite} we review the nonlinear elasticity model of martensitic phase transformations as applied to single crystals and polycrystals. In Section \ref{3} we discuss compatibility of zero-energy microstructures across a grain boundary, first giving conditions under which the problem can be reduced to that for a planar grain boundary. After this reduction we discuss the case of compatibility between exact gradients $\bA,\bB$ on either side, and then using a symbolic computation treat the case of compatibility of simple laminates on either side as described above  (the related code being given in the Appendix). In Section \ref{Taylor} we describe the Taylor set and its properties, explaining the lower bound in \cite{DoKir}, proving the new upper bounds, and giving an independent proof of the result of Peigney \cite{Peigney2013}. The paper ends with a discussion of the implications of the results, related work and possible new directions.

\section {Martensitic transformations}
\label{martensite}
\subsection{Single crystals}
\label{singlecrystal}
For a single crystal and in the absence of applied forces, the nonlinear elasticity model of martensitic  transformations is based on a free energy functional \cite{j32,j40}
\begin{equation*}
I(\by,\theta)=\int_{\Omega}\psi(D\by(\bx),\theta)\:d\bx,
\end{equation*}
where the crystal occupies the bounded Lipschitz domain  $\Omega\subset\R^3$ in the reference configuration, $\by:\Omega\to\mathbb{R}^3$ is the deformation of the crystal and $\theta$ denotes the temperature. The free energy density $\psi(\bA,\theta)$ is defined for $\bA$ in  the set $GL^+(3)$, where $GL^+(n):=\{\bA\in\R^{n\times n}: \det\bA>0\}$. We denote by $K(\theta)$ the set of deformation gradients $\bA$ that minimize $\psi$ at temperature $\theta$
\begin{equation}
\label{K}
K(\theta):=\{\bA\in GL^+(3): \psi(\bA,\theta)=\min_{\bB\in GL^+(3)}\psi(\bB,\theta)\}.
\end{equation}
By adding a function of $\theta$ to $\psi$ we can and will without loss of generality  assume that $\psi(\bA,\theta)=0$ for $\bA\in K(\theta)$.  We suppose that the crystal undergoes a martensitic transformation at temperature $\theta_c$ from a cubic austenite phase for $\theta\geq\theta_c$, and take the reference configuration to be the undistorted cubic phase at $\theta=\theta_c$. For $\theta\leq \theta_c$ the change of shape of the crystal lattice is given by $\bU(\theta)=\bU(\theta)^T>\bo$, and
we can write $K(\theta)$ as the finite union of  energy wells \cite{ericksen80}:
\begin{equation}
\label{Kunion}
K(\theta)=\bigcup^N_{i=1}SO(3)\bU_i(\theta),
\end{equation}
where the martensitic variants $\bU_i(\theta)$ are  the distinct matrices $\bQ\bU(\theta)\bQ^T$ for $\bQ\in P^{24}$, and  $P^{24}$ is the  point group of the austenite consisting of the 24 rotations mapping the cube $(-1,1)^3$ to itself. The number $N$ of the martensitic variants and the corresponding transformation strains $\bU_i(\theta)$ depend on the change of symmetry for the alloy under consideration; for example, for cubic-to-tetragonal transformations (e.g. MnCu, MnNi) $N=3$,  for cubic-to-orthorhombic transformations  (e.g. CuAlNi) $N=6$ and for cubic-to-monoclinic transformations (e.g. NiTi) $N=12$. 
{\it For the rest of our discussion we fix the temperature $\theta<\theta_c$, and write $\bU_i(\theta)=\bU_i$ and $K(\theta)=K$.} 

Planar zero-energy martensitic interfaces  correspond to rank-one connections in $K$, namely to distinct matrices $\bA,\bB\in K$ which satisfy the  Hadamard jump condition 
\begin{equation}
\label{Had}
\bA-\bB=\ba\otimes\bn,
\end{equation}
for some nonzero $\ba$ and for $\bn$ the unit normal to the interface.

The existence of rank-one connections in $K$ follows from the following results.  
\begin{theorem} \cite{j48,j32,ericksen85,gurtin83}
\label{r1}
Let $\bU=\bU^T>0,$ $\bV=\bV^T>0.$ The energy wells $SO(3)\bU$ and $SO(3)\bV$ are rank-one connected iff
\[\bU^2-\bV^2=c(\bn\otimes\tilde\bn+\tilde\bn\otimes\bn),\:\:\text{for unit vectors $\bn,\tilde\bn$ and } c\neq0\]
If $\bn\nparallel \tilde\bn$ there are exactly two rank-one connections between $\bV$ and $SO(3)\bU:$
\[\bR \bU=\bV+\ba\otimes\bn,\: \tilde{\bR}\bU=\bV+\tilde{\ba}\otimes\tilde\bn,\quad \bR,\tilde{\bR}\in SO(3),\: \ba,\tilde{\ba}\in\mathbb{R}^3.\]
\end{theorem}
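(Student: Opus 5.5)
The plan is to reduce the existence of a rank-one connection between $SO(3)\bU$ and $SO(3)\bV$ to the twinning-type equation $\bR\bU=\bV+\ba\otimes\bn$ with $\bR\in SO(3)$, and to analyse it through the Cauchy--Green tensors. Multiplying by $\bV^{-1}$ on the right gives $\bR\bU\bV^{-1}=\bI+\bb\otimes\bm$ with $\bb=\ba$ and $\bm=\bV^{-T}\bn=\bV^{-1}\bn$ (up to normalization). Set $\bC:=\bV^{-1}\bU^2\bV^{-1}$. Then the problem becomes whether $\bF=\bI+\bb\otimes\bm$ satisfies $\bF^T\bF=\bC$ for some $\bb,\bm$. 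This is exactly Ball--James' Proposition 4 setting: a rotation $\bR$ exists iff $\bC=(\bI+\bm\otimes\bb)(\bI+\bb\otimes\bm)$.

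\textbf{Step 1: reduce to an eigenvalue condition on $\bC$.} First I would use the standard fact that $\bC\neq\bI$ equals $\bF^T\bF$ with $\bF-\bI$ of rank one iff the eigenvalues of $\bC$ satisfy $\lambda_1\le 1=\lambda_2\le\lambda_3$ with $\lambda_1\ne\lambda_3$. The reason is as follows. For $\bF=\bI+\bb\otimes\bm$, any vector $\bw$ orthogonal to both $\bm$ and $\bF^{T}$-related directions is fixed, so $1$ is an eigenvalue, and $\det\bC=(1+\bb\cdot\bm)^2$ together with the trace determines the remaining two, which straddle $1$.

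\textbf{Step 2: show the condition is equivalent to the stated one.} Next I would express $\lambda_2(\bC)=1$ with $\lambda_1<1<\lambda_3$ (or the degenerate equalities) in terms of $\bU^2-\bV^2=\bV(\bC-\bI)\bV$. The matrix $\bC-\bI$ has eigenvalues $\lambda_1-1\le0\le\lambda_3-1$ with a zero eigenvalue. So $\bC-\bI$ is symmetric of rank $\le2$ and indefinite, hence of the form $c(\bp\otimes\bq+\bq\otimes\bp)$. By Sylvester's law of inertia the congruence by $\bV$ preserves both rank and signature. Hence $\bU^2-\bV^2$ is symmetric, nonzero, of rank $\le2$ and not semidefinite-definite of rank two. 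This is precisely the form $c(\bn\otimes\tilde\bn+\tilde\bn\otimes\bn)$ with unit vectors: a symmetric matrix with eigenvalues $\mu_1\le0\le\mu_3$ and a zero eigenvalue equals $c(\bn\otimes\tilde\bn+\tilde\bn\otimes\bn)$, taking $\bn,\tilde\bn$ as normalized combinations $\sqrt{|\mu_3|}\be_3\pm\sqrt{|\mu_1|}\be_1$. The converse is immediate, since $\bn\otimes\tilde\bn+\tilde\bn\otimes\bn$ has eigenvalues $\bn\cdot\tilde\bn\pm1$ and $0$. Both directions are thus inertia arguments.

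\textbf{Step 3: the count of solutions.} For the second part I would use the explicit Ball--James solution formulas. With eigenvectors $\be_1,\be_3$ of $\bC$, the solutions are
\[
\bb=\rho\Bigl(\sqrt{\tfrac{\lambda_3(1-\lambda_1)}{\lambda_3-\lambda_1}}\be_1+\kappa\sqrt{\tfrac{\lambda_1(\lambda_3-1)}{\lambda_3-\lambda_1}}\be_3\Bigr),
\qquad
\bm=\frac{\sqrt{\lambda_3}-\sqrt{\lambda_1}}{\rho\sqrt{\lambda_3-\lambda_1}}\bigl(-\sqrt{1-\lambda_1}\,\be_1+\kappa\sqrt{\lambda_3-1}\,\be_3\bigr),
\]
with $\kappa=\pm1$. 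These give at most two solutions, and the two normals differ when $\lambda_1<1<\lambda_3$ strictly. Translating back via $\bn\parallel\bV\bm$, the two normals are the two vectors $\bn,\tilde\bn$ of the decomposition, up to scaling. They are distinct exactly when $\bn\nparallel\tilde\bn$, that is, when $\bU^2-\bV^2$ has eigenvalues of both signs. In the other case the two solutions coincide, which is why the statement requires $\bn\nparallel\tilde\bn$.

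\textbf{Main obstacle.} The main obstacle is proving that there are exactly these two solutions, with no more. For this I would take $\bC-\bI=\bm\otimes\bb+\bb\otimes\bm+|\bb|^2\bm\otimes\bm$ and compare it with the spectral data. This forces $\bm$ to lie in the span of $\be_1,\be_3$, which yields a quadratic system with precisely the two sign choices $\kappa$. I would also track the matching of $\tilde\bn$ against $\bn$ carefully under the map $\bm\mapsto\bV\bm$.
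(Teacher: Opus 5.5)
The paper does not prove Theorem~\ref{r1}; it quotes it from the literature, so there is no in-paper argument to compare with. Your route is a valid way to establish the result. You pass to $\bF=\bR\bU\bV^{-1}=\bI+\bb\otimes\bm$ with $\bm=\bV^{-1}\bn$, invoke Ball--James' Proposition~4 for $\bC=\bV^{-1}\bU^2\bV^{-1}$, and transfer the spectral condition on $\bC-\bI=\bV^{-1}(\bU^2-\bV^2)\bV^{-1}$ to $\bU^2-\bV^2$ by Sylvester's law of inertia. Step~2 is correct as written, including the degenerate rank-one case $\bn\parallel\tilde\bn$.

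One point must be repaired before your counting argument is right. The sentence ``a rotation $\bR$ exists iff $\bC=(\bI+\bm\otimes\bb)(\bI+\bb\otimes\bm)$'' omits the orientation condition $\det\bF=1+\bb\cdot\bm>0$. That condition is exactly what makes $\bF\bV\bU^{-1}$ a rotation rather than merely orthogonal.

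If you redo the count yourself, as in your last paragraph, from $\bC-\bI=\bm\otimes\bb+\bb\otimes\bm+|\bb|^2\bm\otimes\bm$ alone, you get four solutions (up to rescaling $\bb\to t\bb$, $\bm\to\bm/t$) when $\lambda_1<1<\lambda_3$, not two. Two are the Ball--James ones, with $1+\bb\cdot\bm=\sqrt{\lambda_1\lambda_3}$; the other two have $1+\bb\cdot\bm=-\sqrt{\lambda_1\lambda_3}$ and correspond to improper orthogonal $\bR$. For instance, with $\bC=\diag(\tfrac14,1,4)$, the matrix $\bF=\bI+(-\be_1+\tfrac12\be_3)\otimes(\be_1-2\be_3)$ satisfies $\bF^T\bF=\bC$ but $\det\bF=-1$. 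Your quadratic system yields precisely the two choices $\kappa=\pm1$ only after you impose the sign of the determinant, or equivalently cite Ball--James in its $SO(3)$ form.

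To conclude that the normals $\bV\bm_\kappa$ are the given $\bn,\tilde\bn$, you also need the following fact. For a rank-two $\bS=c(\bn\otimes\tilde\bn+\tilde\bn\otimes\bn)$, the set $\{\pm\bn,\pm\tilde\bn\}$ is determined by $\bS$. This holds because $\{\bx:\bS\bx\cdot\bx=0\}=\bn^\perp\cup\tilde\bn^\perp$.

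A shortcut is worth knowing. For any connection $\bR\bU=\bV+\ba\otimes\bn'$, expanding $\bU^2=(\bV+\bn'\otimes\ba)(\bV+\ba\otimes\bn')$ gives $\bU^2-\bV^2=\bn'\otimes\bp+\bp\otimes\bn'$ with $\bp=\bV\ba+\tfrac12|\ba|^2\bn'$. This proves necessity in one line and shows directly that every connection normal is one of the two decomposition vectors. The Ball--James machinery is then needed only for existence, and, together with the orientation condition, for uniqueness of $\ba$ given the normal.
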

\begin{corollary}
\label{Cor}
Let $\bU=\bU^T>0,$ $\bV=\bV^T>0$, $\bU\neq\bV$ and $\tr \bU^2 =\tr\bV^2$.  Then $SO(3)\bU$ and $SO(3)\bV$ are rank-one connected iff
\begin{equation}\label{det.cond1}
\det(\bU^2-\bV^2)=0,
\end{equation}
and the possible interface normals $\bn,\tilde\bn$ are given up to sign by
$$\bn=\frac{\be+\bar\be}{\sqrt 2},\;\;\tilde\bn=\frac{\be-\bar\be}{\sqrt 2},$$
where $\be,\tilde\be$ are orthonormal eigenvectors of $\bU^2-\bV^2$ corresponding to its nonzero eigenvalues.
\end{corollary}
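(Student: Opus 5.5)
We deduce Corollary \ref{Cor} directly from Theorem \ref{r1} by examining the spectrum of the symmetric matrix $\bC:=\bU^2-\bV^2$. Since $\bU\neq\bV$ and both are positive definite symmetric, their squares are distinct (the positive square root is unique), so $\bC\neq\bzero$. The hypothesis $\tr\bU^2=\tr\bV^2$ gives $\tr\bC=0$.

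For necessity, suppose the wells are rank-one connected. Theorem \ref{r1} gives $\bC=c(\bn\otimes\tilde\bn+\tilde\bn\otimes\bn)$ with $c\neq0$. This matrix has rank at most two, since its range lies in ${\rm span}\{\bn,\tilde\bn\}$. Hence $\det\bC=0$, which is \eqref{det.cond1}.

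For sufficiency, suppose $\det\bC=0$. Then $\bC$ has a zero eigenvalue, and by $\tr\bC=0$ its remaining eigenvalues are $\lambda$ and $-\lambda$. Moreover $\lambda\neq0$, since otherwise $\bC$, being symmetric, would vanish. Let $\be,\bar\be$ be orthonormal eigenvectors with $\bC\be=\lambda\be$ and $\bC\bar\be=-\lambda\bar\be$, so that
$$\bC=\lambda(\be\otimes\be-\bar\be\otimes\bar\be).$$
Set $\bn=(\be+\bar\be)/\sqrt2$ and $\tilde\bn=(\be-\bar\be)/\sqrt2$. Expanding gives
$$\bn\otimes\tilde\bn+\tilde\bn\otimes\bn=\tfrac12\bigl(2\be\otimes\be-2\bar\be\otimes\bar\be\bigr)=\be\otimes\be-\bar\be\otimes\bar\be,$$
because the cross terms cancel. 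Hence $\bC=c(\bn\otimes\tilde\bn+\tilde\bn\otimes\bn)$ with $c=\lambda\neq0$, and Theorem \ref{r1} yields the rank-one connection. Since $\bn\perp\tilde\bn$, in particular $\bn\nparallel\tilde\bn$, so Theorem \ref{r1} also gives exactly two connections, with normals $\bn$ and $\tilde\bn$.

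It remains to show that the normals are determined up to sign, which is the only point needing care. Suppose $\bC=c'(\bm\otimes\tilde\bm+\tilde\bm\otimes\bm)$ for unit vectors $\bm,\tilde\bm$. Comparing with the spectrum of $\bC$ will show that the representation is unique up to order and sign.

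First, $\bm\nparallel\tilde\bm$, since otherwise $\bC$ would have rank one. That is impossible, because $\bC$ has the two nonzero eigenvalues $\pm\lambda$.

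Next, the zero eigenspace of $\bC$ is ${\rm span}\{\bm,\tilde\bm\}^\perp$, so ${\rm span}\{\bm,\tilde\bm\}={\rm span}\{\be,\bar\be\}$.

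Finally, restrict to this plane. Since $\bm\cdot\tilde\bm=\cos\phi$, a direct computation shows that $\bm\pm\tilde\bm$ are eigenvectors with eigenvalues $c'(\cos\phi\pm1)$. The trace condition forces these two eigenvalues to sum to zero, so $2c'\cos\phi=0$ and hence $\bm\perp\tilde\bm$. The eigenvectors are then $(\bm\pm\tilde\bm)/\sqrt2$ with eigenvalues $\pm c'$. Matching these with $\be$ and $\bar\be$ recovers $\{\bm,\tilde\bm\}=\{\bn,\tilde\bn\}$ up to sign, together with the swap $\be\leftrightarrow\bar\be$. This swap only exchanges $\tilde\bn$ with $-\tilde\bn$.
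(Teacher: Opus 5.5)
Your proof is correct and follows the paper's argument. Necessity comes from the rank-two form in Theorem \ref{r1}, and sufficiency from writing the traceless, singular, nonzero symmetric matrix $\bU^2-\bV^2$ as $\lambda(\be\otimes\be-\bar\be\otimes\bar\be)=\lambda(\bn\otimes\tilde\bn+\tilde\bn\otimes\bn)$ and then invoking Theorem \ref{r1}. Your final uniqueness-of-representation argument is also correct but not needed: the ``exactly two rank-one connections'' clause of Theorem \ref{r1}, applied with $\bn\perp\tilde\bn$, already identifies the normals, and that is how the paper concludes.
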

\begin{proof}
By  Theorem \ref{r1},  \eqref{det.cond1} is necessary for a rank-one connection. Conversely, if \eqref{det.cond1} holds, then since $\tr (\bU^2-\bV^2)=0$ and $\bU\neq\bV$
$$\bU^2-\bV^2=\lambda(\be\otimes\be-\bar\be\otimes\bar\be)=\lambda\left(\frac{\be+\bar\be}{\sqrt 2}\otimes \frac{\be-\bar\be}{\sqrt 2}+\frac{\be-\bar\be}{\sqrt 2}\otimes \frac{\be+\bar\be}{\sqrt 2}\right)$$
for some $\lambda\neq 0$, giving the sufficiency and normals by Theorem \ref{r1}.
\end{proof}
\begin{remark}
\label{norankone}\rm 
In particular we recover from Theorem \ref{r1} the well-known result  that there are no rank-one connections between matrices $\bA,\bB$ belonging to the same energy well $SO(3)\bU$.
\end{remark}
We recall the definition (see \cite{sverak95}) of the {\it quasiconvexification} $E^{\rm qc}$ of a compact set $E\subset\R^{m\times n}$
$$E^{\rm qc}:=\{\bA\in\R^{m\times n}:g(\bA)\leq \sup_Eg\text{ for all quasiconvex }g:\R^{m\times n}\to\R\}.$$
For the definition and properties of quasiconvex functions see, for example, \cite{dacorogna89,muller99,rindler2018}.  An alternative characterization of $E^{\rm qc}$ is 
\begin{align*}
E^{qc}&=\{\bar{\bnu}:\bnu \text{ is a homogeneous gradient Young measure with supp } \bnu\subset E\} ,
\end{align*}
where $\bar\nu:=\int_{\R^{m\times n}}\bA\,d\nu(\bA)$ denotes the centre of mass of $\nu$. For the definition and properties of  Young measures and gradient Young measures (which we abbreviate by GYM) see, for example, \cite{p21, kinderlehrerpedregal91, kinderlehrerpedregal94,muller99,rindler2018}. 
For a single crystal, zero energy microstructures correspond to a GYM $(\nu_{\bx})_{\bx\in \Omega}$ such that $\text{supp}\:\nu_{\bx} \subset K$ for a.e. $\bx\in \Omega$. The corresponding macroscopic deformation gradients $D\by(\bx)=\bar\nu_{\bx}$ belong for a.e.  $\bx\in \Omega$  to  $K^{\rm qc}$.

We will be particularly interested in homogeneous GYMs $\nu$ that are {\it laminates}, i.e. that satisfy $g(\bar\nu)\leq \langle \nu,g\rangle:=\int_{\R^{m\times n}}g(\bA)\,d\nu(\bA)$ for all rank-one convex functions $g:\R^{m\times n}\to\R$. The simplest nontrivial laminates are {\it first-order} (or {\it simple}) {\it laminates} $\nu=\lambda \delta_{\bA}+(1-\lambda)\delta_{\bB}$, where $\lambda\in[0,1]$ and $\bA-\bB=\ba\otimes\bn$, in which the compatible gradients $\bA, \bB$ are layered infinitely finely with layer normals $\bn$ and respective volume fractions $\lambda,1-\lambda$, and for which $\bar\nu=\lambda\bA+(1-\lambda)\bB$.

\subsection{Polycrystals}
\label{polycrystals}
Following \cite{p35} we consider a polycrystal which is a union of a finite number $M$ of single crystal grains, so that in the reference configuration it occupies the bounded Lipschitz domain $\Omega=\mathrm{int}\bigcup^M_{j=1}\bar\Omega_j$, where  the $\Omega_j$ are disjoint bounded Lipschitz domains. We suppose that in the grain $\Omega_j$ the crystal axes are rotated by the constant rotation $\bR_j\in SO(3)$ with respect to the cubic axes at $\theta_c$. Thus (since $\partial\Omega_j$ has zero measure for each $j$) the  total free energy for the polycrystal is given by
\begin{equation}
\label{FREEENERGYpoly}
I(\by)=\int_{\Omega}\psi(D\by(\bx)\bR(\bx))\:d\bx,
\end{equation}
where $\bR(\bx)=\bR_j$ for  $\bx\in\Omega_j$. Hence a zero energy deformation corresponds to $\by$ with $D\by(\bx) \in K\bR_j^T$ for a.e. $\bx\in \Omega_j$, and a zero energy microstructure to a GYM $(\nu_{\bx})_{\bx\in \Omega}$ such that $\text{supp}\:\bnu_x \subset K\bR_j^T$ for a.e. $\bx\in \Omega_j$.  Again, the corresponding macroscopic deformations $\by$ are those whose deformation gradients are such that $D\by(\bx)\bR_j\in K^{\rm qc}$ for a.e. $\bx\in \Omega_j$.

\section{Compatibility of laminates at  grain boundaries}\label{3}
\subsection{Reduction to the Hadamard jump condition} \label{3.1}
Consider a point $\bar\bx$ lying on the common boundary $\partial\Omega_i\cap\partial\Omega_j$ between two grains $\Omega_i,\Omega_j, i\neq j$. We suppose that this common boundary is a $C^1$ surface $\mathcal S$ in a neighbourhood of $\bar\bx$ having unit outward normal $\bm$ with respect to $\Omega_i$  at $\bar\bx$, so that for some $\delta>0$ the open ball $B:=B(\bar\bx,\delta)$ is such that $B=(B\cap\Omega_i)\cup (B\cap {\mathcal S})\cup(B\cap\Omega_j)$.

We ask when it is possible that  for   a  Lipschitz zero energy deformation $\by$  to be such   that $D\by(\bx)\to \bA$ as $\bx\to\bar\bx$ with $\bx\in \om_i$, and $D\by(\bx)\to \bB$ as $\bx\to\bar\bx$ with $\bx\in\om_j$, for constant matrices $\bA,\bB$, in the sense that
\begin{align}
\label{limits}
&\lim_{\ep\to 0+}\av_{B(\bar\bx,\ep)}|D\by(\bx)-\bA|\,\chi_i(\bx)\,d\bx=0,\;\;\lim_{\ep\to 0+}\av_{B(\bar\bx,\ep)}|D\by(\bx)-\bB|\,\chi_j(\bx)\,d\bx=0,
\end{align}
where $\chi_i,\chi_j$ denote the characteristic functions of $\om_i,\om_j$ respectively and $\displaystyle\av$ denotes average. Blowing up around $\bar\bx$ by setting
$\bw_\ep(\bz)=\ep^{-1}(\by(\bar\bx+\ep\bz)-\by(\bar\bx))$, $\ep>0$, we see that \eqref{limits} is equivalent to 
\begin{align}
\label{limitsblownup}
&\lim_{\ep\to 0+}\av_{B(\bzero,1)}|D\bw_\ep(\bz)-\bA|\,\chi_i(\bar\bx+\ep\bz)\,d\bz=0,\;\lim_{\ep\to 0+}\av_{B(\bzero,1)}|D\bw_\ep(\bz)-\bB|\,\chi_j(\bar\bx+\ep\bz)\,d\bz=0.
\end{align}
 Since $\bw_\ep(\bzero)=\bzero$ and $D\bw_\ep(\bz)$ is bounded (because $D\by(\bx)\in K$), it follows that $\bw_\ep$ is bounded in $W^{1,\infty}(B(\bzero,1),\R^3)$, so that $\bw_{\ep_r}\weakstar\bw$ in  $W^{1,\infty}(B(\bzero,1),\R^3)$ for some $\bw$ and sequence $\ep_r\to 0$. By the compactness of the embedding of $W^{1,\infty}(B(\bzero,1),\R^3)$ in $C^0(\overline{B(\bzero,1)},\R^3)$, $\bw_{\ep_r}\to \bw$ in $C^0(\overline{B(\bzero,1)},\R^3)$. Since for any $\delta>0$ and sufficiently large $j$ we have  $\chi_i(\bar\bx+\ep_r\bz)=1$ for $\bz\in B(\bzero,1)$ with $\bz\cdot\bm<-\delta$ we obtain from \eqref{limitsblownup} that $D\bw_{\ep_r}\to \bA$ strongly in $L^1(B(\bzero,1)\cap \{\bz\cdot\bm<-\delta\})$, so that $D\bw(\bz)=\bA$ for $\bz\cdot\bm<0$. Similarly $D\bw(\bz)=\bB$ for $\bz\cdot\bm>0$. Therefore for $\bx\in B(\bzero,1)$
\begin{equation}
\label{AB}
\bw(\bx)=\left\{\begin{array}{ll}\bA\bx,& \bx\cdot\bm\leq \bo,\\\bB\bx,& \bx\cdot\bm\geq \bo.\end{array}\right.
\end{equation}
Since $D\by(\bx)\in K\bR_i^T$ for a.e. $\bx\in\Omega_i$ the above strong convergence implies that $\bA\in K\bR_i^T$, and similarly $\bB\in K\bR_j^T$, so that
\begin{equation}
\label{AB1}
\bA\in SO(3)\bR_i\bU_k\bR_i^T,\; \bB\in SO(3)\bR_j\bU_l\bR_j^T,
\end{equation}
for some $k,l\in\{1,\ldots,N\}$.  Thus we have reduced  the problem to that of a planar grain boundary separating constant gradients in $K$.

More generally, let $(\nu_{\bx})_{\bx\in B}$ be a zero energy GYM, so that $\supp\nu_{\bx}\subset K$ for a.e. $\bx\in B$, satisfying 
\begin{align}
\label{limitsnu}
&\lim_{\ep\to 0+}\av_{B(\bar\bx,\ep)}d(\nu_{\bx},\nu_i)\chi_i(\bx)\,d\bx=0,\; \lim_{\ep\to 0+}\av_{B(\bar\bx,\ep)}d(\nu_{\bx},\nu_j)\chi_j(\bx)\,d\bx=0,
\end{align}
where $\nu_i,\nu_j$ are probability measures on $K$, and $d$ is a distance on the set $P(K)$ of probability measures on $K$ such that $\mu_j\weakstar\mu$ in $P(K)$ iff $d(\mu_j,\mu)\to 0$ (possible choices include the Prohorov metric \cite[p72ff]{billingsley} and the Wasserstein-1 metric \cite[Chapter 6]{villanioptimal}). Setting $\nu^{\ep}_{\bz}=\nu_{\bar\bx+\ep\bz}$ \eqref{limitsnu} is equivalent to
\begin{align}
\label{limitsnublownup}
\lim_{\ep\to 0+}\av_{B(\bzero,1)}d(\nu^{\ep}_{\bz},\nu_i)\chi_i(\bar\bx+\ep\bz)\,d\bz=0,\;\lim_{\ep\to 0+}\av_{B(\bzero,1)}d(\nu^{\ep}_{\bz},\nu_j)\chi_j(\bar\bx+\ep\bz)\,d\bz=0.
\end{align}
 We claim that
\begin{equation}
\label{nublowup}
\mu_{\bz}:=\left\{\begin{array}{cc}\nu_i,&\bz\cdot\bm<0,\\ \nu_j,&\bz\cdot\bm>0,\end{array}\right.
\end{equation}
is a GYM on $B(\bzero,1)$. To prove this first note that $\bar\nu_{\bx}=D\by(\bx)$ for  a.e. $\bx\in B$ for some $\by\in W^{1,\infty}(B,\R^3)$. From \eqref{limitsnublownup} we see that $d(\nu^{\ep}_{\bz},\nu_i)\to 0$ in $L^1(B(\bzero,1)\cap\{\bz\cdot\bm<-\delta\})$ for  $\delta>0$, so that there exists a sequence $\ep_r\to 0$ such that $\nu^{\ep_r}_{\bz}\weakstar \nu_i$ for a.e. $\bz\in B(\bzero,1)\cap \{\bz\cdot\bm<0\}$ and similarly such that $\nu^{\ep_r}_{\bz}\weakstar \nu_j$ for a.e. $\bz\in B(\bzero,1)\cap \{\bz\cdot\bm>0\}$, so that in particular $\bar\nu^{\ep_r}_{\bz}\to \bar\mu_{\bz}$ a.e. in $B(\bzero,1)$.
Hence
 for any quasiconvex function $\varphi:\R^{3\times 3}\to\R$ we obtain for such a $\bz$
$$\langle\nu_i,\varphi\rangle=\lim_{r\to\infty}\langle\nu^{\ep_r}_{\bz},\varphi\rangle\geq\lim_{r\to\infty}\varphi(\bar\nu_{\bz}^{\ep_r})= \varphi(\bar\nu_i),$$
and similarly $\langle\nu_j,\varphi\rangle\geq \varphi(\bar\nu_j)$. Furthermore,  since $D\by(\bar\bx+\ep_r\bz)=\bar\nu^{\ep_r}_{\bz}$ is bounded in $L^\infty(B(\bzero,1))$ we can suppose also that $D\by(\bar\bx+\ep_r\bz)$ converges weak$\,*$ in $L^\infty(B(\bzero,1))$, and the weak limit must equal the a.e. limit $\bar\mu_{\bz}$.  But the weak$\,*$ limit of a sequence of gradients is a gradient, so that $\bar\mu_{\bz}$ is a gradient.
  The claim then  follows from the characterization of GYM due to Kinderlehrer \& Pedregal \cite[Theorem 6.1]{kinderlehrerpedregal91}.
\subsection{Zero order laminates: the case of exact gradients}\label{3.2}
The Hadamard jump condition gives that \eqref{AB} holds iff
$\bA-\bB=\ba\otimes \bm$ for some $\ba\in\R^3$. There are now two possibilities (i) that $\bR_i\bU_k\bR_i^T=\bR_j\bU_l\bR_j^T$, (ii) that $\bR_i\bU_k\bR_i^T\neq \bR_j\bU_l\bR_j^T$, so that there is a rank-one connection between $SO(3)\bR_i\bU_k\bR_i^T$ and $SO(3)\bR_j\bU_l\bR_j^T$.

Let us first consider the case (i), when by Remark  \ref{norankone} we have $\bA=\bB$, so that there is no jump in the deformation gradient across the interface. This case holds iff
\begin{equation}
\label{Ezero}
\bU_k=\bR\bU_l\bR^T,
\end{equation}
where $\bR:=\bR_i^T\bR_j$ is the relative rotation of the cubic axes in the grains. Let $\bU_l$ have spectral decomposition
 \begin{equation}
\label{spectral}\bU_l=\sum_{r=1}^3\lambda_r\hat \be_r\otimes\hat\be_r.
\end{equation}
By the definition of the variants, $\bU_k=\bQ\bU_l\bQ^T$ for some $\bQ\in P^{24}$. Hence  
$\bU_k=\bR\bU_l\bR^T$ iff 
\begin{equation}
\label{Rtilde}
\tilde\bR\bU_l\tilde\bR^T=\bU_l,
\end{equation}
 where $\tilde\bR:=\bR^T\bQ$. 
 If $\bR\in P^{24}$ then taking $\bQ=\bR$ we see that \eqref{Ezero} can hold for suitable $k,l$. However \eqref{Ezero} can hold for $\bR\in SO(3)\setminus P^{24}$, as shown by the following result.

\begin{proposition}
\label{prop1}
$\bU_k=\bR \bU_l\bR ^T$ for some $\bR\in SO(3)\setminus P^{24}$ iff either at least two  $\lambda_r$ are equal, or the $\lambda_r$ are distinct and $-\bI+2\hat\be_s\otimes\hat\be_s\not\in P^{24}$ for some $s$. 
\end{proposition}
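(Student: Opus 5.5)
The plan is to use the reduction \eqref{Rtilde}. Write $\tilde\bR=\bR^T\bQ$, where $\bQ\in P^{24}$ satisfies $\bU_k=\bQ\bU_l\bQ^T$. Then $\bU_k=\bR\bU_l\bR^T$ with $\bR\notin P^{24}$ holds iff $\tilde\bR$ lies in the stabilizer
$$G_l:=\{\bS\in SO(3):\bS\bU_l\bS^T=\bU_l\}$$
and $\bR=\bQ\tilde\bR^T\notin P^{24}$. Because $P^{24}$ is a group and $\bQ\in P^{24}$, the second condition is equivalent to $\tilde\bR\notin P^{24}$. So the statement reduces to the following claim: $G_l\not\subset P^{24}$ iff either two $\lambda_r$ are equal, or the $\lambda_r$ are distinct and some $-\bI+2\hat\be_s\otimes\hat\be_s\notin P^{24}$. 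For the ``if'' direction we may simply take $k=l$ and $\bQ=\bI$. For ``only if'' the above shows that the existence of some such $\bR$ (for any $k$) forces $G_l\not\subset P^{24}$.

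Next I would compute $G_l$ from the spectral decomposition \eqref{spectral}. The condition $\bS\bU_l\bS^T=\bU_l$ means that $\bS$ commutes with $\bU_l$, and hence that $\bS$ preserves each eigenspace of $\bU_l$.

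\emph{Case of distinct eigenvalues.} Each eigenspace is a line, so $\bS\hat\be_r=\pm\hat\be_r$ for each $r$, with $\det\bS=1$. Hence $G_l=\{\bI,\ \bS_1,\bS_2,\bS_3\}$, where $\bS_s:=-\bI+2\hat\be_s\otimes\hat\be_s$ is the rotation by $\pi$ about $\hat\be_s$. Therefore $G_l\not\subset P^{24}$ iff some $\bS_s\notin P^{24}$, which is exactly the stated condition.

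\emph{Case of two equal eigenvalues,} say $\lambda_1=\lambda_2\neq\lambda_3$. Then $G_l$ contains every rotation about $\hat\be_3$, which is a continuum. Since $P^{24}$ is finite, $G_l\not\subset P^{24}$.

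\emph{Case of three equal eigenvalues.} Then $G_l=SO(3)$, and the conclusion is the same.

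Nothing beyond this seems delicate. The main point to state carefully is the equivalence $\bR\notin P^{24}\iff\tilde\bR\notin P^{24}$, which uses that $P^{24}$ is a group. One should also note that for distinct eigenvalues, commuting with $\bU_l$ forces diagonality in the eigenbasis, because the eigenspaces are invariant under $\bS$.
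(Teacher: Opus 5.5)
Your proposal is correct and follows essentially the same route as the paper: you reduce via $\tilde\bR=\bR^T\bQ$ to the stabilizer of $\bU_l$, then compute that stabilizer from the eigenspaces. When two eigenvalues coincide it contains a continuum of rotations, and when the eigenvalues are distinct it is $\{\bI\}\cup\{-\bI+2\hat\be_s\otimes\hat\be_s: s=1,2,3\}$. Your explicit check that $\bR\notin P^{24}\iff\tilde\bR\notin P^{24}$, using that $P^{24}$ is a group, fills in a step the paper leaves implicit; since that reduction already holds for any given $k$, the remark about taking $k=l$ for the ``if'' direction is unnecessary.
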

\begin{proof}
 \eqref{Rtilde} holds iff
\begin{align}
\label{eigen}
\sum_{r=1}^{3}\lambda_r\tilde \bR \hat{\be}_r\otimes \tilde\bR \hat{\be}_r=\sum_{r=1}^{3}\lambda_r \hat{\be}_r\otimes \hat{\be}_r
\end{align}
If two of the eigenvalues are equal, say $\lambda_2=\lambda_3 ,$ then 
\begin{align*}
\tilde\bR =\hat{\be}_1\otimes \hat{\be}_1 + \cos\theta\, (\hat{\be}_2\otimes \hat{\be}_2+\hat{\be}_3\otimes \hat{\be}_3) + \sin\theta\, (\hat{\be}_3\otimes \hat{\be}_2-\hat{\be}_2\otimes \hat{\be}_3)  
\end{align*}
 satisfies \eqref{eigen}, and we can choose $\theta$ such that  $\tilde\bR\notin P^{24}$.
 If all the $\lambda_r$  are distinct then \eqref{eigen} holds iff   $\tilde\bR \hat{\be}_r=\kappa_r\hat{\be}_r$  for each $r$ with $\kappa_r=\pm 1$ and $\kappa_1\kappa_2\kappa_3=1$.  If $\kappa_1=\kappa_2=\kappa_3$ then $\tilde\bR=\bI\in P^{24}$ while otherwise $\tilde\bR=-\bI+2 \hat{\be}_s\otimes \hat{\be}_s$ for some $s$.
\end{proof}
Thus for cubic-to-tetragonal transformations, for which we can take
$$\bU_l= \eta_2\be_1\otimes\be_1+\eta_1(\be_2\otimes\be_2+\be_3\otimes\be_3),\;\eta_1>0, \eta_2>0, \eta_1\neq\eta_2,$$
 two eigenvalues are equal and so there are $\bR\in SO(3)\setminus P^{24}$ such that $\bU_k=\bR \bU_l\bR ^T$. On the other hand, for cubic-to-orthorhombic transformations we can take  
$$\bU_l=\alpha \be_1\otimes \be_1+\beta\frac{\be_2+\be_3}{\sqrt{2}}\otimes \frac{\be_2+\be_3}{\sqrt{2}}+\gamma\frac{\be_2-\be_3}{\sqrt{2}}\otimes \frac{\be_2-\be_3}{\sqrt{2}},$$
with distinct positive  eigenvalues $\alpha,\beta,\gamma$, and
\begin{align*}
-\bI+2 \be_1\otimes \be_1 \quad \text{and} \quad -\bI+2 \frac{\be_2\pm \be_3}{\sqrt{2}}\otimes \frac{\be_2\pm \be_3}{\sqrt{2}} \in P^{24},
\end{align*}
so that by Proposition \ref{prop1} the only $\bR\in SO(3)$ with  $\bU_k=\bR \bU_l\bR ^T$ are in $P^{24}$.

We now consider case (ii), which by Corollary \ref{Cor} holds iff
\begin{equation}
\label{detE}
\det\bE=0, \text{where } \bE:=\bU_k^2-\bR\bU_l^2\bR^T.
\end{equation}
(Note that case (i) corresponds to $\bE=\bzero$.) With the same notation $\tilde\bR=\bR^T\bQ$ as for case (i) we have that
$$\det\bE=\det (\tilde\bR\bU_l^2-\bU_l^2\tilde\bR),$$
which we can evaluate explicitly using the axis-angle representation of $\tilde\bR$ (Rodrigues' formula, see e.g. \cite{MarsdenRatiu})
\begin{align}
\label{axisangle}
\tilde{\bR}=\exp(\varphi\bN)=\cos\varphi \,\bI+\sin \varphi\,\bN + (1-\cos\varphi)\,\bn\otimes\bn,
\end{align}
where the unit vector $\bn$ is the axis of rotation, $\varphi\in[0,2\pi)$ is the angle of rotation, and $\bN$ is the skew matrix satisfying $\bN\bp=\bn\wedge\bp$ for all $\bp\in\R^3$, that is
\begin{align*}
\bN=\begin{pmatrix}
    		0 & -n_3 & n_2\\
    		n_3 &  0 & -n_1\\
    		-n_2 & n_1 & 0
    	    \end{pmatrix},
\end{align*}
where $n_r:=\bn\cdot\hat\be_r$ are the components of $\bn$ in the $\hat\be_k$ basis.
\begin{theorem}
\label{Theorem1}
Let $\bU_l$ have spectral decomposition
$\bU_l=\sum_{r=1}^{3}\lambda_r \hat{\be}_r\otimes \hat{\be}_r$
and $\tilde\bR$ have axis-angle representation \eqref{axisangle}.
 Then
\begin{align}
\label{detformula}
\det(\bar{\bR}\bU_l^2-\bU_l^2\bar{\bR})=4\:\rho\: n_1n_2n_3\:\sin \varphi(1-\cos \varphi), 
\end{align}
where $\rho=(\lambda_1^2-\lambda_2^2)(\lambda_2^2-\lambda_3^2)(\lambda_3^2-\lambda_1^2),$  so that
\begin{equation}
\label{DETCondition}
\det(\bar{\bR}\bU_l^2-\bU_l^2\bar{\bR})=0
\end{equation}
iff $(i)$ two of the $\lambda_r$ are equal, or $(ii)$ $\bn\cdot\hat{\be}_r=0$ for some $r$, or $(iii)$ $\varphi=0$ or $\varphi=\pi.$ 
\end{theorem}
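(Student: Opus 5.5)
Work throughout in the orthonormal eigenbasis $\{\hat\be_1,\hat\be_2,\hat\be_3\}$ of $\bU_l$. There $\bU_l^2=\diag(\mu_1,\mu_2,\mu_3)$ with $\mu_r:=\lambda_r^2$. (In \eqref{detformula} the matrix written $\bar\bR$ is $\tilde\bR$.) Write $\tilde R_{rs}$ for the entries of $\tilde\bR$. The commutator-type matrix has entries
$$(\tilde\bR\bU_l^2-\bU_l^2\tilde\bR)_{rs}=(\mu_s-\mu_r)\tilde R_{rs}.$$
So its diagonal vanishes, and it equals the Hadamard product of $\tilde\bR$ with the antisymmetric array $d_{rs}:=\mu_s-\mu_r$.

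The determinant of a $3\times 3$ matrix with zero diagonal is $a_{12}a_{23}a_{31}+a_{13}a_{32}a_{21}$. Substituting gives
$$\det=(\mu_2-\mu_1)(\mu_3-\mu_2)(\mu_1-\mu_3)\bigl(\tilde R_{12}\tilde R_{23}\tilde R_{31}-\tilde R_{13}\tilde R_{32}\tilde R_{21}\bigr),$$
because $d_{13}d_{32}d_{21}=-d_{12}d_{23}d_{31}$. The prefactor is $-\rho$ with the sign convention of the statement, and I would check it carefully at the end.

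The remaining step is to compute the bracket from Rodrigues' formula \eqref{axisangle}. For $r\neq s$,
$$\tilde R_{rs}=(1-\cos\varphi)n_rn_s+\sin\varphi\,N_{rs},$$
with $N_{12}=-n_3$, $N_{23}=-n_1$, $N_{31}=-n_2$, and $N_{sr}=-N_{rs}$. Set $c:=1-\cos\varphi$ and $s:=\sin\varphi$. Then
$$\tilde R_{12}\tilde R_{23}\tilde R_{31}=(cn_1n_2-sn_3)(cn_2n_3-sn_1)(cn_3n_1-sn_2),$$
and the reversed product is the same expression with $s\mapsto -s$. Their difference is therefore twice the odd part in $s$:
$$2\bigl(-s\,c^2(n_1^2n_2^2+n_2^2n_3^2+n_3^2n_1^2)-s^3n_1n_2n_3\bigr)\cdot(\text{need care}).$$
Expanding the odd part precisely gives $-s c^2 n_1n_2n_3(n_1^2+n_2^2+n_3^2)-s^3n_1n_2n_3$. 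Each linear-in-$s$ term is $-s\,c^2$ times a product like $n_1n_2\cdot n_2n_3\cdot n_3$, which equals $n_1n_2n_3\,n_2^2$.

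Using $|\bn|=1$, the odd part becomes $-n_1n_2n_3\,s\,(c^2+s^2)$. Since $c^2+s^2=2(1-\cos\varphi)$, the bracket equals
$$-4\,n_1n_2n_3\sin\varphi(1-\cos\varphi).$$
Combined with the prefactor $-\rho$, this yields \eqref{detformula}. This bookkeeping of signs and the use of $|\bn|=1$ is the only delicate point. I would verify it on a test case, for instance small $\varphi$ with a generic axis.

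The characterization then follows immediately. The right-hand side vanishes iff one of the following holds:
\begin{itemize}
\item[(i)] $\rho=0$, i.e.\ two of the $\lambda_r^2$ are equal. Since $\lambda_r>0$, this means two of the $\lambda_r$ are equal.
\item[(ii)] Some $n_r=\bn\cdot\hat\be_r=0$.
\item[(iii)] $\sin\varphi(1-\cos\varphi)=0$. For $\varphi\in[0,2\pi)$ this means $\varphi=0$ or $\varphi=\pi$.
\end{itemize}
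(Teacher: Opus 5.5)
Your argument is correct and is essentially the paper's proof: your entries $(\lambda_s^2-\lambda_r^2)\tilde R_{rs}$ are exactly the paper's $p_{rs}$, the zero-diagonal determinant formula is the same, and your Hadamard-product factorization and parity-in-$\sin\varphi$ trick simply make the paper's ``short calculation'' explicit. In the write-up, delete the stray line containing $n_1^2n_2^2+n_2^2n_3^2+n_3^2n_1^2$ and ``(need care)'', which is wrong, and correct the sample term to $n_1n_2\cdot n_2n_3\cdot n_2=n_1n_2n_3\,n_2^2$ (you wrote a final factor $n_3$); your subsequent expansion $-s\,c^2n_1n_2n_3(n_1^2+n_2^2+n_3^2)-s^3n_1n_2n_3$ and the sign bookkeeping against $-\rho$ are right.
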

\begin{proof}
Observe that \eqref{axisangle} yields
\begin{align}
\label{det}
\bar{\bR}\bU_l^2-\bU_l^2\bar{\bR}= \sin\varphi\, (\bN \bU_l^2-\bU_l^2\bN) + (1-\cos\varphi)\, (\bn\otimes \bU_l^2\bn-\bU_l^2 \bn\otimes\bn).
\end{align}
We calculate
\begin{align*}
\bN \bU_l^2-\bU_l^2 \bN 
  &=\begin{pmatrix}
  	0 & (\lambda_1^2-\lambda_2^2)n_3 & (\lambda_3^2-\lambda_1^2)n_2\\
  	(\lambda_1^2-\lambda_2^2)n_3 &  0 & (\lambda_2^2-\lambda_3^2)n_1\\
  	(\lambda_3^2-\lambda_1^2)n_2 & (\lambda_2^2-\lambda_3^2)n_1 & 0
  \end{pmatrix},
\end{align*}
\begin{align*}
\bn\otimes \bU_l^2 \bn- \bU_l^2\bn \otimes \bn&=
\begin{pmatrix}
0 & (\lambda_2^2-\lambda_1^2)n_1n_2 & (\lambda_3^2-\lambda_1^2)n_1n_3\\
(\lambda_1^2-\lambda_2^2)n_1n_2 &  0 & (\lambda_3^2-\lambda_2^2)n_2n_3\\
(\lambda_1^2-\lambda_3^2)n_1n_3 & (\lambda_2^2-\lambda_3^2)n_2n_3 & 0
\end{pmatrix}.
\end{align*}
Therefore from \eqref{det} we obtain
\begin{align*}
\bar{\bR}\bU_l^2-\bU_l^2\bar{\bR}=
\begin{pmatrix}
   		0 & p_{12} & p_{13}\\
   		p_{21} &  0 & p_{23}\\
   		p_{31} & p_{32} & 0
\end{pmatrix}
\end{align*}
where
     \begin{align*}
	 &p_{12}=(\lambda_1^2-\lambda_2^2)(n_3\sin \varphi-(1-\cos \varphi)n_1n_2),
	 &p_{13}=(\lambda_3^2-\lambda_1^2)(n_2\sin \varphi+(1-\cos \varphi)n_1n_3),\\
	 &p_{21}=(\lambda_1^2-\lambda_2^2)(n_3\sin \varphi+(1-\cos \varphi)n_1n_2),
	 &p_{23}=(\lambda_2^2-\lambda_3^2)(n_1\sin \varphi-(1-\cos \varphi)n_2n_3),\\
	 &p_{31}=(\lambda_3^2-\lambda_1^2)(n_2\sin \varphi-(1-\cos \varphi)n_1n_3),
	 &p_{32}=(\lambda_2^2-\lambda_3^2)(n_1\sin \varphi+(1-\cos \varphi)n_2n_3),
    \end{align*}
from which a short calculation gives
\begin{align*}
\det(\tilde{\bR}\bU_l^2-\bU_l^2\tilde{\bR})=p_{12}p_{23}p_{31}+p_{13}p_{32}p_{21}=4\rho n_1n_2n_3\sin\varphi\,(1-\cos\varphi)
\end{align*}
as claimed. 
\end{proof}
\begin{corollary}
\label{Cor2}
Let $\bU_l$ (and hence all $\bU_k$) have three distinct eigenvalues. Then for a dense open set of relative  rotations $\bR_i^T \bR_j\in SO(3)$ there is no piecewise affine deformation of the form \eqref{AB} with $\bA\in SO(3)\bR_i\bU_k\bR_i^T,\; \bB\in SO(3)\bR_j\bU_l\bR_j^T$ for any $k,l$ and normal $\bm$. 
\end{corollary}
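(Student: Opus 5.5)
The plan is to reduce the existence of such a deformation to a finite set of analytic conditions on the relative rotation $\bR:=\bR_i^T\bR_j$. Then I would show that each condition fails on a dense open set, and that finitely many such sets still intersect in a dense open set.

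\emph{Step 1: reduction.} A deformation of the form \eqref{AB} exists iff $\bA-\bB=\ba\otimes\bm$. If $\bA=\bB$ we are in case (i), i.e.\ \eqref{Ezero}. Since the eigenvalues are distinct, Proposition \ref{prop1} and the discussion after \eqref{Rtilde} show this forces $\tilde\bR=\bR^T\bQ$ to be $\bI$ or $-\bI+2\hat\be_s\otimes\hat\be_s$. Thus $\bR$ lies in the finite set $\{\bQ\tilde\bR^T\}$, whose complement is dense and open. Otherwise we are in case (ii), which after conjugating by $\bR_i$ is exactly \eqref{detE}: $\det(\bU_k^2-\bR\bU_l^2\bR^T)=0$. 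Here I am using Theorem \ref{r1} and the fact that a rank-one connection with normal $\bm$ requires one between the wells. Writing $\bU_k=\bQ\bU_l\bQ^T$ with $\bQ\in P^{24}$, this is equivalent to $\det(\tilde\bR\bU_l^2-\bU_l^2\tilde\bR)=0$ with $\tilde\bR=\bR^T\bQ$. So it suffices to show that, for each of the finitely many pairs $(k,l)$ and choices of $\bQ$, the set
\[
Z_{\bQ,l}:=\{\bR\in SO(3):\det(\bR^T\bQ\bU_l^2-\bU_l^2\bR^T\bQ)=0\}
\]
is closed with dense complement.

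\emph{Step 2: closedness.} Each $Z_{\bQ,l}$ is the zero set of a continuous (indeed polynomial) function of $\bR$, so it is closed.

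\emph{Step 3: dense complement.} The map $\bR\mapsto\bR^T\bQ$ is a homeomorphism of $SO(3)$, so it suffices to show that the set of $\tilde\bR$ with $\det(\tilde\bR\bU_l^2-\bU_l^2\tilde\bR)\neq0$ is dense. By Theorem \ref{Theorem1}, with $\rho\neq0$ because the $\lambda_r$ are distinct, this determinant vanishes iff some $n_r=0$ or $\varphi\in\{0,\pi\}$. Given any $\tilde\bR=\exp(\varphi\bN)$, we can perturb the axis $\bn$ so that all $n_r\neq0$ and perturb $\varphi$ away from $0$ and $\pi$. The resulting rotations are arbitrarily close by continuity of Rodrigues' formula \eqref{axisangle}, even at $\varphi=0$ where the axis is not unique. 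Hence the zero set has empty interior.

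\emph{Step 4: conclusion.} The desired set of relative rotations is the complement of a finite union of closed nowhere dense sets: the sets $Z_{\bQ,l}$ together with the finite set from case (i). It is therefore open and dense, and the condition is independent of $\bm$. The main (mild) obstacle is Step 1: checking carefully that the case-(ii) condition is exactly \eqref{detE} after conjugating by $\bR_i$, and correctly identifying $\tilde\bR$ with the one in Theorem \ref{Theorem1}. The remaining steps are routine.
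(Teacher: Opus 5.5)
Your proof is correct and follows essentially the paper's route: reduce via Theorem \ref{r1} to the nonvanishing of $\det(\tilde\bR\bU_l^2-\bU_l^2\tilde\bR)$ for finitely many $P^{24}$-related relative rotations, use Theorem \ref{Theorem1} to see that each such condition holds on an open dense set, and intersect finitely many open dense sets. Your separate treatment of case (i) is harmless but redundant, since $\bE=\bzero$ already forces $\det\bE=0$; your perturbation argument usefully supplies the density of $S$, which the paper only asserts.
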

\begin{proof} Fix $l=1$ with $\hat\be_r$ the basis of eigenvectors of   $\bU_1$ and define
$$S:=\{\tilde\bR=\exp(\varphi\bN):n_1n_2n_3\sin\varphi\,(1-\cos\varphi)\neq 0\},$$
where as before $n_r=\bn\cdot\hat\be_r$.  Then $S$
is an open dense subset of $SO(3)$, and hence so is $\bar S:=\cap\{\bQ_l S\bQ_k^T:\bQ_k,\bQ_l\in P^{24}\}$. Given $k,l$ we have that $\bU_k=\bQ_k\bU_1\bQ_k^T$, $\bU_l=\bQ_l\bU_1\bQ_l^T$ for some $\bQ_k,\bQ_l\in P^{24}$. Thus,
\begin{align}\label{det1}
\det (\bR_i\bU_k^2\bR_i^T-\bR_j\bU_l^2\bR_j^T)&=\det(\bR_i\bQ_k\bU_1^2\bQ_k^T\bR_i^T-\bR_j\bQ_l\bU_1^2\bQ_l^T\bR_j^T)\\&=\det(\bQ_l^T\bR_j^T\bR_i\bQ_k\bU_1^2-\bU_1^2\bQ_l^T\bR_j^T\bR_i\bQ_k),\nonumber
\end{align}
which is nonzero iff $\bQ_l^T\bR_j^T\bR_i\bQ_k\in S$. Hence if $\bR_j^T\bR_i\in \bar S$ (equivalently $\bR_i^T\bR_j\in\bar S$, since $\bar S^T=\bar S$) the determinant in \eqref{det1} is nonzero for any $k,l$ and so there is no piecewise affine deformation of the required form.
\end{proof}
If $\det \bE=0$ (in particular if $\bU_l$ has two equal eigenvalues) and $\bE\neq \bo$, then by Corollary \ref{Cor} there are exactly two possible corresponding grain boundary normals $\bm$ consistent with \eqref{AB}, and they are orthogonal and $(101)$ directions with respect to the basis of eigenvectors of $\bR_i\bU_k^2\bR_i^T-\bR_j\bU_l^2\bR_j^T$.

 The results of this section show that, for general relative grain rotations and grain boundary normals, a zero energy deformation cannot satisfy \eqref{limits}, so that a nontrivial microstructure is necessary to achieve compatibility across the grain boundary. 

\subsection{First order laminates}\label{3.3}
As  discussed in Section \ref{3.2} constant deformation gradients are not enough to achieve compatibility across a grain boundary for general grain boundary normals and relative grain rotations. Thus it is natural to ask whether compatibility can in general be achieved by replacing constant gradients by laminates.  In this section we investigate whether we can have compatibility with a zero energy  microstructure that tends to a simple laminate on each side of a  point of a locally $C^1$ grain boundary in the sense of \eqref{limitsnu}.  The analysis in Section \ref{3.1} shows that we may then assume that the grain boundary is planar with a simple laminate on each side. We  show that in general it is not possible to achieve compatibility with such a microstructure.  As a result, higher order laminates are required in order to achieve kinematic compatibility. 

To understand why this is the case it is useful to count equations and variables. The Hadamard jump condition for compatibility across a planar grain boundary with normal $\bm$ consists of an equation of the form
$$\bF-\bG=\bb\otimes \bm$$
where $\bF$ and $\bG$ are the macroscopic deformation gradients of the two simple laminates, with $\bG$ depending on  the relative rotation $\bR$ of the grains. In $\bF$ and $\bG$ there are effectively 5 unknowns consisting of two volume fractions and a rotation, which together with the three unknowns in $\bb$ gives a total of 8 unknowns for 9 equations, so that generically (that is for almost  all $\bm$ and $\bR$) we don't expect a solution. This claim is made rigorous below.

We consider then a planar grain boundary  $\bx\cdot\bm=k$ separating two grains $\om_1:=\{\bx\in B(\bzero,1):\bx\cdot\bm<0\}$, $\om_2:=\{\bx\in B(\bzero,1):\bx\cdot\bm>0\}$ with relative rotation $\bR$. Without loss of generality we can assume that the cubic axes in $\om_1$ are unrotated, so that  a first order zero energy laminate in $\om_1$ is represented by the GYM
\begin{equation}
\label{lam1}
\bnu_1=
\lambda\delta_{\bA}+(1-\lambda)\delta_{\bB}, \; \; \bA,\bB\in K,\;\;\bA-\bB=\hat\ba_i\otimes \bn_i, \;\; \lambda\in[0,1].
\end{equation}
Similarly,  a first order zero energy laminate in $\om_2$ is represented by the GYM
\begin{equation}
\label{lam2}
\bnu_2=
\mu\delta_{\bC\bR}+(1-\mu)\delta_{\bD\bR},\;\;\bC,\bD\in K,\;\; \bC-\bD=\hat\ba_j\otimes \bn_j,\;\;\mu\in[0,1].
\end{equation}
  In order that these two simple laminates be compatible across the grain boundary, that is for 
$$\bnu_{\bx}:=\left\{\begin{array}{cc}\bnu_1&\bx\in\om_1,\\\bnu_2&\bx\in\om_2,\end{array}\right.$$
to be a GYM on  $B(\bzero,1)$, it is necessary and sufficient by \cite[Theorem 6.1]{kinderlehrerpedregal91} that
\[\bar\bnu_{1}-\bar\bnu_{2}=\hat\bb\otimes\bm\quad \text{for some } \hat\bb\neq\bo,\]
or equivalently that the corresponding macroscopic deformation gradients satisfy
\begin{align}
\label{BA}
\lambda\bA+(1-\lambda)\bB-(\mu\bC+(1-\mu)\bD)\bR=\hat\bb\otimes\bm.
\end{align}
Since $\bA,\bB,\bC,\bD\in K$ we have that $\bA=\bQ_1\bU_k,$ $\bB=\bQ_2\bU_i,$ $\bC=\bQ_3\bU_l,$ $\bD=\bQ_4\bU_j,$ for some $i,j,k,l$. We then substitute $\bQ_1\bU_k-\bQ_2\bU_i=\hat\ba_i\otimes \bn_i,$ $\bQ_3\bU_l-\bQ_4\bU_j=\hat\ba_j\otimes\bn_j$ into \eqref{BA} to get
\begin{align}
\label{gbcompat}
\bU_i+\lambda\ba_i\otimes\bn_i-\bQ(\bU_j+\mu\ba_j\otimes\bn_j)\bR
=\bb\otimes\bm,
\end{align}
where $\bQ:=\bQ_2^T\bQ_4$, $\ba_i:=\bQ_2^T\hat\ba_i$, $\ba_j:=\bQ_4^T\hat\ba_j$ and $\bb:=\bQ_2^T\hat\bb$. Thus $\bU_i+\ba_i\otimes\bn_i=\bQ_2^T\bQ_1\bU_k$ and $\bU_j+\ba_j\otimes\bn_j=\bQ_4^T\bQ_3\bU_l$ give rank-one connections from $\bU_i$ to $SO(3)\bU_k$ and from $\bU_j$ to $SO(3)\bU_l$ respectively.

Let $\bP=\boldsymbol{1}-\bm\otimes\bm$ denote the orthogonal projection of $\R^3$ onto the plane $\bm^{\perp}=\{\bx\in \mathbb{R}^3:\:\bm\cdot\bx=\bo\}$ orthogonal to $\bm$. Then
\begin{equation}
\nonumber
(\bU_i+\lambda\ba_i\otimes\bn_i)\bP=\bQ(\bU_j+\mu\ba_j\otimes\bn_j)\bR\bP,
\end{equation}
and so
\begin{equation}
\label{PSP}
\bP\bS\bP=\bzero,
\end{equation}
where
\begin{align*}
\bS(\lambda,\mu)&:=
(\bU_i+\lambda\bn_i\otimes\ba_i)(\bU_i+\lambda\ba_i\otimes\bn_i)-\bR^T(\bU_j+\mu\bn_j\otimes\ba_j)(\bU_j+\mu\ba_j\otimes\bn_j)\bR\\
&=\bU_i^2+\lambda(\bU_i\ba_i\otimes\bn_i+\bn_i\otimes\bU_i\ba_i)
+\lambda^2|\ba_i|^2\bn_i\otimes\bn_i\\
&\qquad
-\bR^T\bU_j^2\bR-\mu\bR^T(\bU_j\ba_j\otimes\bn_j+\bn_j\otimes\bU_j\ba_j)\bR-\mu^2|\ba_j|^2\bR^T\bn_j\otimes\bR^T\bn_j.
\end{align*}
Let $\{\bm_1,\bm_2\}$ be an orthonormal basis of $\bm^{\perp}$. Then \eqref{PSP} is equivalent to the equation in $\R^3$
$$(\bS\bm_ 1\cdot\bm_1,\bS\bm_1\cdot\bm_2,\bS\bm_2\cdot\bm_2)=\bzero,$$
that is to the equation
\begin{align}
\label{ElmA}
\bE(\lambda,\mu):=\bal_0+\lambda\bal_1+\lambda^2\bal_2-\mu\bet_1-\mu^2\bet_2=\bo,
\end{align}
for vectors $\bal_0, \bal_1,\bal_2,\bet_1,\bet_2\in\R^3$ given by
\begin{align*}\bal_0&=(\bA_0\bm_1\cdot\bm_1, \bA_0\bm_1\cdot\bm_2, \bA_0\bm_2\cdot \bm_2),\\
\bal_1&=(\bA_1\bm_1\cdot\bm_1, \bA_1\bm_1\cdot\bm_2, \bA_1\bm_2\cdot \bm_2),\\
\bal_2&=(\bA_2\bm_1\cdot\bm_1, \bA_2\bm_1\cdot\bm_2, \bA_2\bm_2\cdot \bm_2),\\
\bet_1&=(\bB_1\bm_1\cdot\bm_1, \bB_1\bm_1\cdot\bm_2, \bB_1\bm_2\cdot \bm_2),\\
\bet_2&=(\bB_2\bm_1\cdot\bm_1, \bB_2\bm_1\cdot\bm_2, \bB_2\bm_2\cdot \bm_2),
\end{align*}
where
\begin{align}
\label{Melements}
\begin{split}
\bA_0:=\bU_i^2-\bR^T\bU_j^2\bR,\quad
\bA_1:=\bU_i\ba_i\otimes\bn_i+\bn_i\otimes\bU_i\ba_i,\quad
\bA_2:=|\ba_i|^2\bn_i\otimes\bn_i,\\
\bB_1:=\bR^T\bU_j\ba_j\otimes\bn_j+\bn_j\otimes\bR^T\bU_j\ba_j,\quad
\bB_2:=|\ba_j|^2\bR^T\bn_j\otimes\bR^T\bn_j.     
\end{split}
\end{align}
Thus we have reduced the problem of solving the nine equations \eqref{gbcompat} in the eight unknowns $\lambda,\mu, \bQ, \bb$ to that of solving the three equations \eqref{ElmA} in the two unknowns $\lambda,\mu$.

A necessary condition for a solution to \eqref{ElmA} to exist with $\lambda,\mu\in[0,1]$  is the vanishing of the Dixon resultant $\det\bcM$ (see \cite{KapurSaxenaYang}), which is a polynomial in $\bal_0,\bal_1,\bal_2,\bet_1,\bet_2$ defined in terms of the scalar triple product 
\begin{align*}
\Delta(\lambda,\mu,\bar\lambda,\bar\mu)
:&=\det(\bE(\lambda,\mu),\bE(\bar\lambda,\mu),\bE(\bar\lambda,\bar\mu))\\
&=[\bE(\lambda,\mu),\bE(\bar\lambda,\mu),\bE(\bar\lambda,\bar\mu)]\\
&=[\bE(\lambda,\mu)-\bE(\bar\lambda,\mu),\bE(\bar\lambda,\mu)-\bE(\bar\lambda,\bar\mu),\bE(\bar\lambda,\bar\mu)].
\end{align*}
Since
\begin{align*}
\frac{\bE(\lambda,\mu)-\bE(\bar\lambda,\mu)}{\lambda-\bar\lambda}
=\bal_2(\lambda+\bar\lambda)+\bal_1, \qquad
\frac{\bE(\bar\lambda,\mu)-\bE(\bar\lambda,\bar\mu)}{\mu-\bar\mu}
=-\bet_2(\mu+\bar\mu)-\bet_1,
\end{align*}
we have that
\begin{align*}
\frac{\Delta(\lambda,\mu,\bar\lambda,\bar\mu)}{(\lambda-\bar\lambda)(\mu-\bar\mu)}
=[\bal_2(\lambda+\bar\lambda)+\bal_1, -\bet_2(\mu+\bar\mu)-\bet_1, \bE(\bar\lambda,\bar\mu)].
\end{align*}
Suppose that for some $(\lambda_0,\mu_0)\in [0,1]^2$ we have $\bE(\lambda_0,\mu_0)=\bo.$ Then $(\lambda_0-\bar\lambda)(\mu_0-\bar\mu)$ is a nonzero polynomial in $\bar\lambda$ and $\bar\mu.$ Since $\Delta(\lambda_0,\mu_0,\bar\lambda,\bar\mu)=0$ for all $\bar\lambda$ and $\bar\mu,$ and polynomials in several variables form an integral domain (see, for example, \cite[p305]{dummitfoote}), it follows that 
\[\frac{\Delta(\lambda_0,\mu_0,\bar\lambda,\bar\mu)}{(\lambda_0-\bar\lambda)(\mu_0-\bar\mu)}
=0 \quad\text{for all $\bar\lambda$ and $\bar\mu.$}\] 
This is a polynomial in $\bar\lambda$ and $\bar\mu$ and thus all its coefficients must be zero. Now
\begin{align*}
\frac{\Delta(\lambda_0,\mu_0,\bar\lambda,\bar\mu)}{(\lambda_0-\bar\lambda)(\mu_0-\bar\mu)}
&=[\bal_2(\lambda_0+\bar\lambda)+\bal_1, -\bet_2(\mu_0+\bar\mu)-\bet_1, \bal_0+\bar\lambda\bal_1+\bar\lambda^2\bal_2-\bar\mu\bet_1-\bar\mu^2\bet_2]\\
&=(1,\bar\lambda,\bar\mu,\bar\lambda\bar\mu)\bcM(1,\lambda_0,\mu_0,\lambda_0\mu_0)^T,
\end{align*}
where
\[\bcM=
\begin{pmatrix}
[\bal_1,-\bet_1,\bal_0] & [\bal_2,-\bet_1,\bal_0] & [\bal_1,-\bet_2,\bal_0] & [\bal_2,-\bet_2,\bal_0] \\
[\bal_2,-\bet_1,\bal_0] & [\bal_2,-\bet_1,\bal_1] & [\bal_2,-\bet_2,\bal_0] & [\bal_2,-\bet_2,\bal_1]\\
[\bal_1,-\bet_2,\bal_0] & [\bal_2,-\bet_2,\bal_0] & [\bal_1,-\bet_2,-\bet_1] & [\bal_2,-\bet_2,-\bet_1]\\
[\bal_2,-\bet_2,\bal_0] & [\bal_2,-\bet_2,\bal_1] & [\bal_2,-\bet_2,-\bet_1] & 0
\end{pmatrix}.\]
Hence, we obtain $\bcM(1,\lambda_0,\mu_0,\lambda_0\mu_0)^T=\bo$ and thus $\det\bcM=0.$

The following lemma allows us to express $\bcM$ directly as a function of $\bm$ instead of $\{\bm_1,\bm_2\}$.
\begin{lemma} \label{LemmaM}
Let
\begin{align*}
\ba&=(\bA\bm_1\cdot\bm_1,\bA\bm_2\cdot\bm_2,\bA\bm_2\cdot\bm_1)\\
\bb&=(\bB\bm_1\cdot\bm_1,\bB\bm_2\cdot\bm_2,\bB\bm_2\cdot\bm_1)\\
\bc&=(\bC\bm_1\cdot\bm_1,\bC\bm_2\cdot\bm_2,\bC\bm_2\cdot\bm_1)
\end{align*}
where $\bA,\bB,\bC\in\mathbb{R}^{3\times 3}$ are symmetric and $\{\bm_1,\bm_2,\bm\}$ is a basis for $\mathbb{R}^3,$ and let $\bM$ be the skew-symmetric matrix such that $\bM\bx:=\bm\wedge\bx,$ so that $\bM_{kj}:=\varepsilon_{ijk}\bm_k$. Then $$[\ba,\bb,\bc]=-\tr(\bA\bM\bB\bM\bC\bM).$$
\end{lemma}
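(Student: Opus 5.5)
We would first reduce to coordinates. In the application $\{\bm_1,\bm_2\}$ is an orthonormal basis of $\bm^\perp$, so we assume $\{\bm_1,\bm_2,\bm\}$ is a right-handed orthonormal basis. (For a general basis the identity cannot hold as written: replacing $\bm_1$ by $2\bm_1$ multiplies the left-hand side by $4$ but leaves the right-hand side unchanged. The orientation convention fixes the sign.) Write every matrix in this basis, so that $\bm_1=\be_1$, $\bm_2=\be_2$, $\bm=\be_3$.

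In this basis $\bM\be_1=\be_2$, $\bM\be_2=-\be_1$ and $\bM\be_3=\bzero$. Hence $\bM=\bJ\oplus 0$, where
\[
\bJ=\begin{pmatrix}0&-1\\1&0\end{pmatrix}
\]
acts on $\bm^\perp$. The right-hand side is a trace, so it is invariant under this orthogonal change of coordinates. Since the third row and column of $\bM$ vanish, we get
\[
\tr(\bA\bM\bB\bM\bC\bM)=\tr(\bA'\bJ\bB'\bJ\bC'\bJ),
\]
where $\bA',\bB',\bC'$ are the upper-left $2\times2$ blocks. These blocks are exactly the matrices of the quadratic forms $\bA\bx\cdot\bx$, $\bB\bx\cdot\bx$, $\bC\bx\cdot\bx$ restricted to $\bm^\perp$. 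The left-hand side depends on $\bA,\bB,\bC$ only through these blocks, via the coordinates $(X_{11},X_{22},X_{12})$.

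It remains to prove, for symmetric $2\times 2$ matrices,
\[
[\ba,\bb,\bc]=-\tr(\bA'\bJ\bB'\bJ\bC'\bJ),\qquad \ba=(A'_{11},A'_{22},A'_{12}),
\]
and similarly for $\bb$ and $\bc$. Both sides are trilinear in $(\bA',\bB',\bC')$, so it suffices to check the identity on the basis $\bE_1=\be_1\otimes\be_1$, $\bE_2=\be_2\otimes\be_2$, $\bE_3=\be_1\otimes\be_2+\be_2\otimes\be_1$ of symmetric $2\times2$ matrices. For basis matrices the left-hand side is $\varepsilon_{pqr}$.

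A convenient route for the check uses $\bJ\bB'\bJ=-\bJ\bB'\bJ^T=-\cof\bB'$ for symmetric $\bB'$. This gives
\[
-\tr(\bA'\bJ\bB'\bJ\bC'\bJ)=\tr(\bA'\,\cof\bB'\,\bC'\bJ),
\]
and $\tr(\bX\bJ)=X_{21}-X_{12}$ extracts the skew part of $\bX=\bA'\cof\bB'\bC'$. One then verifies the following.
\begin{itemize}
\item The trace vanishes whenever two arguments coincide, since $\bA'\cof\bA'\bC'=(\det\bA')\bC'$ is symmetric and $\bC'\cof\bB'\bC'$ is symmetric. The same holds for the pair $(\bA',\bB')$ by the analogous symmetric computation.
\item The form is therefore alternating, and it is enough to evaluate it at $(\bE_1,\bE_2,\bE_3)$ and compare with $1$.
\end{itemize}

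The only delicate points are the sign and orientation bookkeeping. These concern the convention $\bM_{kj}=\varepsilon_{ijk}m_k$, whose indices as printed appear garbled, so we use $\bM\bx=\bm\wedge\bx$. They also concern the ordering $(X_{11},X_{22},X_{12})$ of the components. We expect the single evaluation at $(\bE_1,\bE_2,\bE_3)$ to be where an error could slip in, so we would carry it out explicitly.
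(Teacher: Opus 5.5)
Your argument is correct in outline but takes a different route from the paper. The paper writes $\bM$ in the basis $\{\bm_1,\bm_2,\bm\}$ as the matrix with $-1$ in position $(1,2)$, $1$ in position $(2,1)$ and zeros elsewhere. It then expands $\tr(\bA\bM\bB\bM\bC\bM)$ for general symmetric $\bA,\bB,\bC$, each with six independent entries, and reads off the six surviving terms as $-[\ba,\bb,\bc]$.

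You instead isolate the $2\times 2$ blocks and rewrite the form as $\tr(\bA'\cof\bB'\,\bC'\boldsymbol J)$. You show it is alternating, using $\bA'\cof\bA'=(\det\bA')\bI$ and the symmetry of $\bC'\cof\bB'\,\bC'$, and then fix the constant by one evaluation. The paper's brute-force expansion is quicker to check mechanically. Yours explains why a triple product must appear, and it concentrates all the sign bookkeeping into a single evaluation.

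Your remark that the hypothesis should be that $\{\bm_1,\bm_2,\bm\}$ is a right-handed orthonormal basis is correct. The paper's proof uses exactly this, silently, when it writes $\bM$ in that form and identifies $\bA\bm_i\cdot\bm_j$ with matrix entries.

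There are three slips to fix.

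First, $\tr(\bX\boldsymbol J)=X_{12}J_{21}+X_{21}J_{12}=X_{12}-X_{21}$, not $X_{21}-X_{12}$. The final evaluation is precisely what determines the sign, so your formula would give $-1$ and wrongly suggest the lemma's sign is off. With the correct formula, $\cof\bE_2=\bE_1$, so $\bX=\bE_1\bE_1\bE_3=\be_1\otimes\be_2$. This gives $\tr(\bX\boldsymbol J)=1=[\ba,\bb,\bc]$, as required.

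Second, replacing $\bm_1$ by $2\bm_1$ sends each of $\ba,\bb,\bc$ to its image under $\diag(4,1,2)$. The left-hand side is therefore multiplied by $8$, not $4$; your conclusion stands.

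Third, the remaining coincidence to check is $\bB'=\bC'$, where $\bA'\cof\bB'\,\bB'=(\det\bB')\bA'$ is symmetric. The pair $(\bA',\bB')$ was already handled by your first computation.
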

\begin{proof}
In the $\{\bm_1,\bm_2,\bm\}$ basis the matrix 
\begin{equation*}
\bM=
\begin{pmatrix}
0 & -1 & 0 \\
1 & 0 & 0\\
0 & 0 & 0
\end{pmatrix}    
\end{equation*}
while the symmetric matrices $\bA,\bB,\bC\in\mathbb{R}^{3\times 3}$ are
\[\bA=
\begin{pmatrix}
a_1 & a_3 & a_4 \\
a_3 & a_2 & a_5\\
a_4 & a_5 & a_6
\end{pmatrix},
\:\bB=
\begin{pmatrix}
b_1 & b_3 & b_4 \\
b_3 & b_2 & b_5\\
b_4 & b_5 & b_6
\end{pmatrix},
\:\bC=
\begin{pmatrix}
c_1 & c_3 & c_4 \\
c_3 & c_2 & c_5\\
c_4 & c_5 & c_6
\end{pmatrix}.
\]
Then we calculate
\begin{align*}
\tr(\bA\bM\bB\bM\bC\bM)&=
c_3(a_3b_3-a_1b_2)+c_2(a_1b_3-a_3b_1)-c_1(a_2b_3-a_3b_2)-c_3(a_3b_3-a_2b_1)\\
&=-a_1b_2c_3-a_3b_1c_2+a_1b_3c_2-a_2b_3c_1+a_3b_2c_1+a_2b_1c_3\\
&=-[\ba,\bb,\bc].
\end{align*}
\end{proof}
Thus by\eqref{ElmA}, \eqref{Melements}
\begin{small}
\begin{align}
\label{M}
\bcM=
\setlength{\arraycolsep}{1pt}
\begin{pmatrix}
-\tr(\bA_1\bM\bB_1\bM\bA_0\bM) & -\tr(\bA_2\bM\bB_1\bM\bA_0\bM) & -\tr(\bA_1\bM\bB_2\bM\bA_0\bM) & -\tr(\bA_2\bM\bB_2\bM\bA_0\bM)\\
-\tr(\bA_2\bM\bB_1\bM\bA_0\bM) & -\tr(\bA_2\bM\bB_1\bM\bA_1\bM) & -\tr(\bA_2\bM\bB_2\bM\bA_0\bM) & -\tr(\bA_2\bM\bB_2\bM\bA_1\bM)\\
-\tr(\bA_1\bM\bB_2\bM\bA_0\bM) & -\tr(\bA_2\bM\bB_2\bM\bA_0\bM) & \tr(\bA_1\bM\bB_2\bM\bB_1\bM) & \tr(\bA_2\bM\bB_2\bM\bB_1\bM)\\
-\tr(\bA_2\bM\bB_2\bM\bA_0\bM) & -\tr(\bA_2\bM\bB_2\bM\bA_1\bM) & \tr(\bA_2\bM\bB_2\bM\bB_1\bM) & 0
\end{pmatrix}.
\end{align}
\end{small}

We now suppose that the transformation is cubic-to-tetragonal, with
\begin{align}\label{cubictetragonal}
\bU_1=\di(\eta_2,\eta_1,\eta_1), \quad \bU_2=\di(\eta_1,\eta_2,\eta_1), \quad \bU_3=\di(\eta_1,\eta_1,\eta_2),\quad \eta_1,\eta_2>0,\,\eta_1\neq\eta_2.
\end{align}
In order to prove that for general $\bm, \bR$ we cannot solve \eqref{gbcompat} we can without loss of generality assume that $i=1, k=2$ and that $\bU_1+\ba_1\otimes\bn_1=\bar\bQ\bU_2$ is one of the two twins connecting $\bU_1$ to $SO(3)\bU_2$. This is because for cubic-to tetragonal transformations every other twin  $\bU_i+\ba_i\otimes\bn_i=\hat\bQ\bU_k$ is such that
$$\bU_i=\tilde\bQ\bU_1\tilde\bQ^T,\,\ba_i=\tilde\bQ\ba_1,\,\bn_i=\tilde\bQ\bn_1, \,\bU_k=\tilde\bQ\bU_2\tilde\bQ^T$$
for some $\tilde\bQ\in P^{24}$, and therefore by multiplying \eqref{gbcompat} for $i=1$ on the left by $\tilde\bQ$ and on the right by $\tilde\bQ^T$ we have compatibility for $\bR$ replaced by  $\bR\tilde\bQ^T$ and $\bm$ replaced by $\tilde\bQ\bm$. In fact we will take below $\ba_1=\sqrt{2}\rho(-\eta_2,\eta_1,0)^T$, $\bn_1=\frac{1}{\sqrt{2}}(1,1,0)^T$, where $\rho:=\frac{\eta_2^2-\eta_1^2}{\eta_2^2+\eta_1^2}$.

On the other hand we need to take into account the different possibilities for the rank-one connections from $\bU_j$ to $SO(3)\bU_l$. There are six of these, two corresponding to $j=1$, $l=2$, two corresponding to $j=1$, $l=3$, and two corresponding to $j=2$, $l=3$. (The case $j=3$, $l=1$ is equivalent to $j=1$, $l=3$ since $\tilde\bQ\bU_1\tilde\bQ=\bU_3$ for the $180^\circ$ rotation $\tilde \bQ=-\bI+2\frac{\be_1-\be_3}{\sqrt{2}}\otimes\frac{\be_1-\be_3}{\sqrt{2}}$, and so in \eqref{gbcompat} we have that $\bQ(\bU_3+\ba_3\otimes\bn_3)\bR=\bQ\tilde\bQ(\bU_1+\tilde\bQ\ba_3\otimes\tilde\bQ\ba_3)\tilde\bQ\bR$ and we can replace $\bR$ by $\tilde\bQ\bR$. The cases $j=3$, $l=2$ and $j=2$, $l=1$ are similarly equivalent to $j=2$, $l=3$ and $j=1$, $l=2$ respectively.)
 
 The six rank-one connections are compound twins (i.e. both Type I and Type II) and are given in Table \ref{Table}. 
\begin{table}[hbt!]\label{Table}
$\begin{array}{c c c}
\hline
\text{Type} & \ba_j \:& \bn_j \\
\hline\\[-3mm]
\bU_1-\bU_2  &
\sqrt{2}\rho(-\eta_2, \,\pm\eta_1, \,0)^T &
\frac{1}{\sqrt{2}}(1,\pm 1,\,0)^T \\
\bU_1-\bU_3  &
\sqrt{2}\rho(-\eta_2, 0, \pm\eta_1)^T &
\frac{1}{\sqrt{2}}(1,\, 0,\pm 1)^T \\
\bU_2-\bU_3 &
\sqrt{2}\rho(0,\pm \eta_1,-\eta_2)^T &
\frac{1}{\sqrt{2}}(0,\pm 1, \,1)^T \\
\hline
\end{array}$\vspace{.05in}
\caption{Twins for cubic-to-tetragonal wells, where $\rho=\frac{\eta_2^2-\eta_1^2}{\eta_2^2+\eta_1^2}$.}
\end{table}
For each of the six possibilities we calculate  the determinant $\det\bcM_i$, $i=1,\dots,6,$ and take their product as follows
\begin{equation}
\label{Fprod}
\mathcal{F}(\bm,\bR)=\prod_{i=1}^6 \det\bcM_i(\bm,\bR).
\end{equation}
Note that for fixed $\eta_1,\eta_2$ each $\det\bcM_i(\bm,\bR)$ is a real analytic function defined on the 5-dimensional real analytic manifold $M:=S^2\times SO(3)$. Hence
 $\mathcal{F}(\bm,\bR)$ remains real analytic as a finite  product of real analytic functions.  

The following theorem shows that for general $(\bm,\bR)$ compatibility of zero-energy simple laminates across the grain boundary is impossible for any of the twinning systems. 
\begin{theorem}
\label{compatthm}
Let $\eta_1>0,\,\eta_2>0,\,\eta_1\neq\eta_2.$
The set of compatible interfaces for simple laminates
\[\mathcal{C}:=\{(\bm,\bR)\in S^2\times SO(3):\mathcal{F}=0\}\]
is closed and of measure zero (in particular nowhere dense). 
\end{theorem}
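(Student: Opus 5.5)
Closedness is immediate: $\mathcal F$ is continuous (indeed real analytic) on the compact manifold $M=S^2\times SO(3)$, so $\mathcal C=\mathcal F^{-1}(0)$ is closed. For measure zero I will use the standard fact about real analytic functions: if $f$ is real analytic on a connected real analytic manifold and not identically zero, then its zero set has measure zero. The proof is local. In a chart, if $f$ vanished on a set of positive measure, then at a density point all derivatives would vanish, by induction on dimension or Fubini. Hence $f$ would vanish on an open set, and by connectedness and the identity theorem $f\equiv 0$. Since $S^2$ and $SO(3)$ are connected, so is $M$. A closed set of measure zero has empty interior, which gives nowhere density. Thus everything reduces to showing $\mathcal F\not\equiv0$. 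Since $\mathcal F$ is a product of analytic functions on a connected manifold, and analytic functions there form an integral domain, it is enough to show that each factor $\det\bcM_i$, $i=1,\dots,6$, is not identically zero.

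To check this I first make the dependence explicit. Writing $\bm=(\sin\theta\cos\phi,\sin\theta\sin\phi,\cos\theta)$ and parametrizing $\bR$ by Rodrigues' formula \eqref{axisangle}, or by a rational (Cayley/quaternion) parametrization, each entry of $\bcM_i$ becomes a trace $\tr(\bX\bM\bY\bM\bZ\bM)$ by Lemma \ref{LemmaM}. These entries are polynomial in the components of $\bm$ and $\bR$, with $\bA_0,\bA_1,\bA_2,\bB_1,\bB_2$ given by \eqref{Melements}, Table \ref{Table}, $\ba_1$ and $\bn_1$. The data $\ba_j,\bn_j$ involve $\rho$, $\eta_1$ and $\eta_2$. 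So $\det\bcM_i$ is a polynomial in $(\bm,\bR,\eta_1,\eta_2)$.

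The main obstacle is proving non-vanishing for \emph{every} admissible pair $\eta_1\ne\eta_2$, not just generic ones. My first attempt is computer-assisted symbolic computation. For each $i$ I would pick an explicit rational point: take $\bm$ rational on $S^2$, such as $(3/13,4/13,12/13)$, and $\bR$ a rational rotation from a rational quaternion. At that point I evaluate $\det\bcM_i$ as a polynomial in $\eta_1,\eta_2$. I expect, after clearing the positive denominator coming from $\rho$, to get a polynomial that factors as a power of $(\eta_2^2-\eta_1^2)$ times a positive multiple of $\eta_1^a\eta_2^b$, times a remaining factor.

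If the remaining factor has real positive roots for some ratio $\eta_2/\eta_1$, I would use a second test point, chosen so that the two residual polynomials have no common positive root. I would verify this with a resultant or gcd computation. Then, for each $(\eta_1,\eta_2)$, at least one test point gives a nonzero value. This shows $\det\bcM_i\not\equiv0$ on $M$, hence $\mathcal F\not\equiv0$, and the analytic-zero-set lemma finishes the proof.
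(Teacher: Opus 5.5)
Your proposal is correct and follows essentially the paper's route. Closedness comes from continuity, and measure zero from the zero-set theorem for a not-identically-zero real analytic function on the connected manifold $M=S^2\times SO(3)$. Non-vanishing of $\mathcal F$ for every $\eta_1\neq\eta_2$ comes from symbolic evaluation at two explicit points $(\bm,\bR)$, with a gcd computation showing that the residual polynomials in $\eta_2$ (after setting $\eta_1=1$ by homogeneity) share no root. Your variations are cosmetic: you check each factor $\det\bcM_i$ separately, which the integral-domain property justifies, rather than the whole product, and you use rational test points to keep the arithmetic exact.
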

\begin{proof}
That $\mathcal{C}$ is closed follows immediately from the continuity of $F$. We recall that sets of measure zero are defined for any manifold, and that (see \cite[p125]{lee}) $\mathcal{C}$ is of measure zero if for some collection of real analytic charts $\{(U_\alpha, \varphi_\alpha)\}$ whose domains cover $M$ we have that $\varphi_\alpha(\mathcal{C}\cap U_\alpha)$ has measure zero in $\R^5$ for every $\alpha$. We apply the following result
\begin{proposition} \cite{mityagin} \label{Mityagin}
Let  $\mathcal{V}$ be a connected open subset of $\mathbb{R}^d$,  and $F:\mathcal{V}\to\R$ a real analytic function. If $F$ is not identically zero, then its zero set 
\[\mathcal{C}(F):=\{\bx\in\mathcal{V}:F(\bx)=0\}\]
has zero $d$-dimensional Lebesgue measure.
\end{proposition}
\noindent  to conclude that $\mathcal{C}$ has measure zero provided $\mathcal{F}:M\to\R$ is not identically zero, since the connectedness of $M$  and the identity theorem then implies that the real analytic function $F=\mathcal{F}\circ \varphi_\alpha^{-1}:\varphi_\alpha(U_\alpha)\to \R$ is not identically zero for every chart, and hence by Proposition \ref{Mityagin} $\varphi_\alpha(\mathcal{C}\cap U_\alpha)$ has measure zero.

To show that  $\mathcal{F}:M\to\R$ is not identically zero we may without loss of generality set $\eta_1=1$,  because the original problem of compatibility is homogeneous in the matrices $\bU_i$ and their entries. We perform a symbolic calculation in MATLAB to compute the product \eqref{Fprod}, which for $(\bm,\bR)\in S^2\times SO(3)$ is a polynomial in $\eta_2$.  Employing Lemma \ref{LemmaM}, the inputs are the matrix $\bM$ and the relative rotation $\bR$. Observe that for $\bm=(m_x,m_y,m_z)$ the matrix $\bM$ of Lemma \ref{LemmaM} is given by 
\begin{equation}\label{Mm}
\bM=
\begin{pmatrix}
0 & -m_z & m_y \\
m_z & 0 & -m_x\\
-m_y & m_x & 0
\end{pmatrix}.   
\end{equation}
For our computation we use $\bR=\bI+\sin\varphi\,\bK+(1-\cos\varphi)\,\bK^2,$ where 
\[\varphi=2\pi/3,\quad \mathbf{k} = (1/\sqrt{54},2/\sqrt{54},7/\sqrt{54})^T, \quad
\mathbf{K} = \begin{bmatrix} 
0 & -k_3 & k_2 \\ 
k_3 & 0 & -k_1 \\ 
-k_2 & k_1 & 0 
\end{bmatrix},\]
and $\bm=(1/\sqrt{14},2/\sqrt{14},3/\sqrt{14})^T.$
Each $\det\bcM_i$ is a polynomial of the form
\[\frac{a{\left({\eta_2}^2-1\right)}^{17}}{{\left(\eta_2^2+1\right)}^{\varkappa}}
P(\eta_2)\quad \text{where $\varkappa=8$ or $10,$ $P$ is a non-zero polynomial,  $a\neq 0$\: constant, }\]
and by taking the product we compute
\[\mathcal{F}(\bm,\bR)=\prod_{i=1}^6 \det\bcM_i
=\frac{a^{\prime}{\left(\eta_2^2-1\right)}^{102}}{{\left(\eta_2^2+1\right)}^{52}} 
\prod_{i=1}^6 P_i(\eta_2)\not\equiv 0,
\quad a^{\prime}\neq 0\:\:\text{constant,}\]
where $\ds\prod_{i=1}^6 P_i(\eta_2)$ is a non-zero polynomial of degree $28$. We repeat the calculation for a different set of inputs $(\bar\bM,\bar\bR)$, where 
$\bar\bR=\bI+\sin\varphi\,\bK+(1-\cos\varphi)\,\bK^2,$  
\[\varphi=\pi/4,\quad \mathbf{k} = (1/\sqrt{38},2/\sqrt{38},3/\sqrt{38})^T, \quad
\mathbf{K} = \begin{bmatrix} 
0 & k_3 & -k_2 \\ 
-k_3 & 0 & k_1 \\ 
k_2 & -k_1 & 0 
\end{bmatrix}\]
and $\bar\bM$ is given by \eqref{Mm} for $\bar\bm=(7/\sqrt{75},5/\sqrt{75},1/\sqrt{75})^T,$
to obtain a different polynomial 
\[\mathcal{F}(\bar\bm,\bar\bR)=\prod_{i=1}^6 \det\bar\bcM_i
=\frac{\bar a{\left(\eta_2^2-1\right)}^{102}}{{\left(\eta_2^2+1\right)}^{52}} 
\prod_{i=1}^6 \bar P_i(\eta_2)\not\equiv 0,
\quad \bar a\neq 0\:\:\text{constant.}\]
Here $\ds\prod_{i=1}^6 \bar P_i(\eta_2)$ is a non-zero polynomial of degree $28,$ so that both polynomials $\mathcal{F}(\bm,\bR)$ and $\mathcal{F}(\bar\bm,\bar\bR)$ are of the same degree. Next, to determine whether the polynomials
\begin{align}\label{Poly2}
P_1:=\prod_{i=1}^6 P_i(\eta_2) \quad\text{and}\quad P_2:=\prod_{i=1}^6 \bar P_i(\eta_2)
\end{align}
share a common root, we find the symbolic GCD (Greatest Common Divisor) polynomial between them. The GCD turns out to be a constant that does not depend on $\eta_2;$ for the set or parameters described above the GCD is $7.$ As a result, $P_1(\eta_2)$ and $P_2(\eta_2)$ do not share a common root, so that $\mathcal{F}$ is not identically zero for any $\eta_2$, completing the proof. 

We provide all the relevant code in Appendix \ref{Appendix}.
\end{proof}
For general transformations with $N>3$ variants, such as cubic-to-orthorhombic or cubic-to-monoclinic, we could in principle proceed in the same way. Then the $\bU_i$ depend on more than two deformation parameters and we can form the finite product $\tilde{\mathcal{F}}$ of all the $\det\mathcal {M}_i(\bm,\bR)$ for all possible twin systems. A symbolic computation with different choices of $(\bm,\bR)$ would then presumably show that $\tilde{\mathcal{F}}$ is zero only  for algebraic relations between the deformation parameters which could not be simultaneously satisfied, so that again we would have that for any deformation parameters the set of $(\bm,\bR)$ allowing compatibility of simple laminates is closed and of measure zero.   

However without carrying out such an analysis we can obtain a weaker conclusion as follows. Consider, for example, a cubic-to-orthorhombic transformation with deformation parameters $\alpha>0,\beta>0,\gamma>0$ and corresponding variants 
\begin{equation}
\begin{aligned}
\label{cubicortho}
&\bU_1 =\left( \begin{array}{ccc} \alpha & 0 & 0 \\ 0 & \frac{\beta + \gamma}{2} &
\frac{\beta - \gamma}{2} \\ 0 &
\frac{\beta - \gamma}{2} & \frac{\beta + \gamma}{2}
\end{array} \right),
&\bU_2 =
\left( \begin{array}{ccc} \alpha & 0 & 0 \\ 0 & \frac{\beta + \gamma}{2} &
\frac{\gamma - \beta}{2} \\ 0 &
\frac{\gamma - \beta}{2} & \frac{\beta + \gamma}{2}
\end{array} \right),
 &\;\;\;\bU_3 =
\left( \begin{array}{ccc} \frac{\beta + \gamma}{2} & 0 & \frac{\beta -
\gamma}{2} \\ 0 & \alpha & 0 \\
\frac{\beta - \gamma}{2} & 0 & \frac{\beta + \gamma}{2}
\end{array} \right), \\ 
&\bU_4 =
\left( \begin{array}{ccc} \frac{\beta + \gamma}{2} & 0 & \frac{\gamma -
\beta}{2} \\ 0 & \alpha & 0 \\
\frac{\gamma - \beta}{2} & 0 & \frac{\beta + \gamma}{2}
\end{array} \right),
 &\bU_5 =\left( \begin{array}{ccc} \frac{\beta + \gamma}{2} & \frac{\beta -
\gamma}{2}
& 0 \\ \frac{\beta - \gamma}{2} &
\frac{\beta + \gamma}{2} & 0 \\ 0 & 0 & \alpha
\end{array} \right),
&\;\;\;\bU_6 =\left( \begin{array}{ccc} \frac{\beta + \gamma}{2} & \frac{\gamma -
\beta}{2}
& 0 \\ \frac{\gamma - \beta}{2} &
\frac{\beta + \gamma}{2} & 0 \\ 0 & 0 & \alpha
\end{array} \right).
\end{aligned}
\end{equation}
\begin{theorem}
\label{c2o}
For a cubic-to orthorhombic transformation, for almost all  deformation parameters $(\alpha,\beta,\gamma)\in(0,\infty)^3$ the set of $(\bm,\bR)$ for which compatibility of zero-energy simple laminates holds is closed and of measure zero.
\end{theorem}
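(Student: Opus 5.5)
The plan is to reduce the statement to the cubic-to-tetragonal computation already done in Theorem \ref{compatthm}, using real analyticity in all eight variables $(\alpha,\beta,\gamma,\bm,\bR)$ together with Fubini's theorem.

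\textbf{Step 1: the function $\tilde{\mathcal F}$.} As in \eqref{Fprod}, I form the finite product
\[
\tilde{\mathcal F}(\alpha,\beta,\gamma,\bm,\bR)=\prod_i \det\bcM_i(\bm,\bR),
\]
taken over all twin systems $\bU_i+\ba_i\otimes\bn_i\in SO(3)\bU_k$ and $\bU_j+\ba_j\otimes\bn_j\in SO(3)\bU_l$ for the variants \eqref{cubicortho}. The Dixon resultant argument preceding Lemma \ref{LemmaM} then shows that compatibility at $(\bm,\bR)$ forces $\tilde{\mathcal F}=0$.

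\textbf{Step 2: analyticity.} I need $\tilde{\mathcal F}$ to be real analytic on the connected manifold $\mathcal P\times S^2\times SO(3)$, where $\mathcal P\subset(0,\infty)^3$ is a suitable connected open set of parameters. The twin data $\ba_i,\bn_i$ come from Theorem \ref{r1} and Corollary \ref{Cor}, and must be chosen to depend analytically on $(\alpha,\beta,\gamma)$.

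\begin{itemize}
\item The pairs $\bU_1,\bU_2$ (and the analogous pairs) are related by a $90^\circ$ rotation, so $\tr\bU_i^2=\tr\bU_j^2$. The eigenvectors of $\bU_i^2-\bU_j^2$ can then be computed explicitly, with algebraic and analytic dependence on the parameters, wherever the nonzero eigenvalues stay simple and the twins stay well defined.
\item I therefore take $\mathcal P$ to be $(0,\infty)^3$ minus the closed null set where $\alpha,\beta,\gamma$ are not distinct, or where some twin data degenerate (for example, where square roots in the twin formulas vanish). The complement of this set may be disconnected.
\item If so, I work on each connected component of $\mathcal P$ separately. If a branch of a square root is problematic, I replace each factor $\det\bcM_i$ by the product over both roots (both twins), which is symmetric under the branch swap and hence analytic.
\end{itemize}

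\textbf{Step 3: $\tilde{\mathcal F}\not\equiv0$ on each component.} This is the main obstacle. Since $\bU_i$ is not diagonal in cubic axes when $\beta\ne\gamma$, the specialization $\beta=\gamma$ gives the variants $\di(\alpha,\beta,\beta)$ and its permutations. This is exactly the tetragonal case \eqref{cubictetragonal} with $\eta_2=\alpha$ and $\eta_1=\beta$.

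\begin{itemize}
\item The orthorhombic twins should converge, as $\gamma\to\beta$, to tetragonal compound twins (some pairs merging, some giving $\ba=\bo$).
\item So the factors of $\tilde{\mathcal F}$ either tend to factors of $\mathcal F$ from Theorem \ref{compatthm}, which is nonzero at a point, or tend to degenerate factors that I must check separately are not identically zero. The degenerate ones (for example, $\ba_i=\bo$) could vanish identically, and that is the delicate point.
\end{itemize}

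My first attempt would therefore be to avoid the limit and simply evaluate $\tilde{\mathcal F}$ symbolically, as in the MATLAB computation of Theorem \ref{compatthm}, at one explicit rational point $(\alpha,\beta,\gamma,\bm,\bR)$ in each component of $\mathcal P$. A single nonzero value suffices by the identity theorem.

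\textbf{Step 4: conclusion.} Once $\tilde{\mathcal F}\not\equiv0$ on the connected $8$-dimensional analytic manifold $\mathcal P_c\times S^2\times SO(3)$, Proposition \ref{Mityagin}, applied in charts exactly as in the proof of Theorem \ref{compatthm}, shows that its zero set $Z$ has measure zero.

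Fubini's theorem, applied in product charts, then gives that for a.e. $(\alpha,\beta,\gamma)$ the slice $Z_{(\alpha,\beta,\gamma)}\subset S^2\times SO(3)$ has measure zero. The set of compatible $(\bm,\bR)$ is contained in this slice, so it has measure zero as well.

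Closedness of the compatible set follows from compactness. The conditions \eqref{gbcompat} with $\lambda,\mu\in[0,1]$ and $\bQ\in SO(3)$ have bounded $\bb$ and finitely many twin choices, so a limit of solutions along a convergent sequence $(\bm_r,\bR_r)$ is again a solution.
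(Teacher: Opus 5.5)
Your skeleton matches the paper's: the product $\tilde{\mathcal F}$ of Dixon determinants over all twin systems, real analyticity in $(\alpha,\beta,\gamma,\bm,\bR)$, Proposition \ref{Mityagin}, then Fubini. Your compactness argument also gives closedness of the actual compatible set, which the paper leaves implicit.

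You part company with the paper only in Step 3, and your distrust of the specialization $\beta=\gamma$ is justified. The paper gets $\tilde{\mathcal F}\not\equiv 0$ in one line, because $\beta=\gamma\neq\alpha$ is the cubic-to-tetragonal case. But every entry of $\bcM$ is either $0$ or a triple product containing one of $\bal_1,\bal_2$ and one of $\bet_1,\bet_2$. So $\bcM=\bo$ as soon as either twin has zero amplitude. At $\beta=\gamma$ the compound twins of the $90^\circ$ pairs $\{\bU_1,\bU_2\}$, $\{\bU_3,\bU_4\}$, $\{\bU_5,\bU_6\}$ have exactly zero amplitude. For instance $\bU_1+\ba\otimes\be_2\in SO(3)\bU_2$ with $\ba=\frac{\beta^2-\gamma^2}{\beta^2+\gamma^2}\bigl((\gamma-\beta)\be_2-(\beta+\gamma)\be_3\bigr)$. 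Hence $\tilde{\mathcal F}$ vanishes identically on $\{\beta=\gamma\}$. The tetragonal reduction only certifies the factors in which both twins come from non-$90^\circ$ pairs. Those factors do converge to tetragonal factors, which are $\not\equiv 0$ by Theorem \ref{compatthm} after the $P^{24}$ reductions made there.

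Your direct-evaluation route is therefore the sounder one, but as written it is only a plan. The claim $\tilde{\mathcal F}\not\equiv 0$ rests on a symbolic computation you have not performed, and until it returns a nonzero value that is the one missing piece of your proof.

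You can make that computation smaller in two ways.
\begin{itemize}
\item Real-analytic functions on a connected set form an integral domain, so each factor can be treated separately. It suffices to show, by evaluation at a single point, that each factor containing a compound twin of a $90^\circ$ pair is nonzero there. The remaining factors are covered by the $\beta=\gamma$ limit.
\item Deleting the set where $\alpha,\beta,\gamma$ are not distinct is more than you need. The compound-twin data are rational in $(\beta,\gamma)$, as the formula above illustrates. The other twin data are analytic wherever the two wells differ, once you take the product over both twins of a pair, as $\tilde{\mathcal F}$ does. So $\tilde{\mathcal F}$ is analytic on the connected set where $(\alpha,\beta,\gamma)$ avoids the diagonal $\alpha=\beta=\gamma$. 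You then need one evaluation point rather than one per ordering of $\alpha,\beta,\gamma$.
\end{itemize}
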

\begin{proof}
$\tilde{\mathcal{F}}$  is a real analytic function of $(\alpha,\beta,\gamma,\bm,\bR)$ on $(0,\infty)^3\times M$, which is not identically zero since when $\beta=\gamma\neq\alpha$ we are in the cubic-to-tetragonal case studied above. Using Proposition \ref{Mityagin} we deduce that the set 
$$\tilde{\mathcal{C}}:=\{(\alpha,\beta,\gamma,\bm,\bR)\in (0,\infty)^3\times M:\tilde{\mathcal{F}}(\alpha,\beta,\gamma,\bm,\bR)=0\}$$ is of measure zero in the general sense of sets of measure zero for manifolds described at the beginning of the proof of Theorem \ref{compatthm}. Equivalently (see, for example, \cite[Proposition 16.5]{lee}), $\tilde{\mathcal{C}}$ has measure zero with respect to the product measure $\mathcal{L}^3\times\mathcal{H}^2\times\mu$ on $(0,\infty)^3\times M$, where $\mathcal{L}^3$ denotes 3-dimensional Lebesgue measure, $\mathcal{H}^2$ 2-dimensional Hausdorff measure on $S^2$ and $\mu$ Haar measure on $SO(3)$.   The result then follows from Fubini's Theorem.
\end{proof}

\subsection{Higher order laminates}\label{3.4}
For $m\geq 1$ let us define
\begin{equation}
\label{Emdef}
E_m=\{\bar\nu: \nu=\sum_{i=1}^m\lambda_i\delta_{\bX_i} \text{ a homogeneous GYM with }  \bX_i\in K, \lambda_i\geq 0, \sum_{i=1}^m\lambda_i=1\}.
\end{equation}
Note that (see e.g. \cite{j32})  by the weak continuity of minors 
$$E_2=\{\lambda\bX_1+(1-\lambda)\bX_2; \bX_1,\bX_2\in K, \rank(\bX_1-\bX_2)\leq 1, \lambda\in[0,1]\},$$
so that the problem studied in Section \ref{3.3} is whether for given $\bm,\bR$ we can solve
\begin{equation}
\label{E2E2}
\bA-\bB\bR=\bb\otimes\bm, 
\end{equation}
for $\bA,\bB\in E_2$ and $\bb\in\R^3$. As we saw, we cannot in general do this because we have nine equations for eight unknowns. Therefore it is natural to conjecture that by adding one additional volume fraction we will have the same number of unknowns and equations and be able to achieve compatibility. More generally, we conjecture that for generic $\bm,\bR$ we can solve \eqref{E2E2} for $\bA\in E_r,\bB\in E_s$ and $\bb\in\R^3$ if $r+s\geq 5$. This would imply, for example, that compatibility can be achieved with a simple laminate on one side of a planar grain boundary and a double laminate on the other side.

\section{The Taylor  set.}
\label{Taylor}
\subsection{Definition and properties}
\label{defprops}
Motivated by the definition given in \cite{bhattkohn96,bhattkohn97} for the geometrically linearized theory, and following \cite{m17}, we define the {\it Taylor  set}
$$\mS(K)=\bigcap_{\bR\in SO(3)}K^{\rm qc}\bR.$$
Note that since for $\bR\in SO(3)$ we have $(K\bR)^{qc}=K^{\rm qc}\bR$  (for more general statements in the same spirit see \cite{u5}), it follows that $\mS(K)$ is the intersection of quasiconvex sets and so is itself quasiconvex. Furthermore, since $\bQ K^{\rm qc}=(\bQ K)^{\rm qc}=K^{\rm qc}$ for all $\bQ\in SO(3)$, $\mS(K)$ is {\it isotropic}, that is $SO(3)\,\mS(K)\,SO(3)=\mS(K)$. Therefore
\begin{equation}
\label{deltadef}
\mathcal S(K)=SO(3)\Delta(K) SO(3), \text{ where  }
\Delta(K):=\{\bD\in\mS(K):\bD=\diag(v_1,v_2,v_3)>0\}.
\end{equation}
 The isotropy also implies that $\mS(K)$ is the zero set of a nonnegative isotropic quasiconvex function $\psi$; for example (see \cite{zhang1997,zhang98}) we can take $\psi$ to be the quasiconvexification of the isotropic  function $f(\bA):=\dist^2(\bA,\mS(K))$. 

Note that $K^{\rm qc}$ is not isotropic for $N>1$, since  there then exists $\bQ\in P^{24}$ with $\bQ\bU_1\bQ^T\neq \bU_1$, and thus $\bU_1\bQ\not\in K$. If $K^{\rm qc}$ were isotropic then we would have $\bU_1\bQ\in K^{\rm qc}$, and hence $\bU_1\bQ\in K^{\rm c}$. But then $\bU_1\bQ=\bar\nu$ for a probability measure $\nu$ with $\supp\nu\subset K$, and therefore $\int_K|\bU_1\bQ-\bF|^2d\nu(\bF)=2(|\bU_1|^2-\bU_1\bQ\cdot\bar\nu)=0$, so that $\bU_1\bQ\in K$, a contradiction.

Since by the minors relations $\det \bF=\det \bU_i$ for all $\bF\in K^{\rm qc}$ it follows that $\det\bF=\det\bU_i$ for all $\bF\in \mS(K)$.
Also, since we have assumed that the austenite has cubic symmetry, by the result of Bhattacharya \cite{bhattacharya92} $(\det \bU_i)^\frac{1}{3}\bI\in K^{\rm qc}$. Hence $(\det \bU_i)^\frac{1}{3}\bQ\in K^{\rm qc}$ for all $\bQ\in SO(3)$ and therefore $(\det \bU_i)^\frac{1}{3}SO(3)\subset\mS(K)$. In particular $\mS(K)$ is nonempty.  Furthermore, if $N>1$, that is if $\bU_i$ is not a dilatation,  $K^{\rm qc}$ always contains a nontrivial set of tetragonal wells \cite{bhattacharya92,j64}, and so by the result of Dolzmann \& Kirchheim \cite{DoKir} considered in more detail below, an open neighbourhood of $(\det \bU_i)^\frac{1}{3}SO(3)$ in $\{\bA\in \R^{3\times 3}:\det\bA=\det\bU_i\}\subset\mS(K)$.

The following easy result gives an upper bound for $\mS(K)$.
\begin{proposition}
\label{upperbound}
$\mS(K)\subset SO(3){\mathcal D(K)}SO(3),$
where $$\mathcal D(K):=\{\bD\in K^{\rm qc}: \bD=\diag(v_1,v_2,v_3)>0\}.$$
\end{proposition}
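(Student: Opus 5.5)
We prove the inclusion using the decomposition \eqref{deltadef}, $\mS(K)=SO(3)\Delta(K)SO(3)$. It then suffices to show $\Delta(K)\subset\mathcal D(K)$.

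So let $\bD=\diag(v_1,v_2,v_3)>0$ belong to $\Delta(K)\subset\mS(K)$. By the definition of the Taylor set,
$$\bD\in\bigcap_{\bR\in SO(3)}K^{\rm qc}\bR.$$
Taking $\bR=\bI$ gives $\bD\in K^{\rm qc}$. Since $\bD$ is also positive diagonal, $\bD\in\mathcal D(K)$. Hence $\Delta(K)\subset\mathcal D(K)$, and therefore
$$\mS(K)=SO(3)\Delta(K)SO(3)\subset SO(3)\mathcal D(K)SO(3).$$

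For completeness, the decomposition \eqref{deltadef} holds for the following reason. Let $\bF\in\mS(K)$. Then $\det\bF=\det\bU_i>0$ by the minors relations, so $\bF$ has a singular value decomposition $\bF=\bQ_1\bD\bQ_2$ with $\bQ_1,\bQ_2\in SO(3)$ and $\bD$ positive diagonal. Here the positivity of $\det\bF$ is what allows both factors to be taken in $SO(3)$ and all entries of $\bD$ to be positive. Isotropy of $\mS(K)$ then gives $\bD=\bQ_1^T\bF\bQ_2^T\in\mS(K)$, so $\bD\in\Delta(K)$.

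There is no real obstacle. The only point needing care is the sign issue in the singular value decomposition, which $\det\bF>0$ settles.
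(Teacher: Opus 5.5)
Your proof is correct and is essentially the paper's argument. The paper writes $\bA=\bQ\bD\bR$ and uses $\bA\in K^{\rm qc}\bR$ together with left $SO(3)$-invariance of $K^{\rm qc}$ to get $\bD\in K^{\rm qc}$; you instead first move $\bD$ into $\mS(K)$ by isotropy and then take $\bR=\bI$, which amounts to the same thing. Your explicit use of $\det\bF=\det\bU_i>0$ to put both singular-value factors in $SO(3)$ with $\bD>0$ makes precise a point the paper leaves implicit.
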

\begin{proof}
Let $\bA\in \mS(K)$. Then $\bA=\bQ\bD\bR$ for some $\bQ,\bR\in SO(3)$ and diagonal $\bD>0$. Since $\mS(K)\subset K^{\rm qc}\bR$, we have that $\bQ\bD\in K^{\rm qc}$ and hence $\bD\in K^{\rm qc}$, giving the result.
\end{proof}

The importance of the Taylor set lies in the fact that macroscopic deformations $\by$ with $D\by(\bx)\in \mS(K)$ a.e. correspond to zero energy microstructures  for arbitrary grain geometry and grain rotations.
\begin{theorem}
\label{Sprop}
Let $\by\in W^{1,\infty}(\Omega,\R^3)$ satisfy $D\by(\bx)\in\mS(K)$ for a.e. $\bx\in\Omega$. Then there exists a bounded sequence $\by^{(j)}\in W^{1,\infty}(\Omega,\R^3)$ such that $D\by^{(j)}$ generates a GYM $(\nu_{\bx})_{\bx\in\Omega}$ with $\supp\nu_{\bx}\subset K$  and $\bar\nu_{\bx}=D\by(\bx)$ for a.e. $\bx\in\om$.
\end{theorem}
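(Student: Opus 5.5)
The plan is to build the Young measure pointwise from the characterization of $K^{\rm qc}$ by homogeneous gradient Young measures, select it measurably in $\bx$, and then use the Kinderlehrer--Pedregal characterization \cite[Theorem 6.1]{kinderlehrerpedregal91} to conclude that the resulting family is a GYM. This is the same characterization already used in Section \ref{3.1}.

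First, the pointwise step. For a.e. $\bx$ we have $D\by(\bx)\in\mS(K)$. Taking $\bR=\bI$ in the definition of $\mS(K)$ gives $D\by(\bx)\in K^{\rm qc}$. Since $K$ is compact, the alternative characterization of $K^{\rm qc}$ recalled in Section \ref{singlecrystal} yields a homogeneous GYM $\nu$ with $\supp\nu\subset K$ and $\bar\nu=D\by(\bx)$.

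For the polycrystalline reading of the statement, where the support should lie in $K\bR_j^T$ on $\om_j$, we use that $\mS(K)\subset K^{\rm qc}\bR_j^T$. Hence $D\by(\bx)\bR_j\in K^{\rm qc}$, and we take the push-forward of a homogeneous GYM supported in $K$ under $\bF\mapsto\bF\bR_j^T$. This push-forward is again a homogeneous GYM, since right multiplication by a constant rotation maps gradients to gradients. This is the only place where the defining property of the Taylor set, rather than just $K^{\rm qc}$, is used: the argument works for every grain geometry and every choice of the $\bR_j$.

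Second, the measurable selection. Let $\mathcal G\subset P(K)$ be the set of homogeneous GYMs supported in $K$. By the Kinderlehrer--Pedregal characterization, $\mathcal G$ equals the set of probability measures on $K$ satisfying $\langle\nu,\varphi\rangle\ge\varphi(\bar\nu)$ for all quasiconvex $\varphi$. It is therefore weak$*$ closed, convex and compact. Consider the multifunction
$$\Gamma(\bA):=\{\nu\in\mathcal G:\bar\nu=\bA\},\qquad \bA\in K^{\rm qc}.$$
It has nonempty compact values and a closed graph, because the barycenter depends weak$*$ continuously on $\nu$ when the supports lie in a fixed compact set. A closed-graph multifunction into a compact metric space is upper semicontinuous, hence Borel measurable. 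The Kuratowski--Ryll-Nardzewski selection theorem then gives a Borel selection $\sigma:K^{\rm qc}\to\mathcal G$, and we set $\nu_{\bx}:=\sigma(D\by(\bx))$. On grain $\om_j$ in the polycrystalline version we instead take the rotated selection. The map $\bx\mapsto\nu_{\bx}$ is then weak$*$ measurable.

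Third, the conclusion. The family $(\nu_{\bx})$ satisfies three properties:
\begin{itemize}
\item its supports lie in a fixed compact set;
\item $\langle\nu_{\bx},\varphi\rangle\ge\varphi(\bar\nu_{\bx})$ for all quasiconvex $\varphi$ and a.e. $\bx$;
\item $\bar\nu_{\bx}=D\by(\bx)$ is a gradient.
\end{itemize}
Hence by \cite[Theorem 6.1]{kinderlehrerpedregal91} it is a GYM. That theorem also supplies a generating sequence $D\by^{(j)}$ bounded in $L^\infty$. After adjusting the boundary values so that $\by^{(j)}-\by\to 0$ uniformly, normalized for instance to agree with $\by$ at a point, the sequence $\by^{(j)}$ is bounded in $W^{1,\infty}$.

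I expect the main obstacle to be the measurability step, and in particular justifying that $\mathcal G$ is weak$*$ closed. If one prefers to avoid selection theorems, the fallback is to approximate $D\by$ by gradients of piecewise affine maps with values near $\mS(K)$, apply the homogeneous construction cell by cell, and conclude by a diagonal argument. This fallback requires stability of $\mS(K)$ under small perturbations, so the selection route seems cleaner.
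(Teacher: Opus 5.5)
Your proof is correct, but it is organized differently from the paper's.

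The paper works grain by grain. On each $\om_i$ it invokes \cite[Lemma 4.13]{rindler2018} to get a bounded sequence $\by^{(j)}_i$ with $\by^{(j)}_i=\by$ on $\partial\om_i$ that generates a GYM with centre of mass $D\by$. It then glues these sequences, and the matching traces are what make the glued maps Lipschitz on $\om$.

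You instead build the Young measure on all of $\om$ at once. You take pointwise homogeneous GYMs, make a Borel selection via Kuratowski--Ryll-Nardzewski (rotated on each grain), and apply \cite[Theorem 6.1]{kinderlehrerpedregal91} globally. No gluing is needed, because the Kinderlehrer--Pedregal conditions are pointwise apart from $\bar\nu_{\bx}$ being a gradient, and $D\by$ is one.

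Each route buys something:
\begin{itemize}
\item Yours makes explicit the measurability step that the paper's proof takes for granted when it speaks of the GYM $(\nu_{\bx})_{\bx\in\om_i}$. It also shows that grain geometry enters only through the choice of $\bR_i$ in the selection.
\item The paper's gives generating sequences that leave $\by$ unchanged on every grain boundary, so the microstructure in each grain can be built independently of the others.
\end{itemize}

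There are two small inaccuracies, neither of which your argument actually uses:
\begin{itemize}
\item The set $\mathcal G$ of all homogeneous GYMs supported in $K$ is not convex. For example, $\frac12(\delta_{\bA}+\delta_{\bB})$ with $\bA,\bB\in K$ and $\rank(\bA-\bB)>1$ is not a GYM, by the minors relations. Each fibre $\Gamma(\bA)$ is convex, and only compactness and the closed graph matter for your selection.
\item Right multiplication by $\bR_j$ does not map gradient fields to gradient fields. The push-forward is nevertheless a homogeneous GYM, because $\varphi(\cdot\,\bR_j)$ is quasiconvex whenever $\varphi$ is. Equivalently, use the change of variables $\bz\mapsto\bR_j^T\bz$ on the domain.
\end{itemize}
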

\begin{proof}
By \cite[Lemma 4.13]{rindler2018}, for each grain $\om_i$ there is a bounded sequence $\by^{(j)}_i\in W^{1,\infty}(\om_i,\R^3)$  satisfying the boundary condition $\left.\by^{(j)}_i\right|_{\partial\om_i}=\by$ that generates the gradient Young measure $(\nu_{\bx})_{\bx\in\om_i}$ with $\bar\nu_{\bx}=D\by(\bx)$ for a.e. $\bx\in\om_i$. Then the sequence $\by^{(j)}(\bx):=\by_i^{(j)}(\bx)$ for $\bx\in\om_i$ has the required properties.
\end{proof}
\begin{remark} \rm A more satisfactory statement of Theorem \ref{Sprop} would be that if in addition $\by$ is invertible, then the $\by^{(j)}$ can also be chosen to be invertible, but it is not clear how to prove this.
\end{remark}
\subsection{The two-well problem.}
The only case when $K^{\rm qc}$ is known is for two energy wells. It is well known that this cannot occur for cubic austenite with transformation strain $\bU$ (because it would imply that the subgroup of $P^{24}$ consisting of the $\bQ\in P^{24}$ such that $\bQ^T\bU\bQ=\bU$ has order 12, and so is the tetragonal group consisting of the identity, $180^\circ$ rotations about the cubic axes and $120^\circ$ rotations about the cube diagonals, from which it follows that $\bU$ is a multiple of the identity). However it can occur for tetragonal-to-orthorhombic transformations or special orthorhombic-to-monoclinic transformations.
 
In the two-well problem we can take
\begin{equation}
\label{K2}
K=SO(3)\bU_1\cup SO(3)\bU_2,
\end{equation}
where 
\begin{align*}
\bU_1=\di(\eta_1,\eta_2,\eta_3), \quad \bU_2=\di(\eta_2,\eta_1,\eta_3), \quad \text{and} \quad  \eta_2>\eta_1>0,\:\:\eta_3>0
\end{align*}
and $K^{\rm qc}$ consists of matrices $F\in GL^+(3)$ such that
\begin{align*}
\bF^T\bF=\begin{pmatrix}
a & c & 0\\
c &  b & 0\\
0 & 0 & \eta_3^2
\end{pmatrix},
\quad \text{where} \:\:\: a>0,\:\:\: b>0,\:\:\: a+b+2|c|\leq\eta_1^2+\eta_2^2,\:\:\: ab-c^2=\eta_1^2\eta_2^2.
\end{align*}
Observe that for $c=0$ the above becomes the set of diagonal matrices such that
\begin{align}
\label{BALL-JAMES}
\bF^T\bF=\begin{pmatrix}
a & 0 & 0\\
0 &  b & 0\\
0 & 0 & \eta_3^2
\end{pmatrix},
\quad \text{where} \:\:\: a>0,\:\:\: b>0,\:\:\: a+b\leq\eta_1^2+\eta_2^2,\:\:\: ab=\eta_1^2\eta_2^2.
\end{align}
\begin{theorem}
\label{2DSK}
Let $K$ be as in \eqref{K2}. Then
\begin{align}
\mathcal{S}(K)=\begin{cases}
            \emptyset    & \quad \text{if }\: \eta_3\neq\sqrt{\eta_1\eta_2},\\
          \eta_3  SO(3)  & \quad \text{if }\: \eta_3=\sqrt{\eta_1\eta_2}.
            \end{cases}    
\end{align}
\end{theorem}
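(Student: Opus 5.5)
The plan is to work entirely with the Cauchy–Green tensor $\bC=\bF^T\bF$, using the explicit characterization of $K^{\rm qc}$ for the two-well problem quoted just before the statement. By definition $\bF\in\mS(K)$ iff $\bF\bR\in K^{\rm qc}$ for every $\bR\in SO(3)$. Since $(\bF\bR)^T(\bF\bR)=\bR^T\bC\bR$, this says: $\bF\in GL^+(3)$ and for every $\bR\in SO(3)$ the matrix $\bR^T\bC\bR$ has the block form
\[\begin{pmatrix} a & c & 0\\ c & b & 0\\ 0&0&\eta_3^2\end{pmatrix},\qquad a,b>0,\; a+b+2|c|\leq\eta_1^2+\eta_2^2,\; ab-c^2=\eta_1^2\eta_2^2 .\]
(Here $a,b,c$ depend on $\bR$.)

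\emph{Necessity.} Look first at the $(3,3)$ entry. It gives $\be_3\cdot\bR^T\bC\bR\be_3=\eta_3^2$ for all $\bR$. As $\bR\be_3$ ranges over all of $S^2$, we get $\be\cdot\bC\be=\eta_3^2$ for every unit vector $\be$. Since $\bC$ is symmetric, polarization then forces $\bC=\eta_3^2\bI$. Taking $\bR=\bI$, we have $a=b=\eta_3^2$ and $c=0$. The determinant-type constraint $ab-c^2=\eta_1^2\eta_2^2$ becomes $\eta_3^4=\eta_1^2\eta_2^2$, i.e.\ $\eta_3=\sqrt{\eta_1\eta_2}$. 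Hence $\mS(K)=\emptyset$ unless $\eta_3=\sqrt{\eta_1\eta_2}$. In that case any $\bF\in\mS(K)$ satisfies $\bF^T\bF=\eta_3^2\bI$ with $\det\bF>0$, so $\bF\in\eta_3SO(3)$.

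\emph{Sufficiency.} Suppose $\eta_3=\sqrt{\eta_1\eta_2}$ and $\bF=\eta_3\bQ$ with $\bQ\in SO(3)$. For any $\bR\in SO(3)$ we have $\det(\bF\bR)>0$ and $(\bF\bR)^T\bF\bR=\eta_3^2\bI$. This has the required form with $a=b=\eta_3^2=\eta_1\eta_2$ and $c=0$, and the constraints hold:
\begin{itemize}
\item $ab-c^2=\eta_1^2\eta_2^2$;
\item $a+b=2\eta_1\eta_2\leq\eta_1^2+\eta_2^2$, by the arithmetic–geometric mean inequality.
\end{itemize}
So $\bF\bR\in K^{\rm qc}$ for all $\bR$, i.e.\ $\bF\in\mS(K)$.

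There is no real obstacle; the only points needing care are the following.
\begin{itemize}
\item The rotations $\bR$ must be allowed to act so that $\bR\be_3$ covers the whole sphere. This is immediate since $\bR$ is arbitrary in $SO(3)$.
\item The quoted description of $K^{\rm qc}$ must be used as a genuine characterization, in both directions, of membership via $\bF^T\bF$ together with $\det\bF>0$.
\end{itemize}
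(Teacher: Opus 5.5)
Your proof is correct and follows essentially the same route as the paper. Both arguments use the explicit $\bF^T\bF$ characterization of $K^{\rm qc}$ for two wells and note that the $(3,3)$ entry must equal $\eta_3^2$ in every rotated frame, so that $\bF^T\bF=\eta_3^2\bI$; then $ab-c^2=\eta_1^2\eta_2^2$ forces $\eta_3^2=\eta_1\eta_2$, and the converse follows from $2\eta_1\eta_2\le\eta_1^2+\eta_2^2$. The only cosmetic difference is that the paper first reduces to a diagonal $\bD\in\mS(K)$ via isotropy and tests three specific rotations, whereas you let $\bR\be_3$ sweep $S^2$ and polarize, which avoids the diagonalization step.
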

\begin{proof}
Let $\bF\in\mS(K)$.  Then $\bF=\bQ\bD\bQ'$ for $\bQ,\bQ'\in SO(3)$, and $\bD=\di(d_1,d_2,d_3)>0$. Since $SO(3)\,\mS(K)\, SO(3)=\mS(K)$ we have that $\bD\in\mS(K)$, and thus $\bR\bD\bR^T\in\mS(K)\subset K^{\rm qc}$  for any $\bR\in SO(3)$, so that  there exist $a>0,b>0,c$ (depending on $\bR$) with $ab-c^2=\eta_1^2\eta_2^2,$ $a+b+|2c|\leq\eta_1^2+\eta_2^2$ and
\begin{equation}
\label{abc}
\begin{pmatrix}
a & c & 0\\
c &  b & 0\\
0 & 0 & \eta_3^2
\end{pmatrix}
=\bR
\begin{pmatrix}
d_1^2 & 0 & 0\\
0 &  d_2^2 & 0\\
0 & 0 & d_3^2
\end{pmatrix}
\bR^T.
\end{equation}
Choosing $\bR=\bI$ we get that $d_3=\eta_3$. Choosing $$\bR=\left(\begin{array}{ccc}-1&0&0\\0&0&1\\0&1&0\end{array}\right)$$ we get that $d_2=\eta_3$, and similarly $d_1=\eta_3$. 
Hence both sides of \eqref{abc} equal $\eta_3^2\bI$ so that we must have $a=b=\eta_3^2,$ $c=0$. Thus $\eta_3=\sqrt{\eta_1\eta_2}$ when since $2\eta_3^2+0\leq\eta_1^2+\eta_2^2$ we obtain $\eta_3\bI\in K^{\rm qc}$ and $\mS(K)=\eta_3SO(3)$.
\end{proof}
\begin{remark} \rm As proved in Theorem \ref{Sprop}, deformations with $D\by(\bx)\in\mS(K)$ a.e. can be realized by underlying zero energy microstructures. However 
 there could be additional zero energy microstructures corresponding to different macroscopic gradients for particular grain geometries and rotations.
\end{remark}
Now we consider the set
\begin{align*}
\mathcal{S}_{2D}(K)=\bigcap_{\substack{\bR\in SO_{2D}(3)}} K^{\rm qc}\bR,
\end{align*}
where $SO_{2D}(3):=\{\bR\in SO(3): \bR\be_3=\be_3\}$, 
corresponding to a 2D situation in which the grain rotations $\bR$ all have axis $\be_3$.
The following result is closely related to work of Kohn \& Niethammer \cite{kohnniethammer00} and Dolzmann \cite{dolzmann03} (we note that while  the relevant Theorem 2.4.1 in \cite{dolzmann03} is clearly motivated and the proof  conceptually correct the expression for the set corresponding to $\mathcal{S}_{2D}(K)$ contains an algebraic error).
\begin{theorem}
\label{2DSK1}
\begin{align*}
\mathcal{S}_{2D}(K)
&=\Bigl\{\bF=\bR\bD\tilde\bR: \bR\in SO(3),\tilde\bR\in SO_{2D}(3), \bD=\diag(v_1,v_2,\eta_3), \\
& \hspace{2.5in}v_1v_2=\eta_1\eta_2,\; 0<v_i\leq\sqrt{\frac{\eta_1^2+\eta_2^2}{2}}, i=1,2\Bigr\}
\end{align*}
\end{theorem}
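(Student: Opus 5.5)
The plan is to reduce membership in $\mathcal{S}_{2D}(K)$ to a condition on the Cauchy--Green tensor $\bC:=\bF^T\bF$. We use the known characterization of $K^{\rm qc}$ for the two-well problem \eqref{K2}: $\bG\in K^{\rm qc}$ iff $\bG\in GL^+(3)$ and $\bG^T\bG$ has the block form with entries $a,b,c,\eta_3^2$, where $a,b>0$, $ab-c^2=\eta_1^2\eta_2^2$ and $a+b+2|c|\le \eta_1^2+\eta_2^2$. Since $\bF\in K^{\rm qc}\bR$ iff $\bF\bR^T\in K^{\rm qc}$, and $(\bF\bR^T)^T(\bF\bR^T)=\bR\bC\bR^T$, we get the following: $\bF\in\mathcal{S}_{2D}(K)$ iff $\det\bF>0$ and, for every $\bR\in SO_{2D}(3)$, the matrix $\bR\bC\bR^T$ satisfies these conditions.

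We then analyse the conditions for $\bR\in SO_{2D}(3)$ one at a time.

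\emph{The third row and column.} Because $\bR\be_3=\be_3$, the requirement that the third row and column of $\bR\bC\bR^T$ equal $(0,0,\eta_3^2)$ is independent of $\bR$. It is simply $\bC\be_3=\eta_3^2\be_3$. Writing $\bC'$ for the upper-left $2\times 2$ block of $\bC$, conjugation by $\bR$ then acts on $\bC'$ as conjugation by a planar rotation $\bR'(\theta)$.

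\emph{The determinant condition.} The condition $ab-c^2=\det\bC'=\eta_1^2\eta_2^2$ is rotation invariant.

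\emph{The inequality.} Here $a+b=\tr\bC'$ is also invariant, so only $|c|$ varies with $\theta$. Let $\mu_1\ge\mu_2>0$ be the eigenvalues of $\bC'$ (positive since $\bC>0$). As $\theta$ ranges, the off-diagonal entry of $\bR'\bC'\bR'^T$ is $\tfrac12(\mu_1-\mu_2)\sin 2(\theta-\theta_0)$ for some $\theta_0$. Hence $\max_\theta|c|=\tfrac12(\mu_1-\mu_2)$, and the family of inequalities collapses to $\mu_1+\mu_2+(\mu_1-\mu_2)=2\mu_1\le\eta_1^2+\eta_2^2$. This maximisation is the one genuinely non-formal step, although it is elementary. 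Positivity of $a,b$ is automatic from $\bC'>0$.

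Next we diagonalize. Choose $\tilde\bR\in SO_{2D}(3)$ with $\tilde\bR\bC\tilde\bR^T=\diag(v_1^2,v_2^2,\eta_3^2)$, where $v_i=\sqrt{\mu_i}>0$. The conditions become $v_1v_2=\eta_1\eta_2$ and $v_i\le\sqrt{(\eta_1^2+\eta_2^2)/2}$ for $i=1,2$, the bound on the smaller $v_i$ following from the bound on the larger. Set $\bD:=\diag(v_1,v_2,\eta_3)$. Then $(\bF\tilde\bR^T)^T(\bF\tilde\bR^T)=\bD^2$. By the polar decomposition, $\bF\tilde\bR^T=\bR\bD$ with $\bR\in SO(3)$, using $\det\bF>0$ and $\det\bD>0$. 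Thus $\bF=\bR\bD\tilde\bR$ has the stated form. Here we rename: the $\tilde\bR$ in the statement is the transpose of the one chosen above, which still lies in $SO_{2D}(3)$.

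For the converse, take $\bF=\bR\bD\tilde\bR$ with $\bD$ as in the statement. Then $\det\bF>0$ and $\bF^T\bF=\tilde\bR^T\bD^2\tilde\bR$, which satisfies $\bC\be_3=\eta_3^2\be_3$, $\det\bC'=\eta_1^2\eta_2^2$ and $2\max\mu_i\le\eta_1^2+\eta_2^2$. By the equivalences above, $\bF\in K^{\rm qc}\bR''$ for every $\bR''\in SO_{2D}(3)$.

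I expect no serious obstacle beyond bookkeeping. The points needing care are the orientation (determinant sign) in the polar decomposition, and checking that the block conditions are exactly invariant under $SO_{2D}(3)$ while the inequality must be maximised over $\theta$.
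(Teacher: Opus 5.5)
Your proof is correct and follows essentially the paper's route. Both reduce to conditions on $\bF^T\bF$, note that the $\be_3$ row and column and the determinant of the planar block are unaffected by rotations about $\be_3$, diagonalize that block within $SO_{2D}(3)$, and take the worst case $|\sin 2\theta|=1$ in the off-diagonal term to obtain $2\max_i v_i^2\le\eta_1^2+\eta_2^2$. You are simply more explicit than the paper about $\det\bF>0$ and the polar decomposition. One harmless slip: no renaming is needed at the end, since $\bF\tilde\bR^T=\bR\bD$ already gives $\bF=\bR\bD\tilde\bR$ with the same $\tilde\bR$.
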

\begin{proof}
Rotations $\bR\in SO_{2D}(3)$ have the form
\begin{equation*}
\label{e3axis}
\bR=\left(\begin{array}{ccc}\cos\theta&-\sin\theta&0\\\sin\theta&\cos\theta&0\\0&0&1\end{array}\right), \text{   where }\theta\in[0,2\pi].
\end{equation*}
Thus $\bF\in\mS_{2D}(K)$ iff $\bF\bR^T\in K^{\rm qc}$ for all such $\bR$,  that is
\begin{align}
\label{strainform}\bF^T\bF&=\left(\begin{array}{ccc}\cos\theta&\sin\theta&0\\-\sin\theta&\cos\theta&0\\0&0&1\end{array}\right)\left(\begin{array}{ccc}a&c&0\\c&b&0\\0&0&\eta_3^2\end{array}\right)\left(\begin{array}{ccc}\cos\theta&-\sin\theta&0\\\sin\theta&\cos\theta&0\\0&0&1\end{array}\right),
\end{align}
where $a,b,c$ (depending on $\theta$) satisfy $a>0, b>0, a+b+|2c|\leq\eta_1^2+\eta_2^2, ab-c^2=\eta_1^2\eta_2^2$. Hence
$$\bF^T\bF=\left(\begin{array}{ccc}U_{11}&U_{12}&0\\U_{21}&U_{22}&0\\0&0&\eta_3^2\end{array}\right),$$
where $\bU=(U_{ij})\in GL^+(2)$ is symmetric and positive definite, and can thus be diagonalized so that $\bQ\bU\bQ^T=\diag(v_1,v_2)$ for some $\bQ\in SO(2)$, where $v_1>0, v_2>0$. Hence $$\mS_{2D}(K)=\{SO(3)\bD SO_{2D}(3): \bD =\diag(v_1,v_2,\eta_3)\in \mS_{2D}(K)\}.$$
From \eqref{strainform} we deduce that $\bD=\diag(v_1,v_2,\eta_3)\in \mS_{2D}$ if and only if for each $\theta\in[0,2\pi]$
$$\left(\begin{array}{cc}\cos^2\theta\, v_1^2+\sin^2\theta\, v_2^2&\cos\theta\sin\theta\,(v_1^2-v_2^2)\\\cos\theta\sin\theta\,(v_1^2-v_2^2)&\sin^2\theta \,v_1^2+\cos^2\theta\, v_2^2\end{array}\right)=\left(\begin{array}{cc}a&c\\c&b\end{array}\right)$$
for some $a,b$  satisfying the above conditions, that is if and only if $v_1v_2=\eta_1\eta_2$ and $$v_1^2+v_2^2+|\sin 2\theta\,(v_1^2-v_2^2)|\leq \eta_1^2+\eta_2^2\text{  for all }\theta,$$ giving the result.
\end{proof}
\begin{remarks}\rm\mbox{ }\newline$(i)$ 
If we define $SO^-_{2D}(3)=\{\bR\in SO(3):\bR\be_3=-\be_3\}$ then it is easily checked that $SO^-_{2D}(3)=\bR_0SO_{2D}(3)$, where $\bR_0:=\diag(1,-1,-1)$. Since if $\tilde\bR\in SO^-_{2D}(3)$ and $\bR\in SO(3)$ we have that $\bR\bal_0\tilde\bR=(\bR\bR_0)\bD(\bR_0\tilde\bR)$ it follows that also
$$\mS_{2D}(K)=\bigcap_{\bR\in SO^-_{2D}(3)}K^{\rm qc}\bR.$$
$(ii)$
There are nontrivial deformations $\by$ with $D\by(\bx)\in \mathcal{S}_{2D}(K)$ a.e. $\bx=(x_1,x_2,x_3)\in\Omega$ such as
\[\by(\bx)=(\sqrt{\eta_1\eta_2}x_1,\sqrt{\eta_1\eta_2}x_2,\eta_3x_3)+\varepsilon g(\bx\cdot\be^{\perp})\be,\]
where $\{\be,\be^{\perp},\be_3\}$ is an orthonormal basis of $\R^3$, $g:\R\to\R$ is Lipschitz with $|g^{\prime}|\leq M<\infty$ and $|\varepsilon|$ is sufficiently small. Such deformations deform non-trivially the grain boundaries. 
\end{remarks}
\subsection {Cubic-to-tetragonal transformations}
\label{cubictetragonala}
For a cubic-to-tetragonal transformation with $\bU_1$, $\bU_2$, $\bU_3$ given by \eqref{cubictetragonal}
 we set $\eta_{\rm min}:=\min(\eta_1,\eta_2)$, $\eta_{\rm max}=\max(\eta_1,\eta_2)$.
In this case the quasiconvexification $K^{\rm qc}$ of $K=\bigcup^3_{i=1}SO(3)\bU_i$ is not known.

A  lower bound for $\mS(K)$ follows from the result of Dolzmann \& Kirchheim \cite{DoKir} that if $\frac{\eta_2}{\eta_1}<\frac{27}{8}$ then
\begin{align} \label{lowerbound}
B((\eta_1^2\eta_2)^{\frac{1}{3}}\bI,\rho)\cap\{\bA\in M^{3\times 3}:\det \bA=\eta_1^2\eta_2\}& \subset K^{\rm qc} \text{ for }
\rho:=\frac{\eta_2^{\frac{1}{3}}(\eta_2^{\frac{1}{3}}-\eta_1^{\frac{1}{3}})^2}{62}.
\end{align}
Thus, since $SO(3)B((\eta_1^2\eta_2)^{\frac{1}{3}}\bI,\rho)=B((\eta_1^2\eta_2)^{\frac{1}{3}}\bI,\rho)SO(3)$ and $SO(3)K=K$, we obtain that
\begin{equation}
\label{lowerbound1}
SO(3)\left(B((\eta_1^2\eta_2)^{\frac{1}{3}}\bI,\rho)\cap\{\bA\in M^{3\times 3}:\det \bA=\eta_1^2\eta_2\}\right)\subset \mS(K).
\end{equation}

 The upper bound in Proposition \ref{upperbound} can be computed explicitly on account of the following result of Peigney that can be found, but is not stated as a theorem, in \cite[p1500]{Peigney2013}, for which we give a simple proof. In Theorem \ref{better} we will give a better upper bound, but we include the result and its proof for their independent interest.
\begin{theorem}
\label{peigney}
Let $\ds K=\bigcup^3_{i=1}SO(3)\bU_i$, where the $\bU_i$ are given in \eqref{cubictetragonal}. Then 
\begin{align}
\label{DELTA}
\mathcal D(K)=\{\bD=\diag(v_1,v_2,v_3): v_1v_2v_3=\eta_1^2\eta_2,\:\eta_{\rm min}\leq v_i\leq\eta_{\rm max}\}.
\end{align}
\end{theorem}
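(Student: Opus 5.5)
The proof splits into the inclusion ``$\subset$'', obtained from polyconvex test functions, and the inclusion ``$\supset$'', obtained by an explicit laminate construction. The second inclusion is the hard part.

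\emph{Necessity.} Let $\bD=\diag(v_1,v_2,v_3)>0$ lie in $K^{\rm qc}$.

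First, the minors relations give $\det\bD=\det\bU_i=\eta_1^2\eta_2$.

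Second, for each $i$ the function $\bF\mapsto|\bF\be_i|$ is convex, hence quasiconvex. On $K$ we have $|\bQ\bU_k\be_i|=|\bU_k\be_i|\in\{\eta_1,\eta_2\}$, because the $\bU_k$ in \eqref{cubictetragonal} are diagonal. So the supremum over $K$ is $\eta_{\rm max}$, and the definition of $K^{\rm qc}$ yields $v_i=|\bD\be_i|\le\eta_{\rm max}$.

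Third, the function $\bF\mapsto|(\cof\bF)\be_i|$ is a convex function of a minor, hence polyconvex and so quasiconvex. Since $\cof(\bQ\bU_k)=\bQ\,\cof\bU_k=\bQ\,(\det\bU_k)\bU_k^{-1}$, on $K$ it takes the values $\eta_1^2\eta_2/\lambda$ with $\lambda\in\{\eta_1,\eta_2\}$, so it is at most $\eta_1^2\eta_2/\eta_{\rm min}$. For $\bD$ we have $|(\cof\bD)\be_i|=v_jv_k=\eta_1^2\eta_2/v_i$, which gives $v_i\ge\eta_{\rm min}$.

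\emph{Sufficiency, step 1: the faces.} I first use two symmetries of $K^{\rm qc}$. It is invariant under left multiplication by $SO(3)$. It is also invariant under $\bF\mapsto\bQ\bF\bQ^T$ for $\bQ\in P^{24}$, since $\bQ K\bQ^T=K$. The second invariance lets me permute diagonal entries, so it suffices to treat one face and then fill the interior.

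Apply the two-well characterization \eqref{BALL-JAMES} to the subset $SO(3)\bU_1\cup SO(3)\bU_2\subset K$. These two wells are $\diag(\eta_2,\eta_1,\eta_1)$ and $\diag(\eta_1,\eta_2,\eta_1)$, which is \eqref{K2} with third entry $\eta_3=\eta_1$, up to relabelling. Monotonicity of the quasiconvex hull under inclusion then gives $\diag(s,t,\eta_1)\in K^{\rm qc}$ whenever $st=\eta_1\eta_2$ and $s^2+t^2\le\eta_1^2+\eta_2^2$. The last condition is exactly $s,t\in[\eta_{\rm min},\eta_{\rm max}]$. By the permutation symmetry, $K^{\rm qc}$ therefore contains the part of the set \eqref{DELTA} on which some $v_i$ equals $\eta_1$.

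\emph{Sufficiency, step 2: the interior.} I plan to fix $v_1$ and set $c:=\eta_1^2\eta_2/v_1$. The target points are then the segment $\diag(v_1,s,c/s)$, with $s$ ranging between the admissible limits.

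For any admissible pair $x,y$ with $xy=c$ and $x\neq y$, consider $\bU=\diag(v_1,x,y)$ and $\bV=\diag(v_1,y,x)$. Then $\bU^2-\bV^2=\diag(0,x^2-y^2,y^2-x^2)$ is trace-free and singular. By Corollary \ref{Cor}, $SO(3)\bU$ and $SO(3)\bV$ are rank-one connected, with normals $(0,1,\pm1)/\sqrt2$; in particular there is $\bQ\in SO(3)$ with $\bQ\bU-\bV$ of rank one.

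Suppose both endpoints $\diag(v_1,x,y)$ and $\diag(v_1,y,x)$ lie in $K^{\rm qc}$, with $x$ taken at an extreme value supplied by step 1 (or by an earlier stage of the construction). Since quasiconvex hulls are closed under lamination, the laminates $\lambda\bQ\bU+(1-\lambda)\bV$ then lie in $K^{\rm qc}$.

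\emph{Main obstacle.} The laminate averages are in general not of the form $\bR\,\diag(v_1,s,c/s)$ with $\bR\in SO(3)$. Their right polar factor need not be trivial, and $K^{\rm qc}$ is not right-isotropic, so I cannot simply read off their singular values.

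What I would try first is to exploit the reflection symmetry swapping $\be_2$ and $\be_3$, which lies in $P^{24}$ up to sign and exchanges $\bU$ and $\bV$. I would check whether this symmetry forces the average to have right stretch tensor diagonal in the basis $\{\be_1,(\be_2\pm\be_3)/\sqrt2\}$. If so, a conjugation by a $P^{24}$ element mapping that basis to the cubic axes would return a diagonal matrix. Continuity in $\lambda$ together with the intermediate value theorem would then sweep out all admissible $s$.

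If this fails, I would instead work entirely inside the rank-one connected $2\times2$ block, where $\det$ is constant along rank-one segments. There I would look for two rank-one connected matrices whose right stretch tensors are both diagonal, choosing them as a fixed rotation conjugate of face points. This amounts to solving directly for the amplitude and rotation in the twinning equation of Theorem \ref{r1}.
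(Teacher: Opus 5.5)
Your necessity half is correct and is essentially the paper's argument. You test with $|\bF\be_i|$ and $|(\cof\bF)\be_i|$ where the paper uses the largest singular values of $\bF$ and $\cof\bF$. Your step 1 is exactly the paper's first application of \eqref{BALL-JAMES}.

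The gap is step 2: the inclusion of the set \eqref{DELTA} in $\mathcal D(K)$ is not proved, and neither of your fallback plans can close it.

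\emph{First plan.} Put $\boldsymbol{t}=(\be_2-\be_3)/\sqrt2$, $\bn=(\be_2+\be_3)/\sqrt2$ and $\bF_\lambda=\bV+\lambda\ba\otimes\bn$. A direct computation gives $\bF_\lambda^T\bF_\lambda\boldsymbol{t}\cdot\bn=(x^2-y^2)(\lambda-\tfrac12)$. So the stretch is diagonal in that frame only at $\lambda=\tfrac12$. In any case no element of $P^{24}$ maps $(\be_2\pm\be_3)/\sqrt2$ to cubic axes, since $P^{24}$ permutes $\{\pm\be_1,\pm\be_2,\pm\be_3\}$.

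\emph{Second plan.} This fails for a structural reason. Suppose $\bA,\bB\in K^{\rm qc}$ have diagonal $\bA^T\bA$, $\bB^T\bB$ and $\bA-\bB=\ba\otimes\bn\neq\bzero$. Then
\[
(\lambda\bA+(1-\lambda)\bB)^T(\lambda\bA+(1-\lambda)\bB)=\lambda\bA^T\bA+(1-\lambda)\bB^T\bB-\lambda(1-\lambda)|\ba|^2\bn\otimes\bn .
\]
For $\lambda\in(0,1)$ this is diagonal only if $\bn$ is a cubic axis. But then $\bA^T\bA$ and $\bB^T\bB$ agree on $\bn^\perp$, and $\det\bA=\det\bB$ forces $\bA^T\bA=\bB^T\bB$, contradicting Remark \ref{norankone}. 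So a single lamination step never produces a new diagonal point; a second-order laminate is unavoidable.

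The missing idea is that this second-order laminate is already packaged in \eqref{BALL-JAMES}. You may apply it a second time, to a pair of wells drawn from $K^{\rm qc}$ rather than from $K$.
\begin{itemize}
\item Put $y:=\eta_1\eta_2/v_1\in[\eta_{\rm min},\eta_{\rm max}]$. By your step 1, $\diag(v_1,\eta_1,y)$ and $\diag(v_1,y,\eta_1)$ lie in $K^{\rm qc}$.
\item If $E\subset K^{\rm qc}$ then $E^{\rm qc}\subset K^{\rm qc}$, since every quasiconvex $g$ satisfies $\sup_E g\le\sup_K g$.
\item After conjugation by a cyclic permutation in $P^{24}$, these two wells take the form \eqref{K2} with common entry $v_1$. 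Hence \eqref{BALL-JAMES} gives $\diag(v_1,v_2,v_3)\in K^{\rm qc}$ whenever $v_2v_3=\eta_1y$ and $v_2^2+v_3^2\le\eta_1^2+y^2$.
\item The first condition is just $v_1v_2v_3=\eta_1^2\eta_2$. Substituting $y=v_2v_3/\eta_1$, the second is equivalent to $(v_2^2-\eta_1^2)(v_3^2-\eta_1^2)\ge0$.
\item This holds because $\eta_1\in\{\eta_{\rm min},\eta_{\rm max}\}$ and $v_2,v_3\in[\eta_{\rm min},\eta_{\rm max}]$. If $y=\eta_1$, the target is already a face point.
\end{itemize}
This is precisely how the paper concludes.
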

\begin{proof}
Let $\mathcal D(K)':=\{\bD=\diag(v_1,v_2,v_3): v_1v_2v_3=\eta_1^2\eta_2,\:\eta_{\rm min}\leq v_i\leq\eta_{\rm max}\}$. If $\bD=\diag(v_1,v_2,v_3)>0$ belongs to $K^{\rm qc}$ then $\bD=\bar\nu$ for a homogeneous GYM $\nu$ supported in $K$. By the minors relations we have that $$\det\bD=\det \bar \nu=\langle\nu,\det\rangle=\int_K\det \bA\, d\nu(\bA)=\eta_1^2\eta_2.$$
Next we
follow the method of \cite{j48} and note that the largest singular value $v_{\rm max}(\bA)$ of $\bA\in\R^{3\times 3}$ is a convex function of $\bA$, so that by Jensen's inequality
$$\max_iv_i=v_{\rm max}(\bD)=v_{\rm max}\left(\int_K\bA\,d\nu(\bA)\right)\leq \int_Kv_{\rm max}(\bA)\,d\nu(\bA) \leq\max_{\bA\in K}v_{\rm max}(\bA)=\eta_{\rm max}.$$
Using the minors relation $\cof \bar\nu=\langle \nu,\cof\rangle$ we similarly obtain
$$\max_i\frac{v_1v_2v_3}{v_i}=v_{\rm max}(\cof\bD)=v_{\rm max}\left(\int_K\cof \bA\,d\nu(\bA)\right)\leq\int_K v_{\rm max}(\cof\bA)\,d\nu(\bA)\leq\frac{\eta_1^2\eta_2}{\eta_{\rm min}},$$
so that $\min_iv_i\geq \eta_{\rm min}$.
Therefore $\mathcal D(K)\subset \mathcal D(K)'$.

Now let $\bD\in \mathcal D(K)'$. Considering the two wells $SO(3)\bU_1,\,SO(3)\bU_2$ and applying \eqref{BALL-JAMES} we find that $\bD^{(1)}:=\diag(v_1,\frac{\eta_1\eta_2}{v_1},\eta_1)\in K^{\rm qc}$, because the inequality $v_1^2+\left(\frac{\eta_1\eta_2}{v_1}\right)^2\leq \eta_1^2+\eta_2^2$ can be rewritten as $(v_1^2-\eta_1^2)(v_1^2-\eta_2^2)\leq 0$, which holds because $\eta_{\rm min}\leq v_1\leq\eta_{\rm max}$. Similarly, considering the two wells  $SO(3)\bU_1,\,SO(3)\bU_3$ we find that  $\bD^{(2)}:=\diag(v_1,\eta_1,\frac{\eta_1\eta_2}{v_1})\in K^{\rm qc}$. We now apply \eqref{BALL-JAMES} to the two wells $SO(3)\bD^{(1)},\,SO(3)\bD^{(2)}$ to deduce that $\bD=\diag(v_1,v_2,v_3)\in K^{\rm qc}$, because the inequality $v_2^2+v_3^2\leq \eta_1^2+(\frac{\eta_1\eta_2}{v_1})^2$ in conjunction with $v_1v_2v_3=\eta_1^2\eta_2$ takes the form $(v_2^2-\eta_1^2)(v_3^2-\eta_1^2)\geq 0$, which holds because $\eta_{\rm min}\leq v_i\leq\eta_{\rm max}$. Thus $\mathcal D(K)'\subset\mathcal D(K)$, completing the proof.
\end{proof}
From Proposition \ref{upperbound} and Theorem \ref{peigney} we see that an upper bound for $\mS(K)$ is
$SO(3)\mathcal D(K) SO(3),$ where $\mathcal D(K)$ is the set defined in \eqref{DELTA}. Equivalently $\Delta(K)\subset\mathcal D(K)$. We now establish a better upper bound.
\begin{theorem}
\label{better}
If $\bD=\diag(v_1,v_2,v_3)\in \Delta(K)$ then $v_1v_2v_3=\eta_1^2\eta_2$,
\begin{equation}
\label{bounds}
\frac{\sqrt 3\,\eta_1\eta_2}{\sqrt{\eta_1^2+2\eta_2^2}}\leq v_i\leq\frac{\sqrt{\eta_2^2+2\eta_1^2}}{\sqrt 3},
\end{equation}
and
\begin{align}\label{sumvi}
v_1+v_2+v_3&\leq\eta_2+ 2\eta_1\\\label{sumvivj}
v_1v_2+v_2v_3+v_3v_1&\leq \eta_1(\eta_1+2\eta_2).
\end{align}
\end{theorem}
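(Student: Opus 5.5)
The approach is to use only the convexity of a few simple functions together with the minors relations (weak continuity of $\det$ and $\cof$), exactly as in the first half of the proof of Theorem \ref{peigney}. The new ingredient is the rotational freedom built into $\Delta(K)$. If $\bD\in\Delta(K)\subset\mS(K)$, then $\bD\bR^T\in K^{\rm qc}$ for every $\bR\in SO(3)$. Since $SO(3)K^{\rm qc}=K^{\rm qc}$, this gives $\bR\bD\bR^T\in K^{\rm qc}$ for every $\bR\in SO(3)$. So $\bR\bD\bR^T=\bar\nu$ for a homogeneous GYM $\nu=\nu_{\bR}$ with $\supp\nu\subset K$, and the minors relations give $\det\bD=\eta_1^2\eta_2$ and $\cof(\bR\bD\bR^T)=\langle\nu,\cof\rangle$.

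For the upper bound in \eqref{bounds}, fix a unit vector $\be$ and use the convex function $\bA\mapsto|\bA\be|$. Jensen's inequality gives
$$|\bD\bR^T\be|=|\bR\bD\bR^T\be|\leq\int_K|\bA\be|\,d\nu(\bA)\leq\max_i|\bU_i\be|,$$
where the last step uses that $|\bQ\bU_i\be|=|\bU_i\be|$ for $\bQ\in SO(3)$. Now take $\be=\frac{1}{\sqrt3}(1,1,1)^T$. By the cubic symmetry, $|\bU_i\be|^2=\frac{\eta_2^2+2\eta_1^2}{3}$ for every $i$. Choosing $\bR$ with $\bR^T\be=\be_k$ then yields $v_k\leq\sqrt{(\eta_2^2+2\eta_1^2)/3}$.

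The lower bound comes from the same argument applied to cofactors. Here $\cof(\bR\bD\bR^T)=\bR(\cof\bD)\bR^T$ and $\cof\bD=\diag(v_2v_3,v_3v_1,v_1v_2)$. We also have $\cof(\bQ\bU_i)=\bQ\cof\bU_i$, and $\cof\bU_1=\diag(\eta_1^2,\eta_1\eta_2,\eta_1\eta_2)$, with the analogous permuted forms for $\bU_2,\bU_3$. Hence, for every $i$,
$$|\cof\bU_i\,\be|^2=\frac{\eta_1^4+2\eta_1^2\eta_2^2}{3}.$$
This gives $\frac{\eta_1^2\eta_2}{v_k}\leq\eta_1\sqrt{(\eta_1^2+2\eta_2^2)/3}$, which rearranges to $v_k\geq\sqrt3\,\eta_1\eta_2/\sqrt{\eta_1^2+2\eta_2^2}$.

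For \eqref{sumvi}, take $\bR=\bI$ and use the linearity of the trace together with the inequality $\tr\bA\leq\|\bA\|_*$, where $\|\cdot\|_*$ is the (convex) sum of singular values:
$$v_1+v_2+v_3=\tr\bar\nu=\int_K\tr\bA\,d\nu\leq\int_K\|\bA\|_*\,d\nu=\eta_2+2\eta_1.$$
Applying the same reasoning to $\cof$ gives $\sum v_iv_j=\tr\cof\bD\leq\|\cof\bU_i\|_*=\eta_1^2+2\eta_1\eta_2$, which is \eqref{sumvivj}. Both of these already hold for $\bD\in K^{\rm qc}$.

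The only real idea is the choice $\be=(1,1,1)/\sqrt3$. Because it makes $|\bU_i\be|$ independent of $i$, it optimizes the maximum over the variants, and the rotation freedom of the Taylor set is what allows us to align $\be$ with each coordinate axis. The remaining step to check carefully is the left-invariance $SO(3)K^{\rm qc}=K^{\rm qc}$, which converts $\bD\bR^T\in K^{\rm qc}$ into $\bR\bD\bR^T\in K^{\rm qc}$.
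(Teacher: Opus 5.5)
Your proposal is correct and follows essentially the paper's argument. For the upper bound you use the convexity (triangle) inequality with $\be=(1,1,1)^T/\sqrt3$ rotated onto each coordinate axis, and for the lower bound the same estimate applied to $\cof$ via the minors relations. For the two sum bounds you use $\tr(\bQ\bU_i)\leq\tr\bU_i$, applied to $\bD$ and to $\cof\bD$ with $\bR=\bI$. The left-invariance step turning $\bD\bR^T$ into $\bR\bD\bR^T$ is harmless but unnecessary, since $|\bD\bR^T\be|=v_k$ already when $\bR^T\be=\be_k$.
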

\begin{proof}
Let $\bD=\diag(v_1,v_2,v_3)\in \Delta(K)$. Since $\bD\in K^{\rm qc}$, by the minors relations $v_1v_2v_3=\det\bD=\eta_1^2\eta_2$. Also we have that  $\bD\in\bigcap_{\bR\in SO(3)}K^{\rm c}\bR$. Given $i$ choose $\bR\in SO(3)$ with $\bR\be_i=\be:=\frac{1}{\sqrt 3}(\be_1+\be_2+\be_3)$. Then $\bD\bR^T\in K^{\rm c}$ and so  $\bD\bR^T=\bar\nu_{\bR}=\int_K\bF d\nu_{\bR}(\bF)$ for a probability measure $\nu_{\bR}$ with $\supp\nu_{\bR}\subset K$. Therefore
$$v_i=\bD\bR^T\be\cdot\be_i=\int_K\bF\be\cdot\be_i\, d\nu_{\bR}(\bF)\leq \int_K|\bF\be|d\nu_{\bR}(\bF)=\frac{\sqrt{\eta_2^2+2\eta_1^2}}{\sqrt 3},$$
since if $\bF\in K$ then $\bF=\bQ\bU_j$ for some $\bQ\in SO(3)$ and $j$, so that $|\bF\be|^2=|\bU_j\be|^2=\frac{1}{3}(\eta_2^2+2\eta_1^2)$. This proves the upper bound in \eqref{bounds}.

For $\bF=\bQ\bU_j\in K$ we have that $\tr\bF\leq\tr\bU_j=\eta_2+2\eta_1$. Since $\bal_0=\int_K\bF\,d\nu_{\bI}(\bF)$ for a probability measure $\nu_{\bI}$ with $\supp \nu_{\bI}\subset K$ it follows that $\tr \bD\leq\int_K\tr\bF\,d\nu(\bF)\leq \eta_2+2\eta_1$, giving \eqref{sumvi}.

To obtain the lower bound in \eqref{bounds} and the estimate \eqref{sumvivj} we note that $\bD\in \Delta(K)$ implies that, for any $\bR\in SO(3)$, $\bD\bR^T=\bar\nu_{\bR}$ for a homogeneous GYM $\nu_{\bR}$ with $\supp \nu_{\bR}\subset K$. Therefore, by the minors relations
$$(\cof \bD)\bR^T=\cof (\bD\bR^T)=\cof\bar\nu_{\bR}=\langle\nu_{\bR},\cof\rangle\in (\cof K)^{\rm c},$$
and hence $\cof\bD\in\bigcap_{\bR\in SO(3)}(\cof K)^{\rm c}\bR$. Since $$\cof K=SO(3)\diag(\eta_1^2,\eta_1\eta_2,\eta_1\eta_2)\cup SO(3)\diag(\eta_1\eta_2,\eta_1^2,\eta_1\eta_2)\cup SO(3)\diag(\eta_1\eta_2,\eta_1\eta_2,\eta_1^2)$$
we can replace $v_1,v_2,v_3$ by $v_2v_3,v_1v_3,v_1v_2$ and $\eta_1,\eta_2$ by $\eta_1\eta_2, \eta_1^2$ respectively in the estimates obtained above. Thus from the upper bound in \eqref{bounds}, and using $v_1v_2v_3=\eta^2_1\eta_2$, we have that 
$$\frac{\eta_1^2\eta_2}{v_i}\leq \frac{\sqrt{\eta_1^4+2\eta_1^2\eta_2^2}}{\sqrt 3},$$
giving the lower bound in \eqref{bounds}. Similarly, from \eqref{sumvi} we obtain \eqref{sumvivj}.
\end{proof}
 
\begin{remarks}\rm
\label{Sradius}\mbox{ }\newline
$(i)$ Since the proof uses only convexity and the minors relations, the bounds \eqref{better}, \eqref{sumvi} hold for  $\bD$ in the possibly larger set $\displaystyle\bigcap_{\bR\in SO(3)}K^{\rm pc}\bR$, where $$K^{\rm pc}:=\{\bA\in \R^{3\times 3}:\varphi(\bA)\leq\max_K\varphi \text{ for all polyconvex }\varphi\}.$$\\
$(ii)$ The bounds \eqref{bounds} imply that $\eta_{\rm min}\leq v_i\leq\eta_{\rm max}$, the upper bound being obvious and the lower bound following by writing $\frac{\sqrt{3}\eta_1\eta_2}{\sqrt{\eta_1^2+2\eta_2^2}}=\frac{\sqrt{3}}{\sqrt{\eta_2^{-2}+2\eta_1^{-2}}}$.\\
$(iii)$ From \eqref{bounds}, \eqref{sumvi} and $v_1^2+v_2^2+v_3^2\leq (\max_iv_i)(v_1+v_2+v_3)$ we obtain
\begin{equation}
v_1^2+v_2^2+v_3^2\leq\frac{\sqrt{\eta_2^2+2\eta_1^2}}{\sqrt {3}}\left(\eta_2+2\eta_1\right),
\end{equation}
from which the estimate
\begin{equation}
\label{distfromsphere}
\eta_2^2+2\eta_1^2-|\bD|^2\geq\frac{2}{3+\sqrt{3}\,h(\eta_1,\eta_2)}|\eta_1-\eta_2|^2,\text{ where } h(\eta_1,\eta_2):=\frac{\eta_2+2\eta_1}{\sqrt{\eta_2^2+2\eta_1^2}},
\end{equation}
follows, giving a positive lower bound on the distance of $\mS(K)$ from the sphere of radius $\sqrt{\eta_2^2 +2\eta_1^2}$ in $\R^{3\times 3}$ on which $K$ lies.\\
$(iv)$ From \eqref{bounds} we can estimate the diameter of $\Delta(K)$. Suppose $\bD=\diag(v_1,v_2,v_3)$ and $\bD'=\diag(v_1',v_2',v_3')$ belong to $\Delta (K)$. Then
$$|\bD-\bD'|^2=\sum_{i=1}^3(v_i-v_i')^2,$$
so that by \eqref{bounds}
\begin{equation}\label{diam} \diam \Delta(K)\leq g(\eta_1,\eta_2), \text{ where }g(\eta_1,\eta_2):=\sqrt{\eta_2^2+2\eta_1^2}-\frac{3\eta_1\eta_2}{\sqrt{\eta_1^2+2\eta_2^2}}.
\end{equation}
Note that $g$ is a smooth function with $g(1,1)=0$ and $\nabla g(1,1)=0$. Hence for $(\eta_1,\eta_2)$ in a neighbourhood of $(1,1)$ we have that 
\begin{equation}
\label{diambound}
\diam \Delta(K)\leq C\left((\eta_1-1)^2+(\eta_2-1)^2\right)
\end{equation}
  for some constant $C>0$. Note that this is consistent with the lower bound \eqref{lowerbound1} since $\rho\leq C'(\eta_1-\eta_2)^2$ in a neighbourhood of $(1,1)$ for some constant $C'$. 

 In the linearized theory we assume the deformation to have the form $\by(\bx)=\bx+\ep \bu$, where $\ep>0$ is a small parameter, with the energy wells  scaled similarly by $\bU_i=\bI+\ep \bE_i$, $1\leq i\leq N$, for symmetric linear strains $\bE_i$  having equal trace (see, for example, \cite{j40,bhattacharya93,kohn91,schmidt08}). Then, setting $K_{\ep}:=\bigcup_{i=1}^NSO(3)(\bI+\ep\bE_i)$ (see \cite{bhattkohn97}) $K_{\ep}^{\rm qc}$ is replaced by the convex hull $K_{\rm lin}^{\rm c}$ of $K_{\rm lin}:=\{\bE_1,\ldots,\bE_N\}$, while $\Delta(K_{\ep})$ is replaced by $\Delta(K_{\rm lin}):=\{\bD:\bD=\diag(v_1,v_2,v_3)>0, \bR\bD\bR^T\in   K_{\rm lin}^{\rm c} \text{ for all }\bR\in SO(3)\}$. In the cubic-to-tetragonal case  \eqref{diambound} implies that $\lim_{\ep\to 0} \frac{1}{\ep}\diam(\Delta(K_{\ep}))=0$, so that \eqref{diam} can be regarded as a nonlinear version of the result of Bhattacharya \& Kohn \cite{bhattkohn97} that $\Delta(K_{\rm lin})$ consists of the single matrix $\frac{1}{3}(\tr\bE_i)\bI$.
 
 \end{remarks}

\subsection{Cubic-to-orthorhombic transformations}
\label{Rem9}
For cubic-to-orthorhombic transformations, with variants $\bU_1,\ldots,\bU_6$ given by \eqref{cubicortho} and deformation parameters $\alpha>0,\beta>0,\gamma>0$, we do not know an equivalent result to Theorem \ref{peigney}  characterizing the positive diagonal matrices $\bD=\diag(v_1,v_2,v_3)$ belonging to $K^{\rm qc}$. Using the maximum singular value as in the proof of Theorem \ref{peigney} we have the necessary condition that  $\bD\in K^{\rm qc}$ implies
\begin{equation}
\label{orbound}
\min(\alpha,\beta,\gamma)\leq v_i\leq\max(\alpha,\beta,\gamma).
\end{equation}
Also, noting that $\bU_1=\bQ_{\pi/4}\diag (\alpha,\beta,\gamma)\bQ_{\pi/4}^T$ and $\bU_2=\bQ_{\pi/4}\diag (\alpha,\gamma,\beta)\bQ_{\pi/4}^T$, where $$\bQ_{\pi/4}:=\left(\begin{array}{ccc}1&0&0\\0&\frac{1}{\sqrt 2}&-\frac{1}{\sqrt 2}\\0&-\frac{1}{\sqrt 2}&\frac{1}{\sqrt 2}\end{array}\right),$$
and applying \eqref{BALL-JAMES} we have that $\bQ_{\pi/4}\diag(\alpha,\sqrt{\beta\gamma},\sqrt{\beta\gamma})\bQ_{\pi/4}^T=\diag(\alpha,\sqrt{\beta\gamma},\sqrt{\beta\gamma})\in K^{\rm qc}$. Arguing similarly for the other $\bU_i$ we see that 
\begin{equation}
\label{tetrwells}
SO(3)\diag(\alpha,\sqrt{\beta\gamma},\sqrt{\beta\gamma})\cup SO(3)\diag(\sqrt{\beta\gamma},\alpha,\sqrt{\beta\gamma})\cup SO(3)\diag(\sqrt{\beta\gamma},\sqrt{\beta\gamma},\alpha)\subset K^{\rm qc}.\end{equation}
Hence, by Theorem \ref{peigney} we obtain the sufficient condition that if 
$$\min(\sqrt{\beta\gamma},\alpha)\leq v_i\leq\max(\sqrt{\beta\gamma},\alpha)$$
then $\bD\in K^{\rm qc}$.

From \eqref{tetrwells} and \eqref{lowerbound} we deduce the lower bound that if $\frac{\alpha}{\sqrt{\beta\gamma}}<\frac{27}{8}$ then
\begin{align} \label{lowerboundor}
B((\alpha\beta\gamma)^{\frac{1}{3}}\bI,\bar\rho)\cap\{\bA\in M^{3\times 3}:\det \bA=\alpha\beta\gamma\}& \subset K^{\rm qc} \text{ for }
\bar\rho:=\frac{\alpha^{\frac{1}{3}}(\alpha^{\frac{1}{3}}-(\beta\gamma)^{\frac{1}{6}})^2}{62}.
\end{align}
(More generally, for cubic austenite and any transformation strain $\bU$ conditions can be deduced from  \cite[Lemma 3.5]{j64}  under which $K^{\rm qc}$ contains a nontrivial set of cubic-to-tetragonal wells, so that $\mS(K)$ contains a relatively open neighbourhood of $\{\bA\in M^{3\times 3}:\det \bA=\det\bU\}$.)

 As regards an upper bound for $\mS(K)$ the following result is the analogue of Theorem \ref{better} (in fact a generalization of it since we can take $\beta=\gamma=\eta_1, \alpha=\eta_2$).
\begin{theorem}
\label{betteror}
If $\bD=\diag(v_1,v_2,v_3)\in \Delta(K)$ then $v_1v_2v_3=\alpha\beta\gamma$,
\begin{equation}
\label{boundsor}
\frac{\sqrt 3\,\alpha\beta\gamma}{\sqrt{2\alpha^2\max(\beta,\gamma)^2+\beta^2\gamma^2}}\leq v_i\leq\frac{\sqrt{2\max(\beta,\gamma)^2+\alpha^2}}{\sqrt 3},
\end{equation}
and
\begin{align}\label{sumvior}
v_1+v_2+v_3&\leq \alpha+\beta+\gamma\\\label{sumvivjor}
v_1v_2+v_2v_3+v_3v_1&\leq\beta\gamma+\alpha\beta+\gamma\alpha.
\end{align}
\end{theorem}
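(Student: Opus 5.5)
The plan is to follow the proof of Theorem \ref{better} almost line by line. It uses only three ingredients: the minors relations, the inclusion $\bD\in\bigcap_{\bR\in SO(3)}K^{\rm c}\bR$, and the observation that $\cof K$ is again a set of cubic-to-orthorhombic wells with permuted parameters.

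\emph{Determinant and upper bounds.} First, $\bD\in K^{\rm qc}$, so the minors relations give $v_1v_2v_3=\det\bD=\det\bU_j=\alpha\beta\gamma$. For the upper bound in \eqref{boundsor}, fix $i$ and choose $\bR\in SO(3)$ with $\bR\be_i=\be:=\frac1{\sqrt3}(\be_1+\be_2+\be_3)$. Then $\bD\bR^T=\int_K\bF\,d\nu_{\bR}(\bF)$ for a probability measure $\nu_{\bR}$ supported on $K$, and hence
$$v_i=\bD\bR^T\be\cdot\be_i\le\int_K|\bF\be|\,d\nu_{\bR}(\bF)\le\max_j|\bU_j\be|.$$
It remains to evaluate $|\bU_j\be|$ from \eqref{cubicortho}. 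For $\bU_1$ the off-diagonal block acts on $(1,1)$ with eigenvalue $\beta$, so $\bU_1\be=\frac1{\sqrt3}(\alpha,\beta,\beta)$. For $\bU_2$ the corresponding eigenvalue is $\gamma$, so $\bU_2\be=\frac1{\sqrt3}(\alpha,\gamma,\gamma)$. The remaining variants give the same norms with the entries permuted. Therefore $\max_j|\bU_j\be|^2=\frac13(\alpha^2+2\max(\beta,\gamma)^2)$, which is the upper bound. For \eqref{sumvior}, take $\bR=\bI$ and use $\tr(\bQ\bU_j)\le\tr\bU_j=\alpha+\beta+\gamma$ for $\bQ\in SO(3)$ and $\bU_j>0$ symmetric. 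Integrating against $\nu_{\bI}$ then gives $\tr\bD\le\alpha+\beta+\gamma$.

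\emph{Lower bounds via cofactors.} For the remaining estimates, use $\cof(\bD\bR^T)=(\cof\bD)\bR^T=\langle\nu_{\bR},\cof\rangle$, where $\nu_{\bR}$ is now the homogeneous gradient Young measure representing $\bD\bR^T$. This shows $\cof\bD\in\bigcap_{\bR}(\cof K)^{\rm c}\bR$. The main point to verify is that $\cof K$ has the same structure. Since $\bU_1=\bQ_{\pi/4}\diag(\alpha,\beta,\gamma)\bQ_{\pi/4}^T$, we get $\cof\bU_1=\bQ_{\pi/4}\diag(\beta\gamma,\alpha\gamma,\alpha\beta)\bQ_{\pi/4}^T$, and similarly for the other variants. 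Hence $\cof K$ is the cubic-to-orthorhombic set with parameters $(\alpha',\beta',\gamma')=(\beta\gamma,\alpha\gamma,\alpha\beta)$. Moreover, the first part used only convex-hull arguments, so it applies verbatim to $\cof\bD=\diag(v_2v_3,v_1v_3,v_1v_2)$.

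Applying the upper bound to $\cof\bD$ gives
$$\frac{\alpha\beta\gamma}{v_i}\le\frac{\sqrt{2\alpha^2\max(\beta,\gamma)^2+\beta^2\gamma^2}}{\sqrt3},$$
since $\max(\alpha\gamma,\alpha\beta)=\alpha\max(\beta,\gamma)$; this is the lower bound in \eqref{boundsor}. Applying the trace bound to $\cof\bD$ gives \eqref{sumvivjor}. I expect the only step needing care to be this identification of $\cof K$ with an orthorhombic well set, in particular checking that the cofactor of each variant in \eqref{cubicortho} is again of that form. This reduces to the conjugation by $\bQ_{\pi/4}$ and its analogues for the other variants.
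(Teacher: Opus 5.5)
Your proposal is correct and follows the paper's proof essentially step for step: the same rotation with $\bR\be_i=\be$ for the upper bound, the trace inequality with $\bR=\bI$, and the passage to $\cof\bD\in\bigcap_{\bR\in SO(3)}(\cof K)^{\rm c}\bR$ for the lower bound and the bound on $v_1v_2+v_2v_3+v_3v_1$. Your identification of $\cof K$ as the orthorhombic well set with parameters $(\beta\gamma,\alpha\gamma,\alpha\beta)$ agrees with the paper's displayed $\cof\bU_1$ and gives exactly the stated lower bound, so that step is handled correctly.
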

\begin{proof}
This follows exactly the same method as that of Theorem \ref{better}. The upper bound in \eqref{boundsor} is obtained using the same rotation $\bR$, while \eqref{sumvior} follows using $\tr \bF\leq\tr \bU_i$ for $\bF\in K$. Then the lower bound in \eqref{boundsor} and \eqref{sumvivjor} follow by replacing $v_1,v_2,v_3$ by $v_2v_3,v_3v_1, v_1v_2$ and $\alpha, \beta,\gamma$ by $\alpha\gamma, \gamma\beta, \alpha\beta$ respectively, where we have used $\cof\bD\in\bigcap_{\bR\in SO(3)}(\cof K)^{\rm c}\bR$ and that, for example,
$$\cof \bU_1=\left(\begin{array}{ccc}\gamma\beta&0&0\\0&\frac{\alpha(\beta+\gamma)}{2}&\frac{\alpha(\gamma-\beta)}{2}\\0&\frac{\alpha(\gamma-\beta)}{2}&\frac{\alpha(\beta+\gamma)}{2}\end{array}\right).$$
\end{proof}
\section{Discussion}
\label{disc}
The results of Section \ref{3} explain why complex microstructures are necessary in polycrystals  to achieve compatibility at grain boundaries, and in particular why simple laminates are not sufficient, consistent with the observation of doubly laminated microstructures such as for the cubic-to-tetragonal transformations in ${\rm BaTiO_3}$ (see \cite{Arlt90}) and in  ${\rm Ru_{50}Nb_{50}}$ (see  \cite{vermautetal13,manzonietal14}).  In this connection we recall  the result in \cite[Theorem 4]{p35}, which implies that for cubic tetragonal transformations there is no homogeneous GYM $\nu=\sum_{j=1}^4\lambda_j\delta_{\bA_j}$ with $\bA_j\in K$ and $\bar\nu=(\eta_1^2\eta_2)^{\frac{1}{3}}\bI$. Hence for a grain $\om_i$ there is no GYM $(\nu_{\bx})_{\bx\in\om_i}$ with $\nu_{\bx}=\sum_{j=1}^4\lambda_j(\bx)\delta_{\bA_j}$ and corresponding underlying macroscopic deformation $\by$, supported on just four matrices $\bA_j\in K$ and satisfying $\by(\bx)=\ba+(\eta_1^2\eta_2)^{\frac{1}{3}}\bx$ for $\bx\in\partial\om_i$. Thus if a zero energy microstructure is supported on just four matrices, such as in a double laminate, then the grain boundary must deform nontrivially. 

The reduction to planar grain boundaries in Section \ref{3.1} assumes some regularity of the deformation or GYM as the point on the grain boundary is approached. Another possible approach is to use generalizations of the Hadamard jump condition as in \cite{u5} which do not require such regularity assumptions. Such an approach was carried out in \cite{p37}, but restricted to the special situation of a bicrystal  with cylindrical grains and two energy wells.

Although the Taylor set in principle gives conditions under which a macroscopic deformation corresponds to a zero energy microstructure, the order of the laminates required to achieve this  (as, for example, in the self-accommodation analysis of Bhattacharya \cite{bhattacharya92}) will in general be unrealistically high. It would therefore be interesting to develop an approximate model  in which  the microstructure in each grain is a low order laminate. Such a model could give insight into correlations between the microstructures in adjacent grains and at longer length-scales.

 Finally it would be useful to extend aspects of our analysis to incorporate slip, which can also be represented by rank-one connections between energy wells according to the Ericksen energy well picture (see \cite{j73}), and applied tension.  For such experimental work see \cite{molnarova2021,Tyc2024}, and for a theoretical analysis  \cite{EnglKriesbeck}, each of these papers involving considerations related to the Taylor set.

\section{Appendix: Supplementary material for Section \ref{3.3}}\label{Appendix}
This MATLAB function computes symbolically the product of determinants \eqref{Fprod} in Section \ref{3.3}. Given a rotation angle $\varphi$, a skew-symmetric generator $\bK$ (so $\bR=\bI+\sin\varphi\,\bK+(1-\cos\varphi\,)\bK^2$ is the relative rotation), and the $3\times 3$ matrix $\bM$ defined by \eqref{Mm}, the function defines six standard rank‑one twinning systems according to Table \ref{Table} consisting of the vectors $\ba_j$, and normals $\bn_j,$ that can be on the one side of the boundary and fixes one of those solutions, namely $(\ba_1,\bn_1)$, on the other side of the boundary. For each of the six systems it constructs the symbolic $3\times 3$ block matrices $\bA_0,\bA_1,\bA_2,\bB_1,\bB_2$, given by \eqref{Melements}, assembles the  $4\times 4$ matrix \eqref{M} (here denoted as Mat and computes its determinant. The determinants are simplified using exact rational cancellation and polynomial degree checks in $\eta_2$ and multiplied into a global symbolic product. The outputs are DET (simplified symbolic product), DETvpa (numeric‑approximation display), and the last computed matrix Mat. The code is written for symbolic manipulation (uses sym, simplify, vpa) to expose algebraic factorisation and degree and structure of the resulting polynomials.

\begin{small}
\begin{Verbatim}[breaklines=true]
function [DET, DET_vpa, Mat] = Mdet(phi, K, M)
%Initialize variables \eta_1, \eta_2 (here n1, n2 for brevity)
syms n2 positive
n1=sym(1);
assumeAlso(n2 ~= n1)
%Relative Rotation R 
R = eye(3) + sin(phi)*K + (1-cos(phi))*(K^2);
%Define \rho
rho = (n2^2 - n1^2)/(n1^2 + n2^2);
%Define Tetragonal Wells 
U1 = diag([n2 n1 n1]);
U3 = diag([n1 n1 n2]);
% Placeholder  to switch between U1 and U3 
W1_well= {U1,U1,U3,U3,U3,U3};
%Array of values {aj, nj} that solve twinning equation
systems = { ...
[-sqrt(sym(2))*rho*n2; sqrt(sym(2))*rho*n1; 0], sym([1; 1; 0])/sqrt(sym(2)); ...
[-sqrt(sym(2))*rho*n2; -sqrt(sym(2))*rho*n1; 0], sym([1; -1; 0])/sqrt(sym(2)); ...
[-sqrt(sym(2))*rho*n2; 0; sqrt(sym(2))*rho*n1], sym([1; 0; 1])/sqrt(sym(2)); ...
[-sqrt(sym(2))*rho*n2; 0; -sqrt(sym(2))*rho*n1], sym([1; 0; -1])/sqrt(sym(2));  ...
[0; sqrt(sym(2))*rho*n1; -sqrt(sym(2))*rho*n2], sym([0; 1; 1])/sqrt(sym(2)); ...
[0; -sqrt(sym(2))*rho*n1; -sqrt(sym(2))*rho*n2], sym([0; -1; 1])/sqrt(sym(2))  ...
};
%We choose one of the twinning solutions on one side of the boundary
a1 = [-sqrt(sym(2))*rho*n2; sqrt(sym(2))*rho*n1; 0];
n11 = sym([1;1;0])/sqrt(sym(2));
%Initialise the product of determinants to be 1
GlobalDet = sym(1); 
for l = 1:6
ak = systems{l,1};
nk = systems{l,2};

%% COEFFICIENTS
A0 = U1^2 - R' * W1_well{l}^2 * R;
x = U1*a1;
A1 = x*n11' + n11*x';
A2 = (a1'*a1)*(n11*n11');
y = W1_well{l}*ak;
B1 = R'*(y*nk' + nk*y')*R;
B2 = (ak'*ak)*R'*(nk*nk')*R;
%% MATRIX
Mat = [
-trace(A1*M*B1*M*A0*M), -trace(A2*M*B1*M*A0*M), -trace(A1*M*B2*M*A0*M), -trace(A2*M*B2*M*A0*M);
-trace(A2*M*B1*M*A0*M), -trace(A2*M*B1*M*A1*M), -trace(A2*M*B2*M*A0*M), -trace(A2*M*B2*M*A1*M);
-trace(A1*M*B2*M*A0*M), -trace(A2*M*B2*M*A0*M),  trace(A1*M*B2*M*B1*M),  trace(A2*M*B2*M*B1*M);
-trace(A2*M*B2*M*A0*M), -trace(A2*M*B2*M*A1*M),  trace(A2*M*B2*M*B1*M),  0
];
Ri=det(Mat); 
Rri=simplify(Ri);
RiVPA=vpa(Rri, 3); 
pretty(RiVPA);
% Exact rational cancellation 
[N, D] = numden(Ri);        % numerator and denominator
g = gcd(N, D);              % symbolic gcd 
Ri_simpl = simplify(N./g) ./ simplify(D./g);
% degree in each variable separately
degN = polynomialDegree(N, n2);
degD = polynomialDegree(D, n2);
% Multiply the simplified symbolic Ri into the product 
GlobalDet = simplifyFraction(GlobalDet * Ri_simpl);
end
GlobalDet_simpl = simplifyFraction(GlobalDet);
DET=GlobalDet_simpl;
DET_vpa=vpa(GlobalDet_simpl, 3);
\end{Verbatim}
\end{small}

The MATLAB script that follows defines the inputs for the function above and computes the symbolic gcd (greatest common divisor) between the polynomials in \eqref{Poly2}. In particular, this script sets up two test cases and uses the function Mdet to compute symbolically the product of determinants \eqref{Fprod} for each case. It defines two prescribed rotation axes and angles (via skew-symmetric generators $\bK$ and $\bK_2$ and Rodrigues' formula), and two antisymmetric matrices $\bM$ and $\bM_2$ from specified axis vectors. For each case, it calls Mdet to build and simplify a product of $4\times 4$ determinant expressions (one for each of six standard rank‑one twin systems), producing a readable result as well as the full expression of the determinant. Using the later, we then compute the two exact polynomials $P_1$ and $P_2$ in $\eta_2$ given by \eqref{Poly2}. The code clears scalar denominators from two symbolic expressions  so that they become polynomials in the variable $\eta_2$ with no fractional coefficients. It first removes any global rational factor, then ensures every coefficient (as a function of $\eta_2$) is integer/rational by multiplying by a common multiple of their denominators. It further removes any common multiplicative factor yielding primitive polynomials, to finally compute the symbolic gcd  polynomial between  $P_1(\eta_2)$ and $P_2(\eta_2).$ It reports back the polynomial degree as well as whether the resulting gcd is constant or not.

\begin{small}
\begin{Verbatim}[breaklines=true]
syms n2 positive
n1=sym(1);
assumeAlso(n1 ~= n2)

%% ROTATION MATRICES
phi = sym(2*pi/3);
k = sym([1 2 7]);
k = k / norm(k);
k1 = k(1); k2 = k(2); k3 = k(3);
K = [ 0 -k3 k2;
k3 0 -k1;
-k2 k1 0 ];
psi = sym(pi/4);
z = sym([3 2 5]);
z = z / norm(z);
z1 = z(1); z2 = z(2); z3 = z(3);
K2 = [ 0 z3 -z2;
    -z3 0 z1;
    z2 -z1 0 ];
m=sym([1;2;3]/sqrt(14));
M = [ 0 -m(3) m(2);
    m(3) 0 -m(1);
    -m(2) m(1) 0 ];
m2=sym([7;5;1]/sqrt(75));
M2 = [ 0 -m2(3) m2(2);
    m2(3) 0 -m2(1);
    -m2(2) m2(1) 0 ];

%% DETERMINANTS
[Det1, Det1_vpa] = Mdet(phi, K, M);
[Det2, Det2_vpa] = Mdet(psi, K2, M2);
pretty(Det1_vpa);
pretty(Det2_vpa);

%% Exact symbolic polynomials
P1 = (n2^2+n1^2)^52 * Det1 / (n2^2-n1^2)^102;
P2 = (n2^2+n1^2)^52 * Det2 / (n2^2-n1^2)^102;
% ---- Remove scalar denominators ----
[numP1, denP1] = numden(P1);
[numP2, denP2] = numden(P2);
Lden = lcm(denP1, denP2); % common scalar denominator
P1 = expand(numP1 * (Lden / denP1));
P2 = expand(numP2 * (Lden / denP2));
c1 = coeffs(P1, n2, 'All');
c2 = coeffs(P2, n2, 'All');
D1 = sym(ones(1,numel(c1)));
D2 = sym(ones(1,numel(c2)));
for kk = 1:numel(c1), [~, D1(kk)] = numden(c1(kk)); end
for kk = 1:numel(c2), [~, D2(kk)] = numden(c2(kk)); end
Lcoeff = D1(1);
for kk = 2:numel(D1), Lcoeff = lcm(Lcoeff, D1(kk)); end
for kk = 1:numel(D2), Lcoeff = lcm(Lcoeff, D2(kk)); end
P1 = expand(P1 * (Lcoeff));
P2 = expand(P2 * (Lcoeff));

% ---- Divide out common scalar content of coefficients ----
allCoeffs = [coeffs(P1, n2, 'All'), coeffs(P2, n2, 'All')];
scalarG = allCoeffs(1);
for kk = 2:numel(allCoeffs)
    scalarG = gcd(scalarG, allCoeffs(kk));
end
if ~isequal(scalarG, sym(1))
    P1 = simplify(P1 / scalarG);
    P2 = simplify(P2 / scalarG);
    fprintf('Divided out scalar content from P1/P2.\n');
end

%% Exact symbolic GCD in n2 (covers ± roots)
Gsym = gcd(P1, P2, n2);
degG = double(polynomialDegree(Gsym, n2));
degP1 = double(polynomialDegree(P1, n2));
degP2 = double(polynomialDegree(P2, n2));

% ---- Report polynomial degrees, GCD & GCD degree ----
fprintf('Degree:  P1 = %d, P2 = %d\n', degP1, degP2);
fprintf('GCD degree (in n2) = %d\n', degG);
if isequal(Gsym, sym(1)) || (degG == 0 && ~isequal(Gsym, sym(0)))
    fprintf('Exact symbolic gcd is constant; thus no polynomial common root (in n2).GCD is %s\n', char(Gsym));
else
    fprintf('Nontrivial polynomial gcd found (degree %d).\n', degG);
    % display compact gcd
    fprintf('Gcd is: %s\n', char(Gsym));
end
\end{Verbatim}
\end{small}

\section*{Tools and computational resources disclosure} The authors acknowledge a very useful interaction with Gemini 3 concerning the proof of Theorem \ref{compatthm}, specifically the use of the Dixon resultant and the formulation of Lemma \ref{LemmaM}. The symbolic computations were carried out using MATLAB.

\section*{Acknowledgments}
We are grateful to Kaushik Bhattacharya, Michael Peigney, Damian Rosler and Hanu\v{s} Seiner for their interest and useful discussions. The article was mainly written while MG was a Postdoctoral Research Associate in the Department of Mathematics at Heriot-Watt University under the   project {\it Mathematical theory of polycrystalline materials} supported by the EPSRC grant EP/V00204X. JMB was supported by the same EPSRC grant,  by a grant from the Institute
for Advanced Study School of Mathematics, and through his appointment as a Senior Fellow at the Hong Kong Institute for Advanced Study, City University of Hong Kong.
	
 \bibliographystyle{plain}
 
\bibliography{balljourn,ballconfproc,ballmisc,gen3,ballprep}
	 
\end{document}